\documentclass[11pt,letterpaper]{article}

\usepackage[T1]{fontenc}
\usepackage[cp1251]{inputenc}
\usepackage{textcomp}
\usepackage[centertags]{amsmath}
\usepackage{amsfonts}
\usepackage{amsthm}
\usepackage{amssymb}
\usepackage{enumitem}
\usepackage[numbers,sort&compress]{natbib}
\usepackage[hypertex]{hyperref}
\usepackage{calc}

\usepackage{paperinitial}
%
%
%
\paperinitialization{15mm}{15mm}{20mm}{10mm}{2pt}{10pt}

\DeclareMathOperator{\pr}{pr}

\DeclareMathOperator{\im}{Im}
\DeclareMathOperator{\sgn}{sgn}
\DeclareMathOperator{\Sp}{Sp}
\DeclareMathOperator{\Texp}{Texp}
\DeclareMathOperator{\rot}{rot}
\DeclareMathOperator{\arcth}{arth}

\newcommand{\e}{\varepsilon}
\newcommand{\vf}{\varphi}

\newcommand{\al}{\alpha}
\newcommand{\be}{\beta}
\newcommand{\ga}{\gamma}
\newcommand{\Ga}{\Gamma}
\newcommand{\de}{\delta}
\newcommand{\De}{\Delta}

\newcommand{\la}{\lambda}
\newcommand{\La}{\Lambda}
\newcommand{\ups}{\upsilon}

\newcommand{\spx}{\mathbf{x}}
\newcommand{\spy}{\mathbf{y}}
\newcommand{\spp}{\mathbf{p}}
\newcommand{\spk}{\mathbf{k}}
\newcommand{\spe}{\mathbf{e}}

\newcommand{\ha}{\hat{a}}
\newcommand{\had}{\hat{a}^\dag}
\newcommand{\hb}{\hat{b}}
\newcommand{\hbd}{\hat{b}^\dag}
\newcommand{\hZ}{\hat{Z}}
\newcommand{\hps}{\hat{\psi}}
\newcommand{\hpsd}{\hat{\psi}^\dag}
\newcommand{\tP}{\tilde{P}}
\newcommand{\lan}{\langle}
\newcommand{\ran}{\rangle}

\newtheorem{thm}{Theorem}

\begin{document}
\setlength{\unitlength}{1pt}
\allowdisplaybreaks[4]
\frenchspacing

\title{{\Large\textbf{Inclusive probability of particle creation on classical backgrounds}}}

\date{}

\author{P.O. Kazinski\thanks{E-mail: \texttt{kpo@phys.tsu.ru}}\\[0.5em]
{\normalsize Physics Faculty, Tomsk State University, Tomsk 634050, Russia}}

\maketitle

\begin{abstract}

The quantum theories of boson and fermion fields with quadratic nonstationary Hamiltoanians are rigorously constructed. The representation of the algebra of observables is given by the Hamiltonian diagonalization procedure. The sufficient conditions for the existence of unitary dynamics at finite times are formulated and the explicit formula for the matrix elements of the evolution operator is derived. In particular, this gives the well-defined expression for the one-loop effective action. The ultraviolet and infrared divergencies are regularized by the energy cutoff in the Hamiltonian of the theory. The possible infinite particle production is regulated by the corresponding counterdiabatic terms. The explicit formulas for the average number of particles $N_D$ recorded by the detector and for the probability $w(D)$ to record a particle by the detector are derived. It is proved that these quantities allow for no-regularization limit and, in this limit, $N_D$ is finite and $w(D)\in[0,1)$. As an example, the theory of a neutral boson field with stationary quadratic part of the Hamiltonian and nonstationary source is considered. The average number of particles produced by this source from the vacuum during a finite time evolution and the inclusive probability to record a created particle are obtained. The infrared and ultraviolet asymptotics of the average density of created particles are derived. As a particular case, quantum electrodynamics with a classical current is considered. The ultraviolet and infrared asymptotics of the average number of photons are derived. The asymptotics of the average number of photons produced by the adiabatically driven current is found.


\end{abstract}

\section{Introduction}

The quantum field theories (QFTs) with quadratic Hamiltonians are the classical subject for investigation in theoretical physics. These models represent the base for perturbation theory and, per se, describe a wide range of phenomena. It is not surprising that there is a huge literature devoted to this subject (see, e.g., the books \cite{Friedrichs,BerezMSQ1.4,BirDav.11,GrMuRaf,FullingAQFT,GFSh.3,BuchOdinShap.11,GriMaMos.11,MaslShved,DeWGAQFT.11,CalzHu,ParkTom}). Although it appears that the problem of description of QFTs with quadratic Hamiltonians having linear equations of motion is somewhat trivial, the presence of infinite degrees of freedom considerably complicates the issue. There are different methods to tackle this problem and perhaps the most straightforward Hamiltonian approach is not the commonly used one. Our aim is to fill this gap. We shall obtain the solution of this problem using the Hamiltonian formalism, i.e., we shall find the matrix elements of the finite time evolution operator generated by the nonstationary Hamiltonian of a general form for both bosons and fermions imposing rather mild assumption on the parameters of the Hamiltonian. In other words, we shall obtain the solution of the Cauchy problem for the quantum-field Schr\"{o}dinger equation. To this end, we shall modify the theory in the ultraviolet and/or infrared regions unaccessible for experiments in such a way that the finite time unitary evolution exists. Of course, when one discusses the existence of a certain QFT, not only the Hamiltonian and the algebra of observables should be specified but also their representation must be given. Different representations of the same algebra may be unitary inequivalent. We adopt in this paper the representation of the algebra of observables in the Fock space that is specified by the Hamiltonian diagonalization procedure (see, e.g., \cite{Bogol47,Friedrichs}). The physical arguments in favor of this representations will be presented below.

The fact that we consider the evolution of QFT with nonstationary Hamiltonian during a finite interval of time plays a crucial role. Nonstationarity gives rise to new physical phenomena and mathematical issues that are absent in the stationary case. The most prominent problem is that, in many cases, the background nonstationary fields complying with all the physically reasonable requirements such as an infinite smoothness, a compact support or a rapid decrease at spatial infinity, and a finiteness of the spatial volume of the system studied lead to the infinite particle production and so to the nonunitary dynamics\footnote{Throughout this paper, we call the evolution unitary if the evolution operator at any finite time is a unitary map in the separable Hilbert space of states.} \cite{Friedrichs,MaslShved,Ruijsen,NencSchr,Scharf79,Junker,Szpak15,DeckDMSch,uqem}. The presence of this problem depends on the choice of the representation of the algebra of observables and the various approaches were elaborated to overcome it \cite{CalzHu,Scharf79,Junker,uqem,ParkTom,GriMaMos.11,FullingAQFT,Friedrichs,BirDav.11,GriPav,Parkrev,KayWald,GrMa,Park1,Shirok,Shirok0,Imamur,AHHLH}. As we have already mentioned, we use the Hamiltonian diagonalization procedure as the means to specify the representation of the algebra of observables and regularize the Hamiltonian in the region of particle energies where the theory is, in fact, unknown. The possible infinite particle production in this energy domain will be compensated by the corresponding counterdiabatic terms \cite{GRKTMM19} ensuing the adiabatic evolution for these modes. As a result, the unitary evolution operator is obtained that provides a solid basis for nonperturbative calculations. In no-regularization limit, the initial ill-defined expression is recovered. Notice that we will not consider in the present paper the situation when the unitarity of a quadratic QFT is violated by the appearance of instabilities and the corresponding phase transitions (see, e.g., \cite{GrMuRaf,MigdB}). Though it is not difficult to generalize the formalism to this case.

As is known \cite{DeWGAQFT.11,Schwing.det,Schwing.10}, the vacuum-to-vacuum amplitude of the quadratic part of some QFT on a given background with zero sources defines the one-loop correction to the effective action of the complete QFT \cite{BirDav.11,GrMuRaf,FullingAQFT,GFSh.3,BuchOdinShap.11,GriMaMos.11,DeWGAQFT.11,CalzHu,ParkTom,WeinB2} that ``sums'' an infinite number of the one-loop Feynman diagrams. Fairly often, it turns out that thereby obtained effective action is nonanalytic in the coupling constant near zero and the series of the standard perturbation theory is only an asymptotic expansion of the nonperturbative expression \cite{HeisEul,Schwing.10}. We shall also obtain the well-defined nonperturbative expression for the one-loop effective action before the removal of regularization.

Apart from the construction of unitary evolution, we shall investigate a class of observables that allow for removal of regularization even in the case when the dynamics are not unitary in this limit. Namely, we shall consider the average number of created particles recorded by the detector and the probability to record a particle by the detector. We shall find the explicit expressions for these quantities and show that for any reasonable detector they are well-defined in no-regularization limit. Here the particles are defined by the Hamiltonian diagonalization procedure. This definition of the representation of the algebra of observables in the Fock space is local in time, i.e., the representation is determined by the configuration of background fields at the present instant of time. In the stationary case, this is the standard definition of particles confirmed by numerous experiments. This is the standard definition in condensed matter physics. Besides, it is clear that such a representation must make sense in the case of background fields slowly varying in time. This fact is confirmed experimentally for many systems, too. Another reason is the adiabaticity argument. It says that if the background fields vary slowly and the system starts its evolution from the vacuum state, then the present state of the system will be close to the vacuum state of the instantaneous Hamiltonian \cite{LandLifshQM.11,Kato,JoyeThes,Nenciu,AveElg,ElgHag}, i.e., to the Fock vacuum of creation-annihilation operators that diagonalize the Hamiltonian at a given instant of time. Hence, it is reasonable to suppose that such a definition of the representation of the algebra of observables is valid, at least, in the case of the background fields smoothly depending on time. Moreover, it was proved in \cite{Scharf79} that, in the inertial reference frame in Minkowski spacetime, this representation of observables leads to the unitary evolution of a quantum Dirac field in the external classical electromagnetic field. The restrictions on the electromagnetic fields imposed in \cite{Scharf79} are fulfilled for any physically realizable system. Of course, once the unitary theory is constructed, one may choose any other unitary equivalent representation of the observables.

It should be stressed that we define particles as a mere convenient means to specify the state of a system of quantum fields in the Hilbert space. Since we consider the finite time evolution, such particles are often called virtual in the literature (see, e.g., \cite{Heitl,GinzbThPhAstr}) to distinguish them from the particles in the $in$- and $out$-states at $t=\mp\infty$. However, we will not use this nomenclature as, in the real experiments lasting a finite interval of time, any particle is virtual, albeit with small virtuality, according to this terminology. Due to the energy-time uncertainty relation, the virtual particles created from the vacuum can possess very large energies at a given instant of time but most of them quickly annihilate during the evolution.

Some comments about the energy cutoff regularization are also in order. Of course, the energy cutoff regularization procedure is ambiguous as any other regularization. This ambiguity just reflects the fact that we do not know physics for very small and very large energies. Nevertheless, this regularization possesses certain merits:
\begin{enumerate}[label=\alph*)]
    \item It is the regularization, i.e., it does not change physics at energies observable in experiments;
    \item It is nonperturbative, i.e., the initial Hamiltonian of the system is modified rather than the separate terms of the series of perturbation theory;
    \item It is gauge-invariant when properly formulated (see the example in Sec. \ref{QED_w_Clas_Sour});
    \item It preserves unitarity of QFT.
\end{enumerate}
Moreover, in Sec. \ref{Inclus_Prob}, we shall construct the observables that are well-defined in no-regularization limit. They do not contain the regularization parameter and are independent of the regulator.

The paper is organized as follows. In Sec. \ref{Gener_Form}, the general formalism is developed for description of quadratic QFTs. The regularization is defined by the energy cutoff of the generator of evolution in the time dependent basis diagonalizing the Hamiltonian of the system. This, in particular, introduces the counterdiabatic terms into the initial Hamiltonian. The evolution generated by the Hamiltonian is developed not in one Fock space but in the Hilbert bundle of such spaces with the base being the manifold of background field configurations. This bundle is equipped with the trivial connection and parallel transport that allow one to bring the evolution into one Fock space. That evolution is unitary for the regularized dynamics. The explicit formulas for the matrix elements of the evolution operator are derived in this section. In particular, the well-defined expression for the one-loop effective action is deduced. Sec. \ref{Gener_Form_Bos} is devoted to bosons, whereas Sec. \ref{Gener_Form_Ferm} is for fermions. As for bosons, many formulas appearing in Sec. \ref{Gener_Form_Bos} were already given in \cite{uqem}. In the present paper, we shall generalize them to the case of theories with sources. Notice that the energy cutoff for gauge theories should be done in some gauge that uniquely separates the physical degrees of freedom as, for example, the Coulomb gauge in quantum electrodynamics (QED). The other method to introduce the energy cutoff in a gauge invariant way is to use the Hamiltonian Becchi-Rouet-Stora-Tyutin quantization (see, e.g., \cite{HeTe}), but we will not develop this approach here. The example of a gauge theory with the energy cutoff is considered in Sec. \ref{QED_w_Clas_Sour}. In Sec. \ref{Inclus_Prob}, the detection of created particles is considered. The general formulas for the average number of particles $N_D$ recorded by the detector and for the inclusive probability $w(D)$ to record a created particle are derived. It is proved that under very mild assumptions this quantities allow for removal of regularization. In this limit, $N_D$ is finite and $w(D)\in[0,1)$. In Sec. \ref{Examples}, the simple examples of the developed formalism are investigated. In Sec. \ref{Bos_w_Clas_Sour}, the quadratic boson QFT with stationary quadratic part of the Hamiltonian and nonstationary source is studied. The general formulas obtained in Secs. \ref{Gener_Form}, \ref{Inclus_Prob} are particularized for this case. In particular, we shall find the infrared and ultraviolet asymptotics of the average number of created particles. In Sec. \ref{QED_w_Clas_Sour}, QED with a classical current is considered. This is the classical example model investigated in many papers and books \cite{Shirok,BlNord37,Glaub51,SchwinS,JauRohr,BaiKatFad,GinzbThPhAstr,GavrGit90}. A special attention is paid to the influence of a finite duration of the evolution to the observed quantities such as the inclusive probabilities and the average number of created photons. We shall find the infrared and ultraviolet asymptotics of the average number of created photons and obtain the general formula for the number of photons produced by the adiabatically driven current. In Conclusion section, the main results of the paper are summarized. In Appendix \ref{Symb_Evol_Oper_App}, the general formula for the matrix elements of the evolution operator of a quadratic QFT is derived. In comparison with \cite{KazMil1}, we shall present the detailed proof of the existence of the unitary evolution and shall generalize formulas to the fermionic case. As has been already mentioned, there is an overwhelming number of papers and books devoted to both general theory and applications considered in the present paper (see some of them \cite{Nikish70,Ritus.2,RuffVerXue,FGKS11,GelTan}). Therefore, the reference list is utterly incomplete. Only the main papers and books that are immediately related to the subject matter and are known to the author are cited.

\section{General formulas}\label{Gener_Form}
\subsection{Bosons}\label{Gener_Form_Bos}

In this section, we shall construct the quantum theory of a boson field with nonstationary Hamiltonian of a general form. The representation of the algebra of observables in the Fock space will be realized with the aid of diagonalization of the Hamiltonian. In fact, we shall generalize the results of \cite{uqem} to the case of nonstationary quadratic Hamiltonians containing a linear part with respect to the field operators. In the simplest case of a scalar field interacting with nonstationary classical current in the Minkowski spacetime in the inertial reference frame, such a procedure was presented, for example, in \cite{Friedrichs,Shirok0}. The adaptation of the general notation presented below to the case of an electromagnetic field is given in Sec. \ref{QED_w_Clas_Sour} (for a scalar field see \cite{uqem,KazMil1}).

The nonstationary quadratic Hamiltonian of a neutral boson field of a general form in the Schr\"{o}din\-ger representation is written as
\begin{equation}\label{Hamilt_quadr}
\begin{gathered}
    \hat{H}(t)=\frac12 \hat{Z}^AH_{AB}(t)\hat{Z}^B+K_A(t)\hat{Z}^A,\\
                               \hat{Z}^A=\left[
                                           \begin{array}{c}
                                             \hat{\phi}^p(\spx) \\
                                             \hat{\pi}_p(\spx) \\
                                           \end{array}
                                         \right],\qquad
    [\hat{Z}^A,\hat{Z}^B]=iJ^{AB}=\left[
                        \begin{array}{cc}
                          0 & i \\
                          -i & 0 \\
                        \end{array}
                      \right]\de^p_q\de(\spx-\spy),
\end{gathered}
\end{equation}
where $p$, $q$ are the indices numerating the field components and
\begin{equation}\label{hermiticy}
    \bar{H}_{AB}=H_{BA}=H_{AB}.
\end{equation}
Henceforth the bar over the expression means complex conjugation, the condensed notation is used, and the Einstein summation convention is implied. In particular, the index $A$ includes both discrete, $p$, and continuous, $\spx$, variables and in formula \eqref{Hamilt_quadr} summation and integration over repeated indices are understood. Notice that, in the models we are interested in, the dependence on $t$ enters into \eqref{Hamilt_quadr} only through the background fields and currents $\Phi^\mu(t)$ taken at the different times $t$. The  relations \eqref{hermiticy} follow from the requirement that $\hat{H}$ is self-adjoint. Usually $H_{AB}$ defines a positive-definite quadratic form. We will not demand this property and only will assume that $H_{AB}$ is nondegenerate and
\begin{equation}
    H^{AB}:=(H^{-1})^{AB}.
\end{equation}
Then, introducing the Schr\"{o}dinger field operators explicitly depending on time,
\begin{equation}\label{deZ}
    \de \hat{Z}_t^A:=\hat{Z}^A+H^{AB}(t)K_B(t),
\end{equation}
we have
\begin{equation}\label{Hamilt_quadr1}
    \hat{H}(t)=\frac12 \de\hat{Z}^A_t H_{AB}(t)\de\hat{Z}^B_t-\frac{1}{2}K_A(t)H^{AB}(t)K_B(t).
\end{equation}
As a result, the problem of construction of QFT with the Hamiltonian \eqref{Hamilt_quadr} is reduced to similar problem considered in \cite{uqem} but with the field operators explicitly depending on time in the Schr\"{o}dinger representation.

As in \cite{uqem}, we introduce the mode functions as the solution of the spectral problem of the self-adjoint operator $-iJ_{AB}$ with respect to the quadratic form $H_{AB}(t)$:
\begin{equation}\label{eigen_prblm}
    -iJ_{AB}\ups^B_\al(t)=\omega_\al^{-1}(t) H_{AB}(t)\ups^B_\al(t),\qquad \omega_\al^{-1}(t) \bar{\ups}^A_\al(t) H_{AB}(t)\ups^B_\al(t)>0,
\end{equation}
where
\begin{equation}
    J_{AB}=\left[\begin{array}{cc}
                          0 & -1 \\
                          1 & 0 \\
                        \end{array}
                      \right]\de^p_q\de(\spx-\spy),
\end{equation}
and $\ups_\al^A$ satisfy the boundary conditions following from the problem statement. The second condition in \eqref{eigen_prblm} specifies the splitting of modes into the positive- and negative-frequency ones. For simplicity, we assume for a while that the spectrum  $\omega_\al^{-1}$ is real-valued and discrete. Besides, for any $\La>0$ there exists a finite number of eigenvalues such that $\omega_\al<\La$ and $\omega_\al\neq0$. As a rule, these properties are satisfied for systems confined to a box of a finite volume $V$ (the volume is defined with respect to some positive definite metric $\de_{ij}$). Below, in discussing the inclusive probabilities, we will relax these requirements and will consider the limit $V\rightarrow\infty$. Notice that $\omega_\al(t)$ can be negative when $H_{AB}(t)$ is not positive definite.

For example, consider a massive scalar field on the background with the metric $g_{\mu\nu}$. Its action functional without sources is
\begin{equation}
    S[\phi]=\int d^Dx\mathcal{L}=\frac12\int d^Dx\sqrt{|g|}(\partial_\mu\phi g^{\mu\nu}\partial_\nu\phi-m^2\phi^2).
\end{equation}
The canonical momentum reads as
\begin{equation}\label{Legandre_trans}
    \pi:=\sqrt{|g|}(g^{00}\dot{\phi}+g^{0i}\partial_i\phi),
\end{equation}
where $\dot{\phi}=\partial_t\phi$. The Hamiltonian density is given by
\begin{equation}\label{Hamilt_scal}
\begin{split}
    \mathcal{H}=\pi\dot{\phi}-\mathcal{L}=\,&\frac12\Big[\frac{\pi^2}{g^{00}\sqrt{|g|}} -2\frac{\pi g^{0i}\partial_i\phi}{g^{00}} -\sqrt{|g|}(\tilde{g}^{ij}\partial_i\phi\partial_j\phi-m^2\phi^2) \Big]\\
    =\,&\frac12\Big[\frac{(\pi-\sqrt{|g|}g^{0i}\partial_i\phi)^2}{g^{00}\sqrt{|g|}}-\sqrt{|g|}(g^{ij}\partial_i\phi\partial_j\phi-m^2\phi^2 )\Big],
\end{split}
\end{equation}
where $\tilde{g}^{ij}=g^{ij}-g^{0i}g^{0j}/g^{00}=(g_{ij})^{-1}$. The quadratic form $H_{AB}$ becomes
\begin{equation}
    H_{AB}=\left[
             \begin{array}{cc}
               \partial_i\sqrt{|g|}\tilde{g}^{ij}\partial_j+\sqrt{|g|}m^2 & \partial_i\frac{g^{i0}}{g^{00}} \\
               -\frac{g^{0i}}{g^{00}}\partial_i & \frac{1}{g^{00}\sqrt{|g|}} \\
             \end{array}
           \right]\de(\spx-\spy).
\end{equation}
The eigenvalue problem \eqref{eigen_prblm} takes the form
\begin{equation}\label{eigen_prblm_scal}
    \left[
       \begin{array}{c}
         \partial_i(\sqrt{|g|}\tilde{g}^{ij}\partial_ju_\al)+\sqrt{|g|}m^2u_\al+\partial_i(\frac{g^{i0}}{g^{00}}w_\al) \\
         -\frac{g^{0i}}{g^{00}}\partial_iu_\al+\frac{w_\al}{g^{00}\sqrt{|g|}} \\
       \end{array}
     \right]=i\omega_\al\left[
                          \begin{array}{c}
                            w_\al \\
                            -u_\al \\
                          \end{array}
                        \right],\qquad\ups^A_\al(t)=\left[
                                                   \begin{array}{c}
                                                     u_\al(t) \\
                                                     w_\al(t) \\
                                                   \end{array}
                                                 \right].
\end{equation}
Combining these expressions, we come to the equations
\begin{equation}\label{KG_stat}
\begin{split}
    \Big[(\hat{p}_i+\omega_\al g_i)\sqrt{|g|}g^{ij}(\hat{p}_j+\omega_\al g_j)+\sqrt{|g|}\big(\frac{\omega^2_\al}{g_{00}}-m^2\big)\Big]&u_\al=0,\\
    w_\al=-i\sqrt{|g|}\Big[g_i g^{ij}(\hat{p}_j+\omega_\al g_j)+\frac{\omega_\al}{g_{00}}\Big]&u_\al,
\end{split}
\end{equation}
where $\hat{p}_i=-i\partial_i$. In the stationary case, when the metric does not depend on time, the first equation in \eqref{KG_stat} coincides with the Klein-Gordon equation Fourier transformed with respect to time. In this case, the first equation in \eqref{KG_stat} reads as
\begin{equation}
    \sqrt{|g|}e^{i\omega t}(\nabla_\mu\nabla^\mu+m^2)e^{-i\omega t}u_\al(\spx)=0,
\end{equation}
where $\nabla_\mu$ is a covariant derivative constructed with the aid of the space-time metric $g_{\mu\nu}$. For further examples, see Sec. \ref{QED_w_Clas_Sour}.

Now we revert to a general theory. Suppose that the following orthonormality and completeness relations are satisfied:
\begin{equation}\label{orth_complt}
    \{\ups_\al,\ups_\be\}=\{\bar{\ups}_\al,\bar{\ups}_\be\}=0,\quad\{\ups_\al,\bar{\ups}_\be\}=-i\de_{\al\be},\qquad iJ^{AB}=\sum_\al\ups_\al^{[A}\bar{\ups}_\al^{B]},
\end{equation}
where $\{\ups,w\}:=J_{AB}\ups^A w^B$. The square brackets mean antisymmetrization without the factor $1/2$. The normalization of the wave functions is chosen to be
\begin{equation}
    \bar{\ups}^A_\al H_{AB}\ups_\al^B=\ups^A_\al H_{AB}\bar{\ups}_\al^B=\omega_\al.
\end{equation}
The vectors ($\ups_\al$,$\bar{\ups}_\al$) constitute a symplectic basis. Notice that
\begin{equation}
    H_{AB}=-\sum_\al\omega_\al J_{AA'} \ups_\al^{(A'}\bar{\ups}_\al^{B')}J_{B'B},\qquad H^{AB}=\sum_\al\omega_\al^{-1}\ups_\al^{(A}\bar{\ups}_\al^{B)},
\end{equation}
where the round brackets mean symmetrization without the factor $1/2$. If $H_{AB}(t)$ is positive definite, the background fields are smooth enough, and $V$ is finite, then the above properties of the spectrum and the mode functions are fulfilled for the Hamiltonians \eqref{Hamilt_quadr} describing relativistic field theories and $\omega_\al(t)>0$ (see, e.g., \cite{Shubin}). Henceforth, we just suppose that these properties hold.

Introduce the complete set of creation-annihilation operators,
\begin{equation}\label{creaannh_oper}
    \hat{a}_\al(t):=\{\bar{\ups}_\al(t),\de\hat{Z}_t\},\qquad \hat{a}^\dag_\al(t):=\{\ups_\al(t),\de\hat{Z}_t\},
\end{equation}
specifying the representation of the field operators in the Fock space $F_t$,
\begin{equation}\label{ZinAAd}
    \de\hat{Z}^A_t=-i\sum_\al[\ups_\al^A(t)\hat{a}_\al(t)-\bar{\ups}_\al^A(t)\hat{a}^\dag_\al(t)].
\end{equation}
So long as the dependence on $t$ enters into \eqref{Hamilt_quadr} only through the background fields $\Phi(t)$, the representation of the algebra of observables in the Fock space $F_t$ is determined by the values of the background fields at the present instant of time, i.e., we have the representation in $F_{\Phi(t)}$. For brevity, in what follows we denote $F_t\equiv F_{\Phi(t)}$, $a_\al(t)\equiv a_\al(\Phi(t))$, $\ups_\al(t)\equiv\ups_\al(\Phi(t))$, etc.

In particular, substituting \eqref{ZinAAd} into \eqref{Hamilt_quadr1}, we obtain
\begin{equation}\label{Hamilton_diagonalized}
    \hat{H}=\sum_\al\Big\{\frac12\omega_\al(t)[\hat{a}^\dag_\al(t)\hat{a}_\al(t)+\hat{a}_\al(t) \hat{a}^\dag_\al(t)]-\omega_\al^{-1}(t) |\ups^A_\al(t) K_A(t)|^2\Big\}.
\end{equation}
For the Hamiltonian \eqref{Hamilton_diagonalized} to be defined in $F_t$, we introduce the ultraviolet regularization by means of the projector
\begin{equation}\label{cutoff_proj}
    P^\La_{\al\be}(t)=\theta(\La-\omega_\al(t))\de_{\al\be},
\end{equation}
which acts in the one-particle Hilbert space. When the massless theories are considered at $V\rightarrow\infty$, the regularizing projection may also include the infrared cutoff characterized by the parameter $\la$. Notice that the sharp cutoff regularization can be substituted for the smooth cutoff one, i.e.,
\begin{equation}
    P^\La_{\al\be}(t)\rightarrow f(\La-\omega_\al(t))\de_{\al\be},
\end{equation}
with some smooth function $f(\La-x)$ tending sufficiently fast to zero in the energy domain excluded by the projector $P_\La$. Hereinafter, for simplicity, we suppose that
\begin{equation}
    P_\La^2=P_\La.
\end{equation}
The regularized Hamiltonian takes the form
\begin{equation}\label{Hamilt_reg}
\begin{split}
    \hat{H}_\La(\had(t),\ha(t),t)&=\sum_\al P^\La_{\al\al}(t)\Big\{\frac12\omega_\al(t)[\hat{a}^\dag_\al(t)\hat{a}_\al(t)+\hat{a}_\al(t) \hat{a}^\dag_\al(t)]-\omega_\al^{-1}(t) |\ups^A_\al(t) K_A(t)|^2\Big\}=\\
    &=\sum_\al P^\La_{\al\al}(t)\omega_\al(t)\hat{a}^\dag_\al(t)\hat{a}_\al(t)+\sum_\al P^\La_{\al\al}(t)\Big\{\frac12\omega_\al(t)-\omega_\al^{-1}(t) |\ups^A_\al(t) K_A(t)|^2\Big\},
\end{split}
\end{equation}
Notice that, in a general case, this regularization is insufficient for the dynamics generated by \eqref{Hamilt_reg} to be unitary \cite{GriPav,Parkrev,Park1,uqem,Junker,ParkTom,CalzHu,FullingAQFT,BirDav.11}. The Hamiltonian defining a unitary evolution will be given below.

Supposing that the creation-annihilation operators \eqref{creaannh_oper} specify the representation of the same field operators $\hZ$ (the generators of the algebra of observables) in the Fock spaces $F_t$,
\begin{equation}
\begin{split}
    \hat{Z}^A&=-i\sum_\al[\ups_\al^A(t)\hat{a}_\al(t)-\bar{\ups}_\al^A(t)\hat{a}^\dag_\al(t)]-H^{AB}(t)K_B(t)=\\
    &=-i\sum_\al[\ups_\al^A(t_{in})\hat{a}_\al(t_{in})-\bar{\ups}_\al^A(t_{in})\hat{a}^\dag_\al(t_{in})]-H^{AB}(t_{in})K_B(t_{in}),
\end{split}
\end{equation}
we find the relation between the creation-annihilation operators at different instants of time in the form of a linear canonical transform:
\begin{equation}\label{Bogolyub_trn}
    \left[
      \begin{array}{c}
        \hat{a}(t) \\
        \hat{a}^\dag(t) \\
      \end{array}
    \right]=
    \left[
      \begin{array}{cc}
        F(t,t_{in}) & G(t,t_{in}) \\
        \bar{G}(t,t_{in}) & \bar{F}(t,t_{in}) \\
      \end{array}
    \right]
    \left[
      \begin{array}{c}
        \hat{a}(t_{in}) \\
        \hat{a}^\dag(t_{in}) \\
      \end{array}
    \right]+
    \left[
      \begin{array}{c}
        h(t,t_{in}) \\
        \bar{h}(t,t_{in}) \\
      \end{array}
    \right],
\end{equation}
where
\begin{equation}\label{FandG}
\begin{gathered}
    F_{\al\be}=-i\{\bar{\ups}_\al(t),\ups_\be(t_{in})\},\qquad G_{\al\be}=i\{\bar{\ups}_\al(t),\bar{\ups}_\be(t_{in})\},\\
    h_\al(t,t_{in})=\{\bar{\ups}_\al(t),H^{-1}(t)K(t)-H^{-1}(t_{in})K(t_{in})\}.
\end{gathered}
\end{equation}
Henceforth, to shorten formulas, we use the matrix notation. For example,
\begin{equation}\label{matrix_nottn}
    a\ha\equiv a_\al\ha_\al,\qquad \had C\ha\equiv \had_\al C_{\al\be}\ha_\be,\qquad\text{etc.}
\end{equation}
The creation-annihilation operators related by the transform \eqref{Bogolyub_trn} can be realized in one Fock space if and only if  $G_{\al\be}$ is Hilbert-Schmidt (HS) and $h_\al$ is square-integrable (see \cite{Shale,BerezMSQ1.4,Friedrichs} and Appendix \ref{Symb_Evol_Oper_App}), i.e.,
\begin{equation}\label{unit_criter}
    \Sp G^\dag G<\infty,\qquad \bar{h} h<\infty.
\end{equation}
As a rule, these conditions are violated in relativistic QFTs (see, e.g.,  \cite{Friedrichs,MaslShved,Ruijsen,NencSchr,Scharf79,Junker,Szpak15,DeckDMSch,uqem,Avramid19}). Therefore, it is necessary to assume that $\hat{a}_\al(t)$ act in the different Fock spaces $F_t$ labeled by $t$.

Let us consider the Hilbert bundle of Fock spaces $F_\Phi$, the base being the manifold of background fields $\Phi$. The Fock spaces are constructed by means of the Hamiltonian diagonalization procedure as it was described above. In this bundle, there exists a unitary operator of parallel transport such that \cite{uqem}
\begin{equation}\label{parall_transp_map}
    \hat{W}_{\Phi(t),\Phi(t_{in})}:F_{\Phi(t_{in})}\rightarrow F_{\Phi(t)},
\end{equation}
where
\begin{equation}\label{parall_transp_def}
    \hat{a}_\al(t)=\hat{W}_{t,t_{in}}\hat{a}_\al(t_{in})\hat{W}_{t_{in},t},\qquad |vac,t\ran:=\hat{W}_{t,t_{in}}|vac,t_{in}\ran,
\end{equation}
and $|vac,t\ran$ is the vacuum vector in $F_t$. This operator obeys the equation
\begin{equation}\label{parall_transp}
    -i\partial_t\hat{W}_{t,t_{in}}=\hat{\Ga}(\had(t),\ha(t),t)\hat{W}_{t,t_{in}},
\end{equation}
where
\begin{equation}
\begin{split}
    \hat{\Ga}(\had(t),\ha(t),t):=\,&\frac12\big[2\hat{a}^\dag(t)\{\dot{\bar{\ups}}(t),\ups(t)\} \hat{a}(t)-\hat{a}(t)\{\dot{\ups}(t),\ups(t)\}\hat{a}(t)-\hat{a}^\dag(t)\{\dot{\bar{\ups}}(t),\bar{\ups}(t)\}\hat{a}^\dag(t)\big]+\\ &+i\had(t)\{\bar{\ups}(t),\partial_t(H^{-1}(t)K(t))\} -i\ha(t)\{\ups(t),\partial_t(H^{-1}(t)K(t))\},
\end{split}
\end{equation}
and $\hat{W}_{t_{in},t_{in}}=1$. The parallel transport defines the trivial self-adjoint connection
\begin{equation}\label{connection}
    \hat{\Ga}_\mu:=\hat{a}^\dag\Big\{\frac{\de\bar{\ups}}{\de\Phi^\mu},\ups\Big\} \hat{a} -\frac12\hat{a}\Big\{\frac{\de\ups}{\de\Phi^\mu},\ups\Big\}\hat{a}
    -\frac12\hat{a}^\dag \Big\{\frac{\de\bar{\ups}}{\de\Phi^\mu},\bar{\ups}\Big\}\hat{a}^\dag +i\had\Big\{\bar{\ups},\frac{\de(H^{-1}K)}{\de\Phi^\mu}\Big\} -i\ha\Big\{\ups,\frac{\de(H^{-1}K)}{\de\Phi^\mu}\Big\}
\end{equation}
in the Hilbert bundle.

The evolution operator $\hat{U}^\La_{t,t_{in}}$ generated by \eqref{Hamilt_reg} maps the Fock space $F_{t_{in}}$ into $F_{t}$. The parallel transport operator allows one to bring the evolution into one Fock space where the measurements are performed (the scalar products are calculated). The physically measured amplitudes are the matrix elements of the operator
\begin{equation}\label{Smatr_energy_def}
    \hat{S}^\La_{t,t_{in}}:=\hat{W}_{t_{in},t}\hat{U}^\La_{t,t_{in}}
\end{equation}
in the Fock space $F_{t_{in}}$. This operator satisfies the equation
\begin{equation}\label{Smatr_energy_1}
    i\partial_t\hat{S}^\La_{t,t_{in}}=\big[\hat{H}_\La(\had(t_{in}),\ha(t_{in}),t) +\hat{\Ga}(\had(t_{in}),\ha(t_{in}),t)\big]\hat{S}^\La_{t,t_{in}}
\end{equation}
with the initial condition $\hat{S}^\La_{t_{in},t_{in}}=1$.

It turns out that for many physical systems as, for example, the relativistic QFT on nonstationary gravitational background of a general form \cite{uqem}, for the quantum fields on cosmological backgrounds \cite{Parkrev,Park1,Junker,ParkTom,CalzHu,FullingAQFT,BirDav.11,AHHLH}, or for the quantum fields interacting with singular classical sources \cite{vHove52,Friedrichs,BlNord37,BerezMSQ1.4}, the operator $\hat{S}^\La_{t,t_{in}}$ is not unitary due to creation of an infinite number of particles during the evolution. Besides, unitarity of $\hat{S}^\La_{t,t_{in}}$ can be violated in the limit of $V\rightarrow\infty$. Therefore, we define the regularized evolution operator as the solution of the equation \cite{uqem}
\begin{equation}\label{Smatr_energy_2}
    i\partial_t\hat{S}'^\La_{t,t_{in}}=\big[\hat{H}_\La(\had(t_{in}),\ha(t_{in}),t) +\hat{\Ga}_\La(\had(t_{in}),\ha(t_{in}),t)\big]\hat{S}'^\La_{t,t_{in}},
\end{equation}
with the initial condition $\hat{S}'^\La_{t_{in},t_{in}}=1$, where $\hat{\Ga}_\La$ is obtained from $\hat{\Ga}$ by the replacement
\begin{equation}
    \ha_\al(t_{in})\rightarrow P^\La_{\al\be}(t)\ha_\be(t_{in}),
\end{equation}
i.e.,
\begin{multline}
    \hat{\Ga}_\La(\had(t_{in}),\ha(t_{in}),t):=\frac12\big[2\hat{a}^\dag(t_{in})P_\La\{\dot{\bar{\ups}},\ups\}P_\La \hat{a}(t_{in})-\hat{a}(t_{in})P_\La\{\dot{\ups},\ups\}P_\La\hat{a}(t_{in})-\\ -\hat{a}^\dag(t_{in})P_\La\{\dot{\bar{\ups}},\bar{\ups}\}P_\La\hat{a}^\dag(t_{in})\big]
    +i\had(t_{in})P_\La\{\bar{\ups},\partial_t(H^{-1}K)\} -i\ha(t_{in})P_\La\{\ups,\partial_t(H^{-1}K)\},
\end{multline}
It is clear that the operator $\hat{S}'^\La_{t,t_{in}}$ is unitary under the above assumptions about the spectrum and the projector $P_\La$. The passage from \eqref{Smatr_energy_1} to \eqref{Smatr_energy_2} corresponds to addition of the counterdiabatic terms \cite{GRKTMM19} to the initial Hamiltonian. This procedure gives rise to adiabatic evolution for the modes distinguished by the projector $\tilde{P}_\La(t):=1-P_\La(t)$. The dynamics of the rest modes remain unchanged. Of course, in general, the projector $\tilde{P}_\La(t)$ picks out different modes at different times. The explicit expression for the matrix elements of the operator $\hat{S}'^\La_{t,t_{in}}$ is given in \eqref{evol_symb}. The existence conditions of the theorem \ref{evol_symb_thm} are satisfied as $P_\La$ is a finite rank projector. It should also be noted that such a modification of the Hamiltonian is not related to introduction of the so-called adiabatic vacuum \cite{Parkrev,KayWald,Park1,Junker,ParkTom,CalzHu,FullingAQFT,BirDav.11}.

It is not difficult to find the explicit expression for the Hamiltonian generating the evolution \eqref{Smatr_energy_2} and the corresponding counterdiabatic terms. By definition, the evolution operator $\hat{U}'^\La_{t,t_{in}}:F_{t_{in}}\rightarrow F_t$ is
\begin{equation}\label{evolut_reg}
    \hat{U}'^\La_{t,t_{in}}:=\hat{W}_{t,t_{in}} \hat{S}'^\La_{t,t_{in}}.
\end{equation}
It satisfies the Schr\"{o}dinger equation
\begin{equation}
    i\partial_t\hat{U}'^\La_{t,t_{in}}=\hat{H}'_\La(t)\hat{U}'^\La_{t,t_{in}}
\end{equation}
with the generator
\begin{equation}\label{Hamilt_reg1}
    \hat{H}'_\La(\had(t),\ha(t),t)=\hat{H}_\La(\had(t),\ha(t),t)+\hat{\Ga}_\La(\had(t),\ha(t),t)-\hat{\Ga}(\had(t),\ha(t),t),
\end{equation}
where
\begin{equation}\label{contrdiab}
\begin{split}
    \hat{\Ga}_\La-\hat{\Ga}=\,&\hat{a}^\dag\big(P_\La\{\dot{\bar{\ups}},\ups\}P_\La-\{\dot{\bar{\ups}},\ups\}\big)\hat{a} -\frac12\hat{a}\big(P_\La\{\dot{\ups},\ups\}P_\La-\{\dot{\ups},\ups\}\big)\hat{a}-\\ &-\frac12\hat{a}^\dag\big(P_\La\{\dot{\bar{\ups}},\bar{\ups}\}P_\La-\{\dot{\bar{\ups}},\bar{\ups}\}\big)\hat{a}^\dag -i\hat{a}^\dag\tilde{P}_\La\{\bar{\ups},\partial_t(H^{-1}K)\} +i\hat{a}\tilde{P}_\La\{\ups,\partial_t(H^{-1}K)\}.
\end{split}
\end{equation}
All the operators and functions in this expression are taken at the instant of time $t$. The regularized Hamiltonian \eqref{Hamilt_reg1} can be written in terms of the field operators $\hat{Z}^A$ if one substitutes \eqref{deZ}, \eqref{creaannh_oper} into \eqref{Hamilt_reg1}. This Hamiltonian is self-adjoint, local in time, and passes into the initial Hamiltonian \eqref{Hamilt_quadr} in no-regularization limit. The terms \eqref{contrdiab} are the counterdiabatic terms. They disappear after the removal of regularization.

Now we turn to the Heisenberg representation
\begin{equation}\label{CAO_Heis}
\begin{gathered}
    \hat{a}_\al(in)=\hat{a}_\al(t_{in}),\qquad
    \hat{a}_\al(out)=\hat{U}'^\La_{t_{in},t_{out}}\hat{a}_\al(t_{out})\hat{U}'^\La_{t_{out},t_{in}} =\hat{S}'^\La_{t_{in},t_{out}}\hat{a}_\al(in) \hat{S}'^\La_{t_{out},t_{in}}.
\end{gathered}
\end{equation}
The creation-annihilation operators $(\hat{a}_\al(in),\had_\al(in))$ and $(\hat{a}_\al(out),\had_\al(out))$ act in the same Fock space $F_{t_{in}}$. Their vacuum states are
\begin{equation}
    |\overline{in}\ran:=|vac,t_{in}\ran\in F_{t_{in}},\qquad |\overline{out}\ran:=\hat{U}'^\La_{t_{in},t_{out}}|vac,t_{out}\ran\in F_{t_{in}},
\end{equation}
where $|vac,t_{out}\ran$ is defined in \eqref{parall_transp_def}. Let
\begin{equation}\label{Heis_eqs}
    \hat{Z}^A(t):=\hat{U}'^\La_{t_{in},t}\hat{Z}^A\hat{U}'^\La_{t,t_{in}},\qquad i\dot{\hat{Z}}^A(t)=[\hat{Z}^A(t),\hat{H}'_\La(t)],
\end{equation}
where $\hat{H}'_\La(t)$ is the Hamiltonian \eqref{Hamilt_reg1} written in the Heisenberg representation. As is seen, the field operators obey the regularized Heisenberg equations. In particular, if one applies this general formalism to a massive scalar field then the operator of the scalar field in the Heisenberg representation evolves in accordance with the regularized Klein-Gordon equation which, by construction, possesses a better ultraviolet behavior. After the removal of regularization, the regularized Hamiltonian turns into the initial one and so the regularized Klein-Gordon equation passes into the usual Klein-Gordon equation.

It follows from \eqref{ZinAAd}, \eqref{CAO_Heis} that
\begin{equation}\label{Zoutin}
\begin{split}
    \de\hat{Z}^A_{t_{out}}(t_{out})&=-i[\ups^A(t_{out})\hat{a}(out)-\bar{\ups}^A(t_{out})\hat{a}^\dag(out)],\\
    \de\hat{Z}^A_{t_{in}}(t_{in})&=-i[\ups^A(t_{in})\hat{a}(in)-\bar{\ups}^A(t_{in})\hat{a}^\dag(in)].
\end{split}
\end{equation}
In order to find the relation between the $in$ and $out$ creation-annihilation operators, one needs to solve the Heisenberg equations \eqref{Heis_eqs} that have the form
\begin{equation}
    i\de\dot{\hZ}_t^A(t)=iJ^A_{\La B}(t)\partial_t(H^{-1}K)^B(t)+J^{AB}H'^\La_{BC}\de \hZ_t^C(t),
\end{equation}
where
\begin{equation}
    J^A_{\La B}(t)=J^{AC}_\La(t)J_{CB},\qquad iJ^{AB}_\La(t):=\ups^{[A}(t)P_\La(t)\bar{\ups}^{B]}(t).
\end{equation}
Let us introduce the commutator Green's function
\begin{equation}\label{comm_Gren_fnc}
\begin{split}
    \tilde{G}^{AB}_\La(t,t')&:=[\hat{Z}^A(t),\hat{Z}^B(t')]=[\de\hat{Z}^A_t(t),\de\hat{Z}^B_{t'}(t')],\\
    \tilde{G}^A_{\La B}(t,t')&=\tilde{G}^{AC}_\La(t,t') J_{CB}=i\Texp\Big\{\int_{t'}^t d\tau J^{AC}H'^\La_{CB}(\tau)\Big\}.
\end{split}
\end{equation}
It is clear that, in no-regularization limit, $\tilde{G}^{AB}_{\La}(t,t')$ tends to the commutator Green's function associated with the initial Hamiltonian $H(t)$. At a finite cutoff, $\tilde{G}^{AB}_{\La}(t,t')$ does not possess the Hadamard asymptotic form at the diagonal \cite{KayWald,Junker}. The Green's function allows one to write the solution of the Heisenberg equations
\begin{equation}\label{Zout}
    \de\hat{Z}^A_{t_{out}}(t_{out})=-i\tilde{G}^A_{\La B}(t_{out},t_{in}) \de\hat{Z}^B_{t_{in}}(t_{in}) -i\int_{t_{in}}^{t_{out}} d\tau\tilde{G}^A_{\La B}(t_{out},\tau) J^B_{\La C}(\tau) \partial_\tau\big[H^{-1}K\big]^C(\tau).
\end{equation}
Employing the orthonormality and completeness relations for the mode functions, we obtain
\begin{equation}\label{canon_trans_expl}
    \left[
       \begin{array}{c}
         \hat{a}(out) \\
         \hat{a}^\dag(out) \\
       \end{array}
     \right]=\left[
               \begin{array}{cc}
                 \Phi & \Psi \\
                 \bar{\Psi} & \bar{\Phi} \\
               \end{array}
             \right]\left[
                      \begin{array}{c}
                        \hat{a}(in) \\
                        \hat{a}^\dag(in) \\
                      \end{array}
                    \right]+\left[
                              \begin{array}{c}
                                g \\
                                \bar{g} \\
                              \end{array}
                            \right],
\end{equation}
where
\begin{equation}\label{phi_psi_expl}
\begin{gathered}
    \Phi_{\al\be}(t_{out}):=-\bar{\ups}_\al^A(t_{out})\tilde{G}^\La_{AB}(t_{out},t_{in})\ups^B_\be(t_{in}),\qquad \Psi_{\al\be}(t_{out}):=\bar{\ups}_\al^A(t_{out})\tilde{G}^\La_{AB}(t_{out},t_{in})\bar{\ups}^B_\be(t_{in}),\\
    g_\al(t_{out}):=-i\int_{t_{in}}^{t_{out}} d\tau\bar{\ups}_\al^A(t_{out})\tilde{G}^\La_{AB}(t_{out},\tau) J^B_{\La C}(\tau) \partial_\tau\big[H^{-1}K\big]^C(\tau),
\end{gathered}
\end{equation}
and $\tilde{G}^\La_{AB}=J_{AC}\tilde{G}^C_{\La B}$. Comparing \eqref{canon_trans_expl} with \eqref{unit_trans_3} and \eqref{Smatr_energy_2} with \eqref{Hamil_gener}, we see that the explicit expression for the evolution operator \eqref{evol_symb} contains the operators $\Phi$ and $\Psi$ presented in \eqref{phi_psi_expl}, where one should replace $t_{out}\rightarrow t$, the functions
\begin{equation}\label{dfchi}
\begin{gathered}
    d(t)=\sum_\al P^\La_{\al\al}(t)\Big\{\frac12\omega_\al(t)-\omega_\al^{-1}(t) |\ups^A_\al(t) K_A(t)|^2 \Big\},\qquad f(t)=iP_\La\{\bar{\ups},\partial_t(H^{-1}K)\},\\
    \chi(t)=\Phi^\dag(t)g(t)-\Psi^T(t)\bar{g}(t)=-i\int_{t_{in}}^{t} d\tau\bar{\ups}^A(t_{in})\tilde{G}^\La_{AB}(t_{in},\tau) J^B_{\La C}(\tau) \partial_\tau\big[H^{-1}K\big]^C(\tau),
\end{gathered}
\end{equation}
and the operators
\begin{equation}\label{CA_energy}
    C_{\al\be}=\omega_\al P^\La_{\al\be}+\big(P^\La\{\dot{\bar{\ups}},\ups\}P^\La\big)_{\al\be},\qquad A_{\al\be}=-\big(P^\La\{\dot{\bar{\ups}},\bar{\ups}\}P^\La\big)_{\al\be}.
\end{equation}
Thereby \eqref{evol_symb} gives the explicit expression for an arbitrary matrix element of the evolution operator $\hat{S}'^\La_{t,t_{in}}$.

The vacuum-to-vacuum amplitude is given by
\begin{equation}\label{vacuum-vacuum_expl}
    \lan vac,t_{out}|\hat{U}'^\La_{t_{out},t_{in}}|vac,t_{in}\ran=\lan vac,t_{in}|\hat{S}'^\La_{t_{out},t_{in}}|vac,t_{in}\ran =\lan \overline{out}|\overline{in}\ran.
\end{equation}
The time-dependent phases of the mode functions $\ups_\al(t)$ are not specified by \eqref{eigen_prblm}, \eqref{orth_complt}. Nevertheless, it is not difficult to show \cite{KazMil1} that \eqref{vacuum-vacuum_expl} does not depend on the choice of these phases. It is convenient to fix the phases by imposing the condition
\begin{equation}
    \{\dot{\bar{\ups}}_\al,\ups_\al\}=0.
\end{equation}
Then
\begin{equation}\label{trace_C}
    \Sp C=\sum_\al P^\La_{\al\al}\big(\omega_\al +\{\dot{\bar{\ups}}_\al,\ups_\al\}\big)=\sum_\al \omega_\al P^\La_{\al\al}.
\end{equation}
Using the general formula \eqref{vacuum-vacuum_gen}, we obtain \cite{BerezMSQ1.4,DeWGAQFT.11}
\begin{equation}\label{vacuum-vacuum_expl_ferm_bos}
    \lan \overline{out}|\overline{in}\ran=\big[\det \bar{\Phi}(t_{out})\big]^{-1/2},
\end{equation}
in the absence of sources. This form for the one-loop effective action holds only before the removal of regularization and on fulfillment the additional condition imposed on the phases of the mode functions that makes the second term in the trace \eqref{trace_C} vanish.

The average field is written as
\begin{equation}
\begin{split}
    Z^A(t)=\,&\lan in|\hZ^A(t)|in\ran=-i\int_{t_{in}}^{t} d\tau\tilde{G}^A_{\La B}(t,\tau) J^B_{\La C}(\tau) \partial_\tau\big[H^{-1}K\big]^C(\tau)-H^{AB}(t)K_B(t)=\\
    =\,&-i\int_{t_{in}}^{t} d\tau \tilde{G}^{AB}_\La(t,\tau) \big[H'^\La_{BC}(\tau) -\tilde{J}^\La_{BC}(\tau)\partial_\tau\big](H^{-1}K)^C(\tau)+\\
    &+i\tilde{G}^{A}_{\La B}(t,t_{in})H^{BC}(t_{in}) K_C(t_{in}),
\end{split}
\end{equation}
where $\tilde{J}^A_{\La B}:=\de^A_B-J^A_{\La B}$ and $\tilde{J}^\La_{AB}=J_{AC}\tilde{J}^C_{\La B}$. In no-regularization limit, we have
\begin{equation}
    Z^A(t)=-i\int_{t_{in}}^{t} d\tau\tilde{G}^{AB}(t,\tau) K_B(\tau) +i\tilde{G}^{A}_{\ B}(t,t_{in})H^{BC}(t_{in}) K_C(t_{in}).
\end{equation}
The first term describes the field created by the current $K(\tau)$ at $\tau\in[t_{in},t]$. The second term describes the evolution of the field $-(H^{-1}K)(t_{in})$ that was present in the state $|vac,t_{in}\ran$.

\subsection{Fermions}\label{Gener_Form_Ferm}

Let us consider the nonstationary quadratic theory of fermionic fields of a general form and construct the corresponding QFT by means of the Hamiltonian diagonalization procedure. This procedure is analogous to the one developed in \cite{Friedrichs,Shirok,Scharf79} for the Dirac spinors interacting with classical electromagnetic fields in the inertial reference frame. In many respects, this procedure repeats the construction of the previous section. Therefore, we only outline the main steps.

Let the Hamiltonian be
\begin{equation}\label{Hamilt_quadr_ferm}
    \hat{H}(t)=\frac12 R^i_{\ j}(t)(\hat{\psi}^\dag_i\hat{\psi}^j -\hat{\psi}^j\hat{\psi}^\dag_i) +\hpsd_i\eta^i(t) +\bar{\eta}_i(t)\hps^i ,\qquad
    [\hps^i,\hpsd_j]=\de^i_j\equiv\de_{pq}\de(\spx-\spy),
\end{equation}
where $p$, $q$ are the spinor indices and $\eta(t)$, $\bar{\eta}(t)$ are Grassmann odd functions (sources). Hereinafter, the graded commutators are implied. Suppose that the operator $R(t)$ is self-adjoint with respect to the metric $\de_{ij}$, i.e.,
\begin{equation}
    R^\dag(t)=R(t),
\end{equation}
and it does not possess zero eigenvalues. The latter condition can be relaxed but we will not investigated this possibility \cite{Szpak15,GrMuRaf}.

For example, for the Dirac fields evolving on the external electromagnetic background in the inertial reference frame
\begin{equation}
    R=eA_0+m\ga^0-\ga^0\ga^i(i\partial_i-eA_i).
\end{equation}
For smooth background fields $A_\mu(x)$ with compact spatial support, this operator is self-adjoint with respect to the standard scalar product
\begin{equation}
    \lan\vf,\psi\ran:=\int d\spx \vf^\dag(t,\spx)\psi(t,\spx)
\end{equation}
on Dirac spinors. The restriction imposed on the electromagnetic fields can be relaxed (see, e.g., \cite{NencSchr,Szpak15}).

Introducing the Schr\"{o}dinger field operators,
\begin{equation}
    \de\hps_t:=\hat{\psi}+R^{-1}(t)\eta(t),\qquad \de\hpsd_t:=\hpsd+\bar{\eta}(t)R^{-1}(t),
\end{equation}
we have
\begin{equation}\label{Hamilt_ferm}
    \hat{H}(t)=\frac12\de\hat{\psi}_t^\dag R(t) \de\hat{\psi}_t -\frac12\de\hat{\psi}_t R^T(t)\de\hat{\psi}_t^\dag-\bar{\eta}(t)R^{-1}(t)\eta(t).
\end{equation}

In order to split the complete set of eigenfunctions of the operator $R(t)$ into positive- and negative-frequency ones and to construct the representation of the algebra of observables in the Fock space, we introduce the one-particle self-adjoint charge operator $q(t)$ such that
\begin{equation}
    q^2(t)=1,\qquad [R(t),q(t)]=0.
\end{equation}
Then
\begin{equation}\label{mode_split_expl}
\begin{alignedat}{2}
    R(t)u_\al(t)&=E^{(+)}_\al(t) u_\al(t),&\qquad q(t)u_\al(t)&=u_\al(t);\\
    R(t)\ups_\al(t)&=-E^{(-)}_\al(t) \ups_\al(t),&\qquad q(t)\ups_\al(t)&=-\ups_\al(t),
\end{alignedat}
\end{equation}
where the eigenfunctions $u_\al$, $\ups_\al$ are assumed to be orthonormal.

It is clear that there is an ambiguity in the definition of the operator $q(t)$ (see the discussion in \cite{Szpak15,GrMuRaf,Scharf79}). The most natural choice of this operator is such that the projector
\begin{equation}
    (1+q(t))/2
\end{equation}
projects to the states associated with the positive eigenvalues of $R(t)$. In this case, $E^{(\pm)}_\al(t)>0$. Nevertheless, one can introduce other splittings of the set of eigenfunctions of $R(t)$ \cite{GrMuRaf,Szpak15}. In particular, by analogy with the boson fields considered in the previous section, the splitting into positive- and negative-frequency modes can be defined by the requirement
\begin{equation}\label{mode_split_ferm}
    \frac{\partial E^{(\pm)}_\al}{\partial m}>0,
\end{equation}
where $m$ is the mass of (anti)particle. From physical point of view, this requirement means that the energy of states of particles and antiparticles grows with the mass of these particles. As regards the Dirac fermions, the condition \eqref{mode_split_ferm} results in the splitting of the eigenfunctions of the operator $R(t)$ into positive- and negative-frequency ones in accordance with the sign of the Dirac scalar product
\begin{equation}
    \int d\spx u_\al^\dag(t,\spx)\ga^0u_\al(t,\spx)>0,\qquad \int d\spx \ups_\al^\dag(t,\spx)\ga^0\ups_\al(t,\spx)<0.
\end{equation}
In the stationary case, this splitting complies with the standard $i\epsilon$ prescription in the sense that if $m>0$ and $m\rightarrow m-i\epsilon$, then $m^2\rightarrow m^2-i\epsilon$ and $\sgn\im E^{(\pm)}_\al=\mp1$, i.e., the poles of the propagator corresponding to particles lie below the real axis, whereas the antiparticle poles lie above the real axis.

Various splittings \eqref{mode_split_expl} of the eigenfunctions into the positive- and negative-frequency ones that differ by redefinition of a finite number of modes lead to unitary equivalent theories (see, e.g., \cite{Szpak15} and below). Therefore, all the physically reasonable splittings of the mode functions \eqref{mode_split_expl} are unitary equivalent under natural assumptions such as the smoothness of the background fields, and the boundedness in space of the system at issue. Henceforth, we will assume that a certain splitting \eqref{mode_split_expl} is chosen. The explicit form of such a splitting will be irrelevant for our study. The only assumption is that such a splitting is determined by the configuration of the background fields at the present instant of time as in the case when it is specified by the properties of the spectrum of $R(t)$.

Let us introduce the creation-annihilation operators in the Fock space $F_t$ of fermions
\begin{equation}\label{CAO_ferm}
\begin{split}
    \ha_\al(t)&:=\lan u_\al(t),\de\hps_t\ran,\quad \had_\al(t):=\lan \bar{u}_\al(t),\de\hpsd_t\ran,\\
    \hb_\al(t)&:=\lan \ups_\al(t),\de\hps_t\ran,\quad \hbd_\al(t):=\lan \bar{\ups}_\al(t),\de\hpsd_t\ran.
\end{split}
\end{equation}
Then, employing the completeness of the mode functions, we obtain the representation of the field operators (the generators of the algebra of observables) in $F_t$:
\begin{equation}\label{field_op_ferm}
    \de\hps_t=u(t)\ha(t)+\ups(t)\hbd(t),\qquad\de\hpsd_t=\bar{\ups}(t)\hb(t)+\bar{u}(t)\had(t).
\end{equation}
Substituting \eqref{field_op_ferm} into \eqref{Hamilt_ferm} and introducing the regularization, we arrive at
\begin{equation}\label{Hamilt_ferm_regul}
    \hat{H}_\La(t)=\sum_\al\big[P^\La_{\al\al}E^{(+)}_\al(t)\had_\al(t)\ha_\al(t)+  P'^\La_{\al\al}E^{(-)}_\al(t)\hbd_\al(t)\hb_\al(t) \big]+E^\La_{vac}(t),
\end{equation}
where
\begin{equation}\label{vac_energ_sym}
\begin{split}
    E^\La_{vac}:=\,&-\sum_\al\bigg\{ P^\La_{\al\al}(t)\Big[\frac12E^{(+)}_\al(t)+\frac{\lan\eta(t),u_\al(t)\ran\lan u_\al(t),\eta(t)\ran}{E^{(+)}_\al(t)}\Big]+\\
    &+P'^\La_{\al\al}(t)\Big[\frac12E^{(-)}_\al(t)-\frac{\lan\eta(t),\ups_\al(t)\ran\lan \ups_\al(t),\eta(t)\ran}{E^{(-)}_\al(t)}\Big]\bigg\},
\end{split}
\end{equation}
and
\begin{equation}
    P^\La_{\al\be}(t):=\theta(\La-E^{(+)}_\al(t))\de_{\al\be},\qquad P'^\La_{\al\be}(t):=\theta(\La-E^{(-)}_\al(t))\de_{\al\be}.
\end{equation}
If it is necessary, the infrared cutoff can also be included into $P_\La$, $P'_\La$. Notice that the symmetric ordering of operators $\hps$, $\hpsd$ is chosen in \eqref{Hamilt_ferm}. This results in the symmetric contribution of particles and antiparticles to the vacuum energy of the Hamiltonian \eqref{Hamilt_ferm_regul}. If one takes the asymmetric Hamiltonian
\begin{equation}\label{Hamilt_quadr_ferm_asym}
    \hat{H}_a(t):=\hpsd R(t)\hps+\hpsd\eta(t) +\bar{\eta}(t)\hps,
\end{equation}
then the vacuum energy in the absence of sources has the form \cite{LandLifQED,HeisEul}
\begin{equation}\label{vac_energ_asym}
    -\sum_\al P'^\La_{\al\al}(t)E^{(-)}_\al(t),
\end{equation}
i.e. it is equal to the energy of the Dirac ``sea''. For the system of Dirac fermions confined to a domain with finite volume, the difference between the finite parts of the vacuum energies \eqref{vac_energ_sym} and \eqref{vac_energ_asym} in the absence of sources takes the form of a certain surface integral \cite{KalKaz3}. Notice that, in applying this formalism to condensed mater physics, the asymmetric definition \eqref{Hamilt_quadr_ferm_asym} of the Hamiltonian is preferable as the Dirac sea of the valence electrons is actually present.

By definition, the charge operator is
\begin{equation}
    \hat{Q}_t=\had(t)\ha(t)-\hbd(t)\hb(t).
\end{equation}
Evidently, it commutes with the Hamiltonian \eqref{Hamilt_ferm_regul}. The index $t$ of the operator marks that it is realized in the space $F_t$. Recall that as long as the dependence on $t$ enters into the Hamiltonian \eqref{Hamilt_quadr_ferm} only through the background fields $\Phi(t)$, the representation of $\hat{Q}_t$ in $F_t\equiv F_{\Phi(t)}$ is determined by the configuration of background fields at the present instant of time, viz., $\hat{Q}_t\equiv \hat{Q}_{\Phi(t)}$.

Supposing that the creation-annihilation operators \eqref{CAO_ferm} provide the representation of the same generators of the algebra of observables $\hps$, $\hpsd$ in the Fock spaces $F_t$, we deduce the relations
\begin{equation}\label{Bogolyub_trn_ferm}
    \left[
      \begin{array}{c}
        \hat{c}(t) \\
        \hat{c}^\dag(t) \\
      \end{array}
    \right]=
    \left[
      \begin{array}{cc}
        F(t,t_{in}) & G(t,t_{in}) \\
        \bar{G}(t,t_{in}) & \bar{F}(t,t_{in}) \\
      \end{array}
    \right]
    \left[
      \begin{array}{c}
        \hat{c}(t_{in}) \\
        \hat{c}^\dag(t_{in}) \\
      \end{array}
    \right]+
    \left[
      \begin{array}{c}
        h(t,t_{in}) \\
        \bar{h}(t,t_{in}) \\
      \end{array}
    \right],
\end{equation}
where, for brevity, we have introduced a unified notation for the creation-annihilation operators
\begin{equation}
    \hat{c}_a(t):=(\ha_\al(t),\hb_\be(t)),\qquad \hat{c}^\dag_a(t):=(\had_\al(t),\hbd_\be(t)),
\end{equation}
and
\begin{equation}
\begin{gathered}
    F(t,t_{in})=
    \left[
      \begin{array}{cc}
        \lan u_\al(t),u_\be(t_{in})\ran & 0 \\
        0 & \lan \bar{\ups}_\al(t),\bar{\ups}_\be(t_{in})\ran \\
      \end{array}
    \right],\\
    G(t,t_{in})=
    \left[
      \begin{array}{cc}
        0 & \lan u_\al(t),\ups_\be(t_{in})\ran \\
        \lan \bar{\ups}_\al(t),\bar{u}_\be(t_{in})\ran & 0 \\
      \end{array}
    \right],\\
    h(t,t_{in})=
    \left[
      \begin{array}{c}
        \lan u_\al(t),R^{-1}(t)\eta(t)-R^{-1}(t_{in})\eta(t_{in})\ran \\
        \lan \bar{\ups}_\al(t),\bar{R}^{-1}(t)\bar{\eta}(t)-\bar{R}^{-1}(t_{in})\bar{\eta}(t_{in})\ran \\
      \end{array}
    \right].
\end{gathered}
\end{equation}
The creation-annihilation operators related by the linear canonical transform \eqref{Bogolyub_trn_ferm} can be realized in one Fock space if and only if the conditions \eqref{unit_criter} are satisfied. As for relativistic QFTs, these conditions are not fulfilled in a general position. Therefore, as in the case of boson fields, it is necessary to introduce the Hilbert bundle of Fock spaces $F_{\Phi}$ with the base being the supermanifold of background field configurations at a given instant of time.

Introduce the parallel transport operator \eqref{parall_transp_map},
\begin{equation}\label{parall_transp_def_ferm}
    \hat{c}_a(t)=\hat{W}_{t,t_{in}}\hat{c}_a(t_{in})\hat{W}_{t_{in},t},\qquad |vac,t\ran:=\hat{W}_{t,t_{in}}|vac,t_{in}\ran,
\end{equation}
and the corresponding self-adjoint trivial connection
\begin{equation}\label{parall_transp_ferm}
    -i\partial_t\hat{W}_{t,t_{in}}=\hat{\Ga}(\hat{c}^\dag(t),\hat{c}(t),t)\hat{W}_{t,t_{in}},
\end{equation}
where
\begin{equation}
\begin{gathered}
    \hat{\Ga}(\hat{c}^\dag(t),\hat{c}(t),t):=\frac12\big[2\hat{c}^\dag(t)L(t)\hat{c}(t)+\hat{c}(t)M^\dag(t)\hat{c}(t)  +\hat{c}^\dag(t)M(t)\hat{c}^\dag(t)\big] +\hat{c}^\dag(t)f'(t) +\bar{f}'(t)\hat{c}(t),\\
    L(t)=i
    \left[
      \begin{array}{cc}
        \lan\dot{u}_\al(t),u_\be(t)\ran & 0 \\
        0 & \lan\dot{\bar{\ups}}_\al(t),\bar{\ups}_\be(t)\ran \\
      \end{array}
    \right]
    ,\qquad
    M(t)=i
    \left[
      \begin{array}{cc}
        0 & \lan\dot{u}_\al(t),\ups_\be(t)\ran \\
        \lan\dot{\bar{\ups}}_\al(t),\bar{u}_\be(t)\ran & 0 \\
      \end{array}
    \right],\\
    f'(t)=i
    \left[
      \begin{array}{c}
        \lan u_\al(t),\partial_t(R^{-1}\eta)\ran \\
        \lan \bar{\ups}_\al(t),\partial_t(\bar{R}^{-1}\bar{\eta})\ran \\
      \end{array}
    \right],
\end{gathered}
\end{equation}
and
\begin{equation}\label{connection_ferm}
\begin{split}
    \hat{\Ga}_\mu:=\,&i\hat{a}^\dag \Big\lan\frac{\de u}{\de\Phi^\mu},u\Big\ran\hat{a} +i\hat{b}^\dag\Big\lan\frac{\de\bar{\ups}}{\de\Phi^\mu} ,\bar{\ups}\Big\ran\hat{b} +i\hat{a}^\dag\Big\lan\frac{\de u}{\de\Phi^\mu},\ups\Big\ran\hat{b}^\dag
    +i\hat{b}\Big\lan\frac{\de \ups}{\de\Phi^\mu},u\Big\ran\hat{a}+\\
    &+i\hat{a}^\dag\Big\lan u,\frac{\de(R^{-1}\eta)}{\de\Phi^\mu}\Big\ran +i\hat{b}^\dag\Big\lan \bar{\ups},\frac{\de(\bar{R}^{-1}\bar{\eta})}{\de\Phi^\mu}\Big\ran
    -i\Big\lan \bar{u},\frac{\de(\bar{R}^{-1}\bar{\eta})}{\de\Phi^\mu}\Big\ran\hat{a} -i\Big\lan \ups,\frac{\de(R^{-1}\eta)}{\de\Phi^\mu}\Big\ran \hat{b}.
\end{split}
\end{equation}
Then the physically measurable amplitudes are the matrix elements of the operator
\begin{equation}\label{Smatr_energy_ferm}
    \hat{S}^\La_{t,t_{in}}:=\hat{W}_{t_{in},t}\hat{U}^\La_{t,t_{in}},\qquad \hat{S}^\La_{t,t_{in}}: F_{t_{in}}\rightarrow F_{t_{in}},
\end{equation}
where $\hat{U}^\La_{t,t_{in}}: F_{t_{in}}\rightarrow F_{t}$ is the evolution operator generated by \eqref{Hamilt_ferm_regul}. The operator \eqref{Smatr_energy_ferm} satisfies the equation
\begin{equation}\label{Smatr_energy_ferm1}
    i\partial_t\hat{S}^\La_{t,t_{in}}=\big[\hat{H}_\La(\hat{c}^\dag(t_{in}),\hat{c}(t_{in}),t) +\hat{\Ga}(\hat{c}^\dag(t_{in}),\hat{c}(t_{in}),t)\big]\hat{S}^\La_{t,t_{in}}
\end{equation}
with the initial condition $\hat{S}^\La_{t_{in},t_{in}}=1$.

It was proved in \cite{Scharf79} that the operator $\hat{S}^\La_{t,t_{in}}$ is unitary for massive fermions described by the Dirac equation for a sufficiently wide class of external electromagnetic fields. However, the example of a massive scalar field studied in \cite{uqem} shows that $\hat{S}^\La_{t,t_{in}}$ is not unitary in noninertial reference frames in the flat spacetime or in the nonstationary spacetime of a general configuration. In order to secure the unitarity of evolution, we regularize the generator of the evolution operator \eqref{Smatr_energy_ferm1}. The regularized evolution operator obeys the equation
\begin{equation}\label{Smatr_energy_ferm2}
    i\partial_t\hat{S}'^\La_{t,t_{in}}=\big[\hat{H}_\La(\hat{c}^\dag(t_{in}),\hat{c}(t_{in}),t) +\hat{\Ga}_\La(\hat{c}^\dag(t_{in}),\hat{c}(t_{in}),t)\big]\hat{S}'^\La_{t,t_{in}},
\end{equation}
with the initial condition $\hat{S}'^\La_{t_{in},t_{in}}=1$, where
\begin{equation}
\begin{split}
    \hat{\Ga}_\La(\hat{c}^\dag(t_{in}),\hat{c}(t_{in}),t):=\,&\frac12\big[2\hat{c}^\dag(t_{in})\Pi_\La(t) L(t)\Pi_\La(t)\hat{c}(t_{in})+\hat{c}(t_{in})\Pi_\La(t) M^\dag(t)\Pi_\La(t)\hat{c}(t_{in})+\\  &+\hat{c}^\dag(t_{in})\Pi_\La(t)M(t)\Pi_\La(t)\hat{c}^\dag(t_{in})\big]+\\
    &+\hat{c}^\dag(t_{in})\Pi_\La(t)f'(t) +\bar{f}'(t)\Pi_\La(t)\hat{c}(t_{in}),
\end{split}
\end{equation}
and
\begin{equation}
    \Pi_\La(t):=
    \left[
      \begin{array}{cc}
        P^\La_{\al\be}(t) & 0 \\
        0 & P'^\La_{\al\be}(t) \\
      \end{array}
    \right].
\end{equation}
The replacement $\hat{\Ga}\rightarrow\hat{\Ga}_\La$ provides the adiabatic evolution for the modes of a quantum field with the energies larger than $\La$. In virtue of the fact that $\Pi_\La(t)$ is a finite rank projector, the conditions of the theorem \ref{evol_symb_thm} are satisfied and the operator $\hat{S}'^\La_{t,t_{in}}$ is unitary under the assumption that the operator $\Phi(t)$ given in \eqref{phi_psi_expl_ferm} is nondegenerate. Notice that degeneracy of the operator $\Phi(t)$ depends on the way of splitting of the eigenfunctions of $R(t)$ into positive- and negative-frequency ones.

By definition, the regularized evolution operator, $\hat{U}'^\La_{t,t_{in}}: F_{t_{in}}\rightarrow F_{t}$, has the form \eqref{evolut_reg}. It is generated by the Hamiltonian
\begin{equation}\label{Hamilt_reg_ferm1}
    \hat{H}'_\La(\hat{c}^\dag(t),\hat{c}(t),t)=\hat{H}_\La(\hat{c}^\dag(t),\hat{c}(t),t)+\hat{\Ga}_\La(\hat{c}^\dag(t),\hat{c}(t),t)-\hat{\Ga}(\hat{c}^\dag(t),\hat{c}(t),t),
\end{equation}
where
\begin{equation}\label{contrdiab_ferm}
    \hat{\Ga}_\La-\hat{\Ga}=\frac12\big[2\hat{c}^\dag(\Pi_\La L\Pi_\La-L)\hat{c}+\hat{c}(\Pi_\La M^\dag\Pi_\La -M^\dag)\hat{c}  +\hat{c}^\dag(\Pi_\La M\Pi_\La -M)\hat{c}^\dag\big] -\hat{c}^\dag\tilde{\Pi}_\La f' -\bar{f}'\tilde{\Pi}_\La\hat{c}.
\end{equation}
All the operators and functions in this expression are taken at the instant of time $t$. The projector $\tilde{\Pi}_\La(t):=1-\Pi_\La(t)$. The terms \eqref{contrdiab_ferm} are the counterdiabatic terms. Substituting \eqref{CAO_ferm} into \eqref{Hamilt_reg_ferm1}, the regularized Hamiltonian can be written in terms of the field operators. It is self-adjoint, local in time, and turns into \eqref{Hamilt_quadr_ferm} in no-regularization limit.

In the Heisenberg representation
\begin{equation}\label{CAO_Heis_ferm}
\begin{gathered}
    \hat{c}_a(in)=\hat{c}_a(t_{in}),\qquad
    \hat{c}_a(out)=\hat{U}'^\La_{t_{in},t_{out}}\hat{c}_a(t_{out})\hat{U}'^\La_{t_{out},t_{in}} =\hat{S}'^\La_{t_{in},t_{out}}\hat{c}_a(in) \hat{S}'^\La_{t_{out},t_{in}},\\
    |\overline{in}\ran:=|vac,t_{in}\ran\in F_{t_{in}},\qquad |\overline{out}\ran:=\hat{U}'^\La_{t_{in},t_{out}}|vac,t_{out}\ran\in F_{t_{in}},
\end{gathered}
\end{equation}
where $|vac,t_{out}\ran$ is defined in \eqref{parall_transp_def_ferm}. Let
\begin{equation}\label{Heis_eqs_ferm}
    \hat{\psi}^i(t):=\hat{U}'^\La_{t_{in},t}\hat{\psi}^i\hat{U}'^\La_{t,t_{in}},\qquad i\dot{\hat{\psi}}^i(t)=[\hat{\psi}^i(t),\hat{H}'_\La(t)].
\end{equation}
In particular, in the absence of sources, $\eta=\bar{\eta}=0$, the charge operator
\begin{equation}
    \hat{Q}_{t}(t)=\hat{U}'^\La_{t_{in},t}\hat{Q}_{t}\hat{U}'^\La_{t,t_{in}}=\hat{S}'^\La_{t_{in},t}\hat{Q}_{t_{in}}\hat{S}'^\La_{t,t_{in}}=\hat{Q}_{t_{in}},
\end{equation}
since $\hat{Q}_{t_{in}}$ commutes with the evolution generator \eqref{Smatr_energy_ferm2}. In other words, the average charge of the system does not depend on time provided $\eta=\bar{\eta}=0$.

It follows from \eqref{field_op_ferm}, \eqref{CAO_Heis_ferm} that
\begin{equation}\label{psioutin}
\begin{split}
    \de\hps_{t_{out}}(t_{out})&=u(t_{out})\ha(out)+\ups(t_{out})\hbd(out),\\
    \de\hps_{t_{in}}(t_{in})&=u(t_{in})\ha(in)+\ups(t_{in})\hbd(in),
\end{split}
\end{equation}
The Heisenberg equations \eqref{Heis_eqs_ferm} are written as
\begin{equation}\label{Heis_eqs_expl}
\begin{split}
    i\de\dot{\hps}_t(t)&=i\pr_\La(t)\partial_t(R^{-1}\eta)(t)+R'_\La(t)\de \hps_t(t),\\ i\de\dot{\hps}^\dag_t(t)&=i\bar{\pr}_\La(t)\partial_t(\bar{R}^{-1}\bar{\eta})(t)-\bar{R}'_\La(t)\de \hpsd_t(t),
\end{split}
\end{equation}
where
\begin{equation}
    \pr_\La(t)=|u(t)\ran P_\La(t) \lan u(t)|+|\ups(t)\ran P'_\La(t) \lan \ups(t)|,
\end{equation}
and $\bar{\pr}_\La$ is the complex conjugate operator to $\pr_\La$. Introduce the fermionic commutator Green's function
\begin{equation}
\begin{split}
    \tilde{S}^{i}_{\La j}(t,t')&:=-i[\hps^i(t),\hpsd_j(t')]=-i[\de\hps^i_t(t),\de\hpsd_{t'j}(t')],\\
    \tilde{S}^{i}_{\La j}(t,t')&=-i\Texp\Big\{-i\int_{t'}^t d\tau R'^i_{\La j}(\tau)\Big\}.
\end{split}
\end{equation}
Recall that, in the case at hand, the graded commutator is the anticommutator. In no-regularization limit, $\tilde{S}_{\La}(t,t')$ tends to the commutator Green's function associated with the initial Hamiltonian $R(t)$. The solution of the Heisenberg equations \eqref{Heis_eqs_expl} has the form
\begin{equation}\label{psiout}
    \de\hps_{t_{out}}(t_{out})=i\tilde{S}_{\La}(t_{out},t_{in}) \de\hps_{t_{in}}(t_{in}) +i\int_{t_{in}}^{t_{out}} d\tau \tilde{S}_{\La}(t_{out},\tau) \pr_\La(\tau)\partial_\tau(R^{-1}\eta)(\tau).
\end{equation}
From \eqref{psioutin} we obtain
\begin{equation}\label{canon_trans_expl_ferm}
    \left[
       \begin{array}{c}
         \hat{c}(out) \\
         \hat{c}^\dag(out) \\
       \end{array}
     \right]=\left[
               \begin{array}{cc}
                 \Phi & \Psi \\
                 \bar{\Psi} & \bar{\Phi} \\
               \end{array}
             \right]\left[
                      \begin{array}{c}
                        \hat{c}(in) \\
                        \hat{c}^\dag(in) \\
                      \end{array}
                    \right]+\left[
                              \begin{array}{c}
                                g \\
                                \bar{g} \\
                              \end{array}
                            \right],
\end{equation}
where
\begin{equation}\label{phi_psi_expl_ferm}
\begin{gathered}
    \Phi(t_{out}):=i
    \left[
      \begin{array}{cc}
        \lan u(t_{out}),\tilde{S}_\La(t_{out},t_{in}) u(t_{in})\ran & 0 \\
        0 & -\lan \bar{\ups}(t_{out}),\bar{\tilde{S}}_\La(t_{out},t_{in}) \bar{\ups}(t_{in})\ran \\
      \end{array}
    \right],\\
    \Psi(t_{out}):=i
    \left[
      \begin{array}{cc}
        0 & \lan u(t_{out}),\tilde{S}_\La(t_{out},t_{in}) \ups(t_{in})\ran \\
        -\lan \bar{\ups}(t_{out}),\bar{\tilde{S}}_\La(t_{out},t_{in}) \bar{u}(t_{in})\ran & 0 \\
      \end{array}
    \right],\\
    g_\al(t_{out}):=i
    \left[
      \begin{array}{c}
        \lan u(t_{out}),\int_{t_{in}}^{t_{out}} d\tau\tilde{S}_\La(t_{out},\tau)  \pr_\La(\tau)\partial_\tau(R^{-1}\eta)(\tau)\ran \\
        -\lan \bar{\ups}(t_{out}),\int_{t_{in}}^{t_{out}} d\tau\bar{\tilde{S}}_\La(t_{out},\tau)  \bar{\pr}_\La(\tau)\partial_\tau(\bar{R}^{-1}\bar{\eta})(\tau)\ran \\
      \end{array}
    \right].
\end{gathered}
\end{equation}
The operators $\Phi(t)$ and $\Psi(t)$ enter into the explicit expression \eqref{evol_symb} for the evolution operator. Furthermore,
\begin{equation}
\begin{gathered}
    d(t)=E_{vac}^\La(t),\qquad
    f(t)=\Pi_\La(t)f'(t),\\
    \chi=\Phi^\dag(t)g(t)+\Psi^T(t)\bar{g}(t)=i
    \left[
      \begin{array}{c}
        \lan u(t_{in}),\int_{t_{in}}^{t} d\tau\tilde{S}_\La(t_{in},\tau)  \pr_\La(\tau)\partial_\tau(R^{-1}\eta)(\tau)\ran \\
        -\lan \bar{\ups}(t_{in}),\int_{t_{in}}^{t} d\tau\bar{\tilde{S}}_\La(t_{in},\tau)  \bar{\pr}_\La(\tau)\partial_\tau(\bar{R}^{-1}\bar{\eta})(\tau)\ran \\
      \end{array}
    \right],
\end{gathered}
\end{equation}
and
\begin{equation}\label{CA_energy_ferm}
    C=
    \left[
      \begin{array}{cc}
        P^\La_{\al\be}E^{(+)}_\be & 0 \\
        0 & P'^\La_{\al\be}E^{(-)}_\be \\
      \end{array}
    \right]+
    \Pi_\La L\Pi_\La,\qquad A=\Pi_\La M\Pi_\La.
\end{equation}
These expressions substituted into \eqref{evol_symb} give the explicit form for the matrix elements of the unitary evolution operator $\hat{S}'^\La_{t,t_{in}}$.

The vacuum-to-vacuum amplitude takes the form \eqref{vacuum-vacuum_expl}. Imposing the conditions on the phases of the mode functions
\begin{equation}
    \lan\dot{u}_\al,u_\al\ran=0,\qquad\lan\dot{\ups}_\al,\ups_\al\ran=0,
\end{equation}
and keeping in mind the relations
\begin{equation}\label{d_SpC}
\begin{split}
    d\big|_{\eta=\bar{\eta}=0}&=-\frac12\Big(\sum_\al P^\La_{\al\al}E^{(+)}_\al +\sum_\al P'^\La_{\al\al}E^{(-)}_\al \Big),\\
    \Sp C&=\sum_\al P^\La_{\al\al}(E^{(+)}_\al+i\lan\dot{u}_\al,u_\al\ran) +\sum_\al P'^\La_{\al\al}(E^{(-)}_\al+i\lan\dot{\bar{\ups}}_\al,\bar{\ups}_\al\ran),
\end{split}
\end{equation}
we obtain from the general formula \eqref{vacuum-vacuum_gen} that \cite{BerezMSQ1.4,DeWGAQFT.11}
\begin{equation}\label{vacuum-vacuum_expl_ferm}
    \lan \overline{out}|\overline{in}\ran=\big[\det \bar{\Phi}(t_{out})\big]^{1/2},
\end{equation}
for vanishing sources. Just as in the case of bosons, this formula for the one-loop effective action is valid only before the removal of regularization and on supposing the additional condition on the phases of the mode functions that removes the second terms in the trace of $C$ in \eqref{d_SpC}. Furthermore, the cancelation of the terms in the exponent in \eqref{vacuum-vacuum_gen} occurs only for the symmetric ordering of operators in \eqref{Hamilt_ferm}. As regards the asymmetric ordering \eqref{Hamilt_quadr_ferm_asym}, such a cancelation does not happen and the additional factor in \eqref{vacuum-vacuum_expl_ferm} remains.

The average field is
\begin{equation}
\begin{split}
    \psi(t)&=\lan in|\hps(t)|in\ran=i\int_{t_{in}}^{t} d\tau\tilde{S}_{\La}(t,\tau) \pr_{\La}(\tau) \partial_\tau(R^{-1}\eta)(\tau)-R^{-1}(t)\eta(t)=\\
    &=\int_{t_{in}}^{t} d\tau \tilde{S}_\La(t,\tau) \big[R'_\La(\tau) -i\tilde{\pr}_\La(\tau)\partial_\tau\big](R^{-1}\eta)(\tau) -i\tilde{S}_{\La}(t,t_{in})R^{-1}(t_{in})\eta(t_{in}),
\end{split}
\end{equation}
where $\tilde{\pr}_\La:=1-\pr_\La$. In no-regularization limit, we deduce
\begin{equation}
    \psi(t)=\int_{t_{in}}^{t} d\tau \tilde{S}(t,\tau) \eta(\tau) -i\tilde{S}(t,t_{in})R^{-1}(t_{in})\eta(t_{in}).
\end{equation}
The vacuum average $\psi(t)$ of the fermionic field is a Grassmann odd function.

\section{Inclusive probabilities}\label{Inclus_Prob}

First of all, we formalize the notion of a particle detector. It follows from the postulates of quantum theory that the detector of one particle can be characterized by some self-adjoint projector (not to be confused with $P_\La$),
\begin{equation}
    P=P^\dag,
\end{equation}
in the space of one-particle states. The number of quantum states of a particle that can be recorded by the detector can be estimated from the uncertainty relation. If $V_D$ is the volume of the detector, $\Omega_p$ is the domain of particle's momenta that can be detected by the detector, and $N_s$ is the number of spin states of the particle, then the number of quantum states that can be recorded by the detector is not larger than $N_sV_D\Omega_p/(2\pi)^3$. Thus a physically realizable detector is characterized by the projector $P$ of a finite albeit very large rank. For bosons,
\begin{equation}\label{proj_det_bos}
\begin{gathered}
    P^A_{\ B}=-i\sum_{\ga=1}^K \psi_\ga^{A}\bar{\psi}_\ga^{C}J_{CB},\qquad P^\dag J=JP,\qquad P^2=P,\qquad \Sp P=K,\\  \{\psi_\ga,\psi_{\ga'}\}=\{\bar{\psi}_\ga,\bar{\psi}_{\ga'}\}=0,\quad\{\psi_\ga,\bar{\psi}_{\ga'}\}=-i\de_{\ga\ga'},\quad \ga,\ga'=\overline{1,K}.
\end{gathered}
\end{equation}
The functions $\psi_\ga$ should be linear combinations of the positive-frequency modes $\ups_\al(t_{out})$, i.e.,
\begin{equation}
    \{\ups_\al(t_{out}),\psi_\ga\}=0,\quad\forall\al.
\end{equation}
In the basis $(\ups_\al,\bar{\ups}_\al)$, we have
\begin{equation}
    P_{\al\be}=-\sum_{\ga=1}^K\{\bar{\ups}_\al(t_{out}),\psi_\ga\}\{\bar{\psi}_\ga,\ups_\be(t_{out})\},\qquad P^\dag=P,\qquad P^2=P,\qquad\Sp P=K.
\end{equation}
For fermions,
\begin{equation}\label{proj_det_ferm}
    P=\sum_{\ga=1}^K|\psi_\ga\ran\lan \psi_\ga|,\qquad \lan\psi_\ga|\psi_{\ga'}\ran=\de_{\ga\ga'},\quad\ga,\ga'=\overline{1,K},
\end{equation}
and
\begin{equation}
    \lan \ups_\al(t_{out}),\psi^{\text{part}}_\ga\ran=0,\qquad \lan \bar{u}_\al(t_{out}),\psi^{\text{antipart}}_\ga\ran=0,\quad\forall\al.
\end{equation}
In the basis of eigenfunctions of the Hamiltonian, we obtain
\begin{equation}
    P^{\text{part}}_{\al\be}=\sum_{\ga=1}^K\lan u_\al(t_{out}),\psi_\ga\ran\lan\psi_\ga,u_\be(t_{out})\ran,\qquad P^{\text{antipart}}_{\al\be}=\sum_{\ga=1}^K\lan \bar{\ups}_\al(t_{out}),\psi_\ga\ran\lan\psi_\ga,\bar{\ups}_\be(t_{out})\ran.
\end{equation}
Henceforth, we will denote these projectors as $P$. Besides, $\tilde{P}:=1-P$.

Consider the process of particle creation from the vacuum in the nonstationary background fields
\begin{equation}\label{inclus_process}
    0\rightarrow e_\ga+X,
\end{equation}
where $e_\ga$ denotes the particle in the state $\psi_\ga$, $\ga=\overline{1,K}$, and $X$ is for other particles. The average number of particles recorded by the detector at the instant of time $t_{out}$ is, by definition,
\begin{equation}\label{aver_numb}
\begin{split}
    N_D&=\lan vac,t_{in}|\hat{U}'^\La_{t_{in},t_{out}}\had(t_{out})P\ha(t_{out})\hat{U}'^\La_{t_{out},t_{in}}|vac,t_{in}\ran=\lan \overline{in}|\had(out)P\ha(out)|\overline{in}\ran=\\
    &=\lan vac,t_{in}|\hat{S}'^\La_{t_{in},t_{out}}\had(t_{in})P\ha(t_{in})\hat{S}'^\La_{t_{out},t_{in}}|vac,t_{in}\ran=\\
    &=\Sp\big[\Psi^\dag(t_{out}) P\Psi(t_{out})\big]+\bar{g}(t_{out})Pg(t_{out}).
\end{split}
\end{equation}
Hereinafter, $\ha$ denotes the annihilation operators irrespective of the sort of particles and $\had$ are the corresponding creation operators. The explicit expressions for $\Psi$ and $g$ are given in \eqref{phi_psi_expl} and \eqref{phi_psi_expl_ferm}. The average number of recorded particles, $N_D$, is finite if and only if $P\Psi$ is HS and $Pg$ is square-integrable. Before the removal of regularization, $N_D<\infty$. Moreover, the concrete examples show that the average number of particles can diverge in no-regularization limit only for infinitely large energies of particles or, in the massless case, for particle energies tending to zero. Here it is assumed that the particles are defined by means of diagonalization of the Hamiltonian of quantum fields evolving in the smooth background fields, the background fields tending sufficiently fast to zero at spatial infinity, for example, being with a compact support.

As has been already mentioned in the Introduction, we define particles as the perturbations of the corresponding vacuum state. In the expression \eqref{aver_numb}, this state is $|vac,t_{out}\ran$. To put it differently, according to this definition, the particles are just a convenient means to specify the state of a system of quantum fields in the Hilbert space. Since in \eqref{aver_numb} the finite time evolution is considered, such particles are often called virtual in the literature \cite{Heitl,GinzbThPhAstr}. We will not use this nomenclature as, for experiments lasting a finite interval of time, any particle is virtual, albeit with small virtuality, according to this terminology.

In order to find the probability of inclusive process \eqref{inclus_process}, recall that the operator
\begin{equation}
    :\exp(-\had P\ha):
\end{equation}
is the projector to the states of Fock space that do not contain (anti)particles in the states $\psi_\ga$. Then the probability of inclusive process \eqref{inclus_process} is equal to
\begin{equation}
    w_\La(D)=\lan vac,t_{in}|\hat{U}'^\La_{t_{in},t_{out}}\big[1-:\exp(-\had(t_{out}) P\ha(t_{out})):\big]\hat{U}'^\La_{t_{out},t_{in}}|vac,t_{in}\ran.
\end{equation}
It is clear that
\begin{equation}
    0\leqslant w_\La(D)<1.
\end{equation}
The probability $w_\La(D)\neq1$ as otherwise the vacuum-to-vacuum amplitude is zero (see \eqref{evol_symb}, \eqref{vacuum-vacuum}). The latter is impossible at a finite cutoff. Recall that we assume the operator $\Phi(t)$ is nondegenerate for fermions. However, it may happen that the probability of the inclusive process \eqref{inclus_process} becomes unity after the removal of regularization. This would obviously reveal the violation of unitarity in this limit. Below, we shall prove that, in no-regularization limit,
\begin{equation}
    w(D):=\lim_{\La\rightarrow\infty}w_\La(D)\in[0,1),
\end{equation}
where it is assumed that in this limit
\begin{enumerate}
  \item The operator $\Phi^\epsilon$ is bounded, where $\epsilon=\pm1$ distinguishes the statistics of the particles;
  \item The operator $P\Psi$ is HS and $Pg$ is square-integrable.
\end{enumerate}
Notice that the boundedness of $\Phi^{-\epsilon}$ follows from \eqref{canon_defn1}. If the condition 1 is satisfied, then $\Psi$ is bounded for both boson and fermions. Therefore, the operator $P\Psi$ is HS for the finite rank projectors $P$ of the form \eqref{proj_det_bos}, \eqref{proj_det_ferm}. Of course, $P\Psi$ can be HS in the case when $P$ is not a finite rank projector. As it was mentioned above, the concrete examples show that $P\Psi$ is HS and $Pg$ is square-integrable when $P$ projects to the closed domain of energies that does not include zero and infinity.

Let us write the probability in the form
\begin{equation}\label{w_la_d}
    w_\La(D)=\lan vac,t_{in}|\hat{S}'^\La_{t_{in},t_{out}}\big[1-:\exp(-\had(t_{in}) P\ha(t_{in})):\big]\hat{S}'^\La_{t_{out},t_{in}}|vac,t_{in}\ran=:1-\tilde{w}_\La(D).
\end{equation}
In order to calculate it, it is useful to pass into the Bargmann-Fock representation (see Appendix \ref{Symb_Evol_Oper_App}). Then
\begin{equation}\label{w_la_d1}
    \tilde{w}_\La(D)=\int D\bar{a}Da D\bar{a}'Da' e^{-\bar{a}a-\bar{a}'a'} \lan 0|\hat{S}'^\La_{t_{in},t_{out}}|a\ran\lan\bar{a}|:\exp(-\had(t_{in}) P\ha(t_{in})):|a'\ran\lan\bar{a}'|\hat{S}'^{\La}_{t_{out},t_{in}}|0\ran.
\end{equation}
The explicit expression for the matrix elements of the operator $\hat{S}'^\La_{t_{out},t_{in}}$ is presented in \eqref{evol_symb}. However, in our case, we may use formula \eqref{unit_trans} as the phase of the matrix element \eqref{evol_symb} does not contribute to \eqref{w_la_d}. One should also bear in mind that formula \eqref{unit_trans} gives the matrix element of the operator $\hat{S}'^{\La\dag}_{t_{out},t_{in}}$. As a result,
\begin{multline}\label{w_la_d_expl}
    \tilde{w}_\La(D)=(\det\Phi\Phi^\dag)^{-\epsilon/2}\exp\big\{\tfrac12\bar{g}\bar{X}\bar{g} +\tfrac12 gX^Tg -\bar{g}g\big\}\times\\
    \times\int D\bar{a}Da D\bar{a}'Da' \exp\big\{\tfrac{\epsilon}{2}\bar{a}'X^\dag\bar{a}' +\epsilon(g-\bar{g}X^\dag)\bar{a}' +\epsilon a(\bar{g}-Xg) +\tfrac{\epsilon}{2}aXa +\bar{a}\tilde{P}a' -\bar{a}a -\bar{a}'a' \big\},
\end{multline}
where
\begin{equation}
    X:=\bar{\Psi}\Phi^{-1}=\epsilon X^T=\epsilon (\Phi^{-1})^T \Psi^\dag.
\end{equation}
Under the above assumptions, the operator $X$ is bounded in no-regularization limit.

The functional integral \eqref{w_la_d_expl} is the Gaussian integral of the form \eqref{func_int_G} with
\begin{equation}
    B=
    \left[
      \begin{array}{cccc}
        -X & 0 & 1 & 0 \\
        0 & 0 & -\tilde{P}^T & 1 \\
        \epsilon & -\epsilon\tilde{P} & 0 & 0 \\
        0 & \epsilon & 0 & -X^\dag \\
      \end{array}
    \right],\qquad
    F=
    \left[
      \begin{array}{c}
        \epsilon(\bar{g}-Xg) \\
        0 \\
        0 \\
        g-\bar{X}\bar{g} \\
      \end{array}
    \right].
\end{equation}
Employing the formula for the inverse of a block matrix, it is not difficult to obtain
\begin{equation}
    B^{-1}=
    \left[
      \begin{array}{cccc}
        \epsilon\tP Y^{-1}X^\dag\tP^T & \epsilon\tP X^\dag (Y^{-1})^T & \epsilon(\tilde{Y}^{-1})^T & \epsilon\tP Y^{-1} \\
        \epsilon Y^{-1} X^\dag\tP^T & \epsilon Y^{-1}X^\dag & Y^{-1}X^\dag\tP^T X & \epsilon Y^{-1} \\
        \tilde{Y}^{-1} & \epsilon X\tP X^\dag(Y^{-1})^T & \epsilon\tilde{Y}^{-1}X & \epsilon X\tP Y^{-1} \\
        (Y^{-1})^T\tP^T & (Y^{-1})^T & \epsilon(Y^{-1})^T\tP^T X & \epsilon\tP^T X\tP Y^{-1} \\
      \end{array}
    \right],
\end{equation}
where
\begin{equation}
    Y:=1-\epsilon X^\dag\tP^T X\tP,\qquad \tilde{Y}:=1-\epsilon X\tP X^\dag\tP^T.
\end{equation}
Then the exponential factor in \eqref{w_la_d_expl}, \eqref{func_int_G} is written as
\begin{multline}\label{exp_factor}
    \exp\big\{\tfrac12\bar{g}\bar{X}\bar{g} +\tfrac12 gX^Tg -\bar{g}g+\tfrac{1}{2}F^T B^{-1}F \big\}=\\
    =\exp\big\{-\bar{g}Pg +\epsilon\bar{g}PX^\dag\tP^T\tilde{Y}^{-1}XP g +\tfrac{\epsilon}{2}\bar{g}P Y^{-1}X^\dag P^T\bar{g} +\tfrac{\epsilon}{2}gP^T\tilde{Y}^{-1}XP g \big\}.
\end{multline}
As for fermions, $w_\La(D)$ at $\eta=\bar{\eta}=0$, i.e., at $g=\bar{g}=0$, is only physically meaningful. Nevertheless, the expression for $w_\La(D)$ at nonzero sources is of some value since it can be used to find the probability of the inclusive process \eqref{inclus_process} in higher orders of the perturbation theory. Strictly speaking, in this case one has to suppose that the sources $(\eta, \bar{\eta})$ entering into the operators $\hat{S}'^\La_{t_{out},t_{in}}$ and $\hat{S}'^\La_{t_{in},t_{out}}$ in \eqref{w_la_d1} are different for each operator \cite{MartSchw,Keld64,GFSh.3,DeWGAQFT.11,CalzHu,GavrGit90}. This leads to obvious changes in formulas \eqref{w_la_d_expl} and \eqref{exp_factor}. We will not present this generalization here.

Let us prove that under the above assumptions the operators $Y^{-1}$ and $\tilde{Y}^{-1}$ are bounded in no-regularization limit. For fermions,
\begin{equation}
    Y^{-1}=1-X^\dag\tP^T X\tP(1+\tP X^\dag\tP^T X\tP)^{-1}.
\end{equation}
In virtue of the fact that the operator $\tP X^\dag\tP^T X\tP$ is positive definite, the operator
\begin{equation}
    (1+\tP X^\dag\tP^T X\tP)^{-1}
\end{equation}
is bounded and, consequently, so is $Y^{-1}$. Analogously one can prove that $\tilde{Y}^{-1}$ is bounded. As for bosons, we first note that it follows from \eqref{canon_defn1} and the boundedness of $\Phi$ that
\begin{equation}
    \|X^\dag X\|=\|X\|^2<1.
\end{equation}
Hence,
\begin{equation}
    \|\tP X^\dag\tP^T X\tP\|\leqslant\|X\|^2<1.
\end{equation}
Consequently, the operator
\begin{equation}
    (1-\tP X^\dag\tP^T X\tP)^{-1}
\end{equation}
is bounded. As long as
\begin{equation}
    Y^{-1}=1+X^\dag\tP^T X\tP(1-\tP X^\dag\tP^T X\tP)^{-1},\qquad \tilde{Y}^{-1}=1+X\tP(1-\tP X^\dag\tP^T X\tP)^{-1}X^\dag\tP^T,
\end{equation}
the operators $Y^{-1}$, $\tilde{Y}^{-1}$ are also bounded.

The preexponential factor stemming from \eqref{w_la_d_expl}, \eqref{func_int_G} is given by
\begin{equation}
    (\det\bar{\Phi}\Phi^T)^{-\epsilon/2}\big[\det(1-\epsilon\tP^T X\tP X^\dag)\big]^{-\epsilon/2}.
\end{equation}
The determinant can be written as
\begin{equation}\label{det1}
\begin{split}
    \det[\bar{\Phi}\Phi^T(1-\epsilon\tP^T X\tP X^\dag) ]&= \det[1+\epsilon\bar{\Phi}\Phi^T(P^TX\tP X^\dag +X P X^\dag) ]=\\
    &=\det[1+\epsilon\Phi^TP^T(\Phi^{-1})^T\Psi^\dag\tP\Psi +\epsilon\Psi^\dag P\Psi ]=: \det(1+\Omega),
\end{split}
\end{equation}
where we have use the property of the Fredholm determinant,
\begin{equation}
    \det(1+BG)=\det(1+GB),\qquad\|B\|<\infty,\quad\|G\|_1<\infty,
\end{equation}
and the relations \eqref{canon_defn1}. We cannot remove the regularization in \eqref{det1} as the second operator in the argument of the determinant in the last expression in \eqref{det1} is only HS and not trace-class. The last operator in the argument of the determinant is trace-class. Let us introduce the regularized Fredholm determinant \cite{GohbGoldKrup}
\begin{equation}\label{det2}
    \det(1+\Omega)=e^{\Sp\Omega} \sideset{}{_2}\det(1+\Omega)=e^{\epsilon\Sp(2\Psi^\dag P\Psi-\Phi^TP^T(\Phi^{-1})^T\Psi^\dag P\Psi)}
    \sideset{}{_2}\det(1+\Omega),
\end{equation}
where we have used the relations \eqref{canon_defn1} and the properties of the trace. The regularized Hilbert-Carleman determinant is uniquely defined when $\Omega$ is HS. The operator under the trace sign on the right-hand side of \eqref{det2} remains trace-class in no-regularization limit. Therefore, the right-hand side of \eqref{det2} is well-defined after the removal of regularization. Henceforward, $\det(1+\Omega)$ means the right-hand side of \eqref{det2}. Notice that due to nondegeneracy of $\bar{\Phi}$ and $\tilde{Y}^\dag$ the determinant \eqref{det1}, \eqref{det2} does not vanish.

Thus we obtain the well-defined expression for $\tilde{w}(D)\in(0,1]$ in no-regularization limit. The point $\tilde{w}(D)=0$ is excluded, because the expression in the exponent \eqref{exp_factor} is finite and the determinant \eqref{det2} is not zero or infinity. Hence,
\begin{equation}\label{prob_inclus}
\begin{split}
    w(D)=\,&1-[\det(1+\Omega)]^{-\epsilon/2}\times\\
    &\times\exp\big\{-\bar{g}Pg +\epsilon\bar{g}PX^\dag\tP^T\tilde{Y}^{-1}XP g +\tfrac{\epsilon}{2}\bar{g}P Y^{-1}X^\dag P^T\bar{g} +\tfrac{\epsilon}{2}gP^T\tilde{Y}^{-1}XP g \big\},
\end{split}
\end{equation}
and $w(D)\in[0,1)$ provided the conditions 1 and 2 above are fulfilled. In the case when the particle creation is small, viz.,
\begin{equation}
    g\sim\e,\qquad\Psi\sim\e,\qquad\Phi\Phi^\dag=\Phi^\dag \Phi=1+O(\e^2),
\end{equation}
where $\e$ is some small parameter, then, in the leading order in $\e$,
\begin{equation}
    w(D)\approx N_D-\frac12\Sp\big[\Phi^TP^T(\Phi^{-1})^T\Psi^\dag P\Psi \big]\leqslant N_D.
\end{equation}
In the particular case, $\Psi=0$, we arrive at the formula (52) of \cite{BKL5}.

\section{Examples}\label{Examples}
\subsection{Boson field with a classical source}\label{Bos_w_Clas_Sour}

To display the formalism developed above, we shall consider, as the simplest example, the quadratic theory of a neutral boson field with a classical source and a stationary quadratic part, i.e., we suppose that the Hamiltonian of the theory has the form \eqref{Hamilt_quadr} and
\begin{equation}\label{static_qdr_part}
    \partial_t\omega_\al(t)=0,\qquad\partial_t\ups_\al(t)=0.
\end{equation}
Then the regularized Hamiltonian \eqref{Hamilt_reg1} is given by
\begin{equation}\label{Hamilt_reg_exmpl}
    \hat{H}'_\La=\frac12 \de\hZ^A_t H^\La_{AB}\de\hZ^B_t -\frac12 K_A H^{AB}_\La K_B -\de\hZ^A_t \tilde{J}^\La_{AB}[H^{-1}\dot{K}]^B,
\end{equation}
where
\begin{equation}
\begin{split}
    H^\La_{AB}&=H_{AB'}J^{B'}_{\La B}=-\sum_\al\omega_\al P^\La_{\al\al} J_{AA'} \ups_\al^{(A'}\bar{\ups}_\al^{B')}J_{B'B},\\ H_\La^{AB}&=J^{A}_{\La A'} H^{A'B}=\sum_\al\omega_\al^{-1}P^\La_{\al\al}\ups_\al^{(A}\bar{\ups}_\al^{B)}.
\end{split}
\end{equation}
The last contribution in \eqref{Hamilt_reg_exmpl} is the counterdiabatic term. It disappears in no-regularization limit. The creation-annihilation operators in the Fock spaces $F_t$ are related by the Bogolyubov transform \eqref{Bogolyub_trn} with
\begin{equation}
    F_{\al\be}=\de_{\al\be},\qquad G_{\al\be}=0,\qquad h_\al(t,t_{in})=i\omega_\al^{-1}\bar{\ups}_\al^A [K_A(t)-K_A(t_{in})].
\end{equation}
This canonical transform is unitary in one Fock space if and only if  $h_\al$ is square-integrable.

The regularized commutator Green's function \eqref{comm_Gren_fnc} takes the form
\begin{equation}
\begin{split}
    \tilde{G}_\La^{AB}(t,t')&=\sum_\al\big[ \ups_\al^A \bar{\ups}_\al^B e^{-i\omega_\al P^\La_{\al\al}(t-t')} -\bar{\ups}_\al^A \ups_\al^B e^{i\omega_\al P^\La_{\al\al}(t-t')} \big]=\\
    &=\sum_\al P^\La_{\al\al}\big[ \ups_\al^A  \bar{\ups}_\al^B e^{-i\omega_\al(t-t')} -\bar{\ups}_\al^A \ups_\al^B e^{i\omega_\al(t-t')} \big]+i\tilde{J}^{AB}_\La.
\end{split}
\end{equation}
In particular,
\begin{equation}\label{Gr_fnc_reg}
    \tilde{G}_\La^{AB'}(t,t')J^\La_{B'B}=\tilde{G}^{AB'}(t,t')J^\La_{B'B},\qquad J^\La_{AA'}\tilde{G}_\La^{A'B}(t,t')=J^\La_{AA'}\tilde{G}^{A'B}(t,t'),
\end{equation}
where $\tilde{G}^{AB}(t,t')$ is the commutator Green's function in no-regularization limit. Also we shall need the symmetric Green's function
\begin{equation}
    \bar{G}^{AB}_\La(t,t')=-\frac{i}{2}\sgn(t-t')\tilde{G}_\La^{AB}(t,t'),
\end{equation}
the positive-frequency Green's function
\begin{equation}
\begin{gathered}
    G^{(+)AB}_\La(t,t'):=-i\lan in|\big[\hZ^A(t)-\lan \hZ^A(t)\ran\big] \big[\hZ^A(t')-\lan \hZ^A(t')\ran\big]|in\ran=-i\sum_\al \ups_\al^A \bar{\ups}_\al^B e^{-i\omega_\al P^\La_{\al\al}(t-t')},\\
    \lan \hZ^A(t)\ran:=\lan in| \hZ^A(t)|in\ran,
\end{gathered}
\end{equation}
the Hadamard function
\begin{equation}
    G^{(1)AB}_\La(t,t')=i(G^{(+)AB}_\La(t,t')-\bar{G}^{(+)AB}_\La(t,t'))= \sum_\al\big[ \ups_\al^A \bar{\ups}_\al^B e^{-i\omega_\al P^\La_{\al\al}(t-t')} +\bar{\ups}_\al^A \ups_\al^B e^{i\omega_\al P^\La_{\al\al}(t-t')} \big],
\end{equation}
and the Feynman propagator
\begin{equation}
\begin{split}
    G^{AB}_\La(t,t')&:=-i\lan in|T\{\big[\hZ^A(t)-\lan \hZ^A(t)\ran\big] \big[\hZ^A(t')-\lan \hZ^A(t')\ran\big]\}|in\ran=\\
    &=\bar{G}^{AB}_\La(t,t')-\frac{i}2 G^{(1)AB}_\La(t,t')=\\
    &=-i\sum_\al\big[\theta(t-t')\ups_\al^A \bar{\ups}_\al^B e^{-i\omega_\al P^\La_{\al\al}(t-t')} +\theta(t'-t)\bar{\ups}_\al^A \ups_\al^B e^{i\omega_\al P^\La_{\al\al}(t-t')} \big].
\end{split}
\end{equation}
All these Green's functions satisfy the relations of the form \eqref{Gr_fnc_reg}.

Now we find the matrix elements of the evolution operator $\hat{S}'^\La_{t,t_{in}}$, the average number of particles \eqref{aver_numb} recorded by the detector, and the probability of the inclusive process \eqref{inclus_process}. The general formulas \eqref{phi_psi_expl}, \eqref{dfchi}, and \eqref{CA_energy} imply
\begin{equation}\label{dressed_part}
\begin{gathered}
    C_{\al\be}=\omega_\al P^\La_{\al\be},\qquad A_{\al\be}=0,\qquad f_\al(t)=-\omega_\al^{-1}P^\La_{\al\al}\bar{\ups}_\al^A\dot{K}_A(t),\\
    \Phi_{\al\be}(t)=P^\La_{\al\be}e^{-i\omega_\al(t-t_{in})}+\tilde{P}^\La_{\al\be}=(\Phi^\dag)^{-1}_{\al\be}(t)=(R_{t,t_{in}})_{\al\be}, \qquad\Psi_{\al\be}(t)=0,\\
    g_\al(t)=-i\int_{t_{in}}^t d\tau e^{-i\omega_\al(t-\tau)} f_{\al}(\tau)=i \int_{t_{in}}^t d\tau e^{-i\omega_\al(t-\tau)} \omega_\al^{-1}P^\La_{\al\al}\bar{\ups}_\al^A\dot{K}_A,
\end{gathered}
\end{equation}
where $R_{t,t_{in}}$ is the operator from the theorem \ref{evol_symb_thm}. In the case at hand, the determinant appearing in \eqref{gandc} is equal to unity. The operator $\Phi$ is bounded in no-regularization limit. The expression for $d(t)$ is the same as in the general case \eqref{dfchi}. Besides,
\begin{equation}
\begin{split}
    \chi_\al(t)&=-i\int_{t_{in}}^t d\tau e^{-i\omega_\al(t_{in}-\tau)} f_{\al}(\tau)=i \int_{t_{in}}^t d\tau e^{-i\omega_\al(t_{in}-\tau)} \omega_\al^{-1}P^\La_{\al\al}\bar{\ups}_\al^A\dot{K}_A,\\
    (\Phi^\dag)^{-1}(t)\chi(t)&=g(t),
\end{split}
\end{equation}
and
\begin{equation}\label{I_dr}
\begin{split}
    I:=-\int_{t_{in}}^{t_{out}} d t \bar{f}(\Phi^\dag)^{-1}\chi&= i\int_{t_{in}}^{t_{out}}dt\int_{t_{in}}^t d\tau \sum_\al\bar{f}_\al(t) f_\al(\tau) e^{-i\omega_\al(t-\tau)}=\\
    &=\int_{t_{in}}^{t_{out}}dt\int_{t_{in}}^t d\tau [H^{-1}_\La \dot{K}]^A(t) G^{(+)}_{AB}(t,\tau) [H^{-1}_\La \dot{K}]^B(\tau),
\end{split}
\end{equation}
where $G^{(+)}_{AB}:=J_{AC}G^{(+)CD}J_{DB}$. The last expression can be written in terms of the Feynman propagator
\begin{equation}
    I=\frac12\int_{t_{in}}^{t_{out}}dt d\tau [H^{-1}_\La \dot{K}]^A(t) G_{AB}(t,\tau) [H^{-1}_\La \dot{K}]^B(\tau).
\end{equation}
Separating the real and imaginary parts, we have
\begin{equation}
\begin{split}
    \frac12&\int_{t_{in}}^{t_{out}}dt d\tau [H^{-1}_\La \dot{K}]^A(t) \bar{G}_{AB}(t,\tau) [H^{-1}_\La \dot{K}]^B(\tau)-\\ &-\frac{i}{4}\int_{t_{in}}^{t_{out}}dt d\tau [H^{-1}_\La \dot{K}]^A(t) G^{(1)}_{AB}(t,\tau) [H^{-1}_\La \dot{K}]^B(\tau).
\end{split}
\end{equation}
Whence
\begin{equation}
    I=\frac12\int_{t_{in}}^{t_{out}}dt d\tau [H^{-1}_\La \dot{K}]^A(t) \bar{G}_{AB}(t,\tau) [H^{-1}_\La \dot{K}]^B(\tau)+\frac{i}{2}\sum_\al |g_\al(t_{out})|^2.
\end{equation}
Substituting all these expressions into the general formula \eqref{evol_symb}, we obtain the matrix element of the evolution operator $\hat{S}'^\La_{t,t_{in}}$ in the Bargmann-Fock representation
\begin{equation}\label{evol_dr}
    \tilde{U}_{t_{out},t_{in}}=\exp\Big\{ \bar{a} \Phi(t_{out}) a
    +\bar{a}g(t_{out})-a\bar{\chi}(t_{out}) +iI -i\int_{t_{in}}^{t_{out}} d td(t)\Big\}.
\end{equation}
The average number of particles \eqref{aver_numb} recorded by the detector is
\begin{equation}
    N_D=\bar{g}(t_{out})Pg(t_{out}).
\end{equation}
In particular, the average number of particles created from the vacuum becomes (cf. formula (41.3) of \cite{LandLifshQM.11})
\begin{equation}\label{numb_part_dr}
    N=\sum_\al|g_\al(t_{out})|^2=2\im I.
\end{equation}
The probability \eqref{prob_inclus} of the inclusive process \eqref{inclus_process} is written as \cite{BKL5}
\begin{equation}\label{inclus_process_prob}
    w(D)=1-e^{-N_D}.
\end{equation}

The quantity
\begin{equation}\label{c_bos}
    c(t_{out})=\tilde{U}_{t_{out},t_{in}}\big|_{a=\bar{a}=0}=\exp\Big\{iI -i\int_{t_{in}}^{t_{out}} d td(t)\Big\}
\end{equation}
is the generating functional of free Green's functions. Let us show that
\begin{equation}\label{Grn_fnc_generat}
    \frac{\de^2\ln c(t_{out})}{\de K_A(t_1)\de K_B(t_2)}=-iJ^A_{\La A'}G^{A'B}(t_1,t_2),
\end{equation}
for $t_{1,2}\in(t_{in},t_{out})$ and $t_1\neq t_2$. Indeed, integrating by parts, we have
\begin{equation}\label{iI}
\begin{split}
    iI=\,&-\frac{i}{2}\int_{t_{in}}^{t_{out}}dt d\tau K_A(t)J^A_{\La A'}G^{A'B}(t,\tau)K_B(\tau) -\frac{i}{2}\int_{t_{in}}^{t_{out}} dtK_A(t) H^{AB}_\La K_B(t)+\\
    &+\frac{i}{2}\int_{t_{in}}^{t_{out}} dt[H^{-1} \dot{K}(t)]^A J^\La_{AB}[H^{-1} K(t)]^B-\\
    &-i\int_{t_{in}}^{t_{out}} dtK_A(t)G^{A}_{\ B}(t,\tau)[H^{-1}_\La K(\tau)]^B\big|^{\tau=t_{out}}_{\tau=t_{in}}-\\
    &-i[H^{-1}_\La K(t_{out})]^A G_{AB}(t_{out},t_{in})[H^{-1}_\La K(t_{in})]^B-\\
    &-\frac12\sum_\al P^\La_{\al\al}\omega_\al^{-2}\big(|\bar{\ups}^A_\al K_A(t_{out})|^2 +|\bar{\ups}^A_\al K_A(t_{in})|^2\big).
\end{split}
\end{equation}
The second term on the first line on the right-hand side of the equality is canceled out by the same term contained in $d(t)$ (see \eqref{dfchi}, \eqref{c_bos}). It is easy to see that \eqref{Grn_fnc_generat} holds.

For comparison we present here the expression for the matrix element of the evolution operator $\hat{U}_{t_{out},t_{in}}$ that is obtained without the Hamiltonian diagonalization procedure, i.e., written in terms of the creation-annihilation operators diagonalizing the Hamiltonian \eqref{Hamilt_quadr} without the source $K_A$. For brevity, we will refer to the particles associated with these creation-annihilation operators as the bare ones, whereas the particles associated with the creation-annihilation operators $\ha(t_{out})$, $\had(t_{out})$ will be called dressed. Supposing that \eqref{static_qdr_part} is fulfilled, we deduce
\begin{equation}\label{bare_part}
\begin{gathered}
    C^b_{\al\be}=\omega_\al \de_{\al\be},\qquad A^b_{\al\be}=0,\qquad f^b_\al(t)=i\bar{\ups}_\al^A K_A(t),\\
    \Phi^b_{\al\be}(t)=\de_{\al\be}e^{-i\omega_\al(t-t_{in})}=(\Phi^{b\dag})^{-1}_{\al\be}(t)=(R^b_{t,t_{in}})_{\al\be}, \qquad\Psi^b_{\al\be}(t)=0,\\
    g^b_\al(t)=-i\int_{t_{in}}^t d\tau e^{-i\omega_\al(t-\tau)} f^b_{\al}(\tau)=\int_{t_{in}}^t d\tau e^{-i\omega_\al(t-\tau)} \bar{\ups}_\al^A K_A.
\end{gathered}
\end{equation}
Also
\begin{equation}\label{bare_part1}
\begin{gathered}
    \chi_\al^b(t)=-i\int_{t_{in}}^t d\tau e^{-i\omega_\al(t_{in}-\tau)} f^b_{\al}(\tau),\qquad (\Phi^{b\dag})^{-1}(t)\chi^b(t)=g^b(t),\qquad d^b=\sum_\al \frac12\omega_\al,\\
    iI^b= -\int_{t_{in}}^{t_{out}}dt\int_{t_{in}}^t d\tau \sum_\al\bar{f}_\al(t) f_\al(\tau) e^{-i\omega_\al(t-\tau)}=-\frac{i}{2}\int_{t_{in}}^{t_{out}}dt d\tau K_A(t)G^{AB}(t,\tau)K_B(\tau).
\end{gathered}
\end{equation}
Therefore,
\begin{equation}\label{evol_bare}
    \tilde{U}^b_{t_{out},t_{in}}=\exp\Big\{ \bar{a} \Phi^b(t_{out}) a
    +\bar{a}g^b(t_{out})-a\bar{\chi}^b(t_{out}) +iI^b -i \int^{t_{out}}_{t_{in}} dtd^b\Big\}.
\end{equation}
We shall consider the connection between bare and dressed particles in more detail below in discussing QED with a classical current. Here we only note that if $\bar{\ups}_\al^AK_A(t)$ tends to zero at $|t|\rightarrow\infty$ for all $\al$ then, after the removal of regularization,
\begin{equation}
    g_\al = g_\al^b,
\end{equation}
for $t_{in}\rightarrow-\infty$, $t_{out}\rightarrow\infty$. In this case, the average number of created particles, the average number of particles, recorded by the detector, and the probability of the inclusive process  \eqref{inclus_process} are the same for both the bare and dressed particles.

When $t_{in}$, $t_{out}$ are finite, this is not the case. If the source $K_A(t)$ is a sufficiently smooth function of $t$, i.e., for large $\omega_\al$, we have
\begin{equation}\label{current_tderiv}
    |\bar{\ups}_\al^A K_A|_{UV}\sim |\bar{\ups}_\al^A \dot{K}_A|_{UV},
\end{equation}
then the average number of bare particles created with high energies
\begin{equation}
    \sum_{\al\in UV}|g^b_\al(t_{out})|^2
\end{equation}
behaves worse than \eqref{numb_part_dr} in no-regularization limit. From \eqref{dressed_part}, \eqref{bare_part} we see that in a general position (cf. the asymptotics of (88) and (92) in \cite{uqem})
\begin{equation}\label{bare_dress}
    |g_\al(t_{out})|^2_{UV}\sim\omega_\al^{-2} |g^b_\al(t_{out})|^2_{UV}.
\end{equation}
The representation of the algebra of observables in the Hilbert bundle of Fock spaces defined by means of the Hamiltonian diagonalization procedure improves the ultraviolet behavior of the average number of particles \cite{Shirok} by the two powers of energy.

In the infrared limit, for massless particles and finite $t_{in}$, $t_{out}$, the situation is opposite. As long as the relation \eqref{bare_dress} holds, the average number of bare particles created at small energies behaves better than \eqref{numb_part_dr} in no-regularization limit. It is not hard to find the infrared asymptotics of the expressions entering into \eqref{evol_dr}, \eqref{evol_bare}. Taking into account that
\begin{equation}
    \sum_\al\sim\int d\spp,\qquad\ups_\al\sim\omega_\al^{-1/2}=|\spp|^{-1/2},
\end{equation}
it follows from \eqref{I_dr} or \eqref{iI} for dressed particles in no-regularization limit
\begin{equation}\label{IR_dr}
    \Big(iI-i\int_{t_{in}}^{t_{out}} d td(t)\Big)_{IR}=-\frac12\sum_{\al\in IR} \Big\{\Big|\omega_\al^{-1}\bar{\ups}_\al^AK_A\big|_{t_{in}}^{t_{out}}\Big|^2 +\int_{t_{in}}^{t_{out}} \frac{dt}{\omega_\al^2} \dot{K}_A(t) \ups_\al^{[A}\bar{\ups}_\al^{B]} K_B(t) \Big\},
\end{equation}
where the summation is carried out over the quantum numbers $\al$ with the energies $\omega_\al$ much less than all other energy scales and it is assumed that $\omega_\al(t_{out}-t_{in})\ll1$. The last condition implies that the radiation is not formed at the energies $\omega_\al$. The second term in \eqref{IR_dr} is negligibly small in comparison with the first one as the mode functions corresponding to zero energy can always be chosen real-valued (see \eqref{eigen_prblm}), and so
\begin{equation}
    \ups_\al^{[A}\bar{\ups}_\al^{B]}=o(\omega_\al^{-1}),
\end{equation}
for $\omega_\al\rightarrow0$. As a result,
\begin{equation}
    \Big(iI-i\int_{t_{in}}^{t_{out}} d td(t)\Big)_{IR}=-\frac12\sum_{\al\in IR} \Big|\omega_\al^{-1}\bar{\ups}_\al^AK_A\big|_{t_{in}}^{t_{out}}\Big|^2.
\end{equation}
The infrared asymptotics of created dressed particles is written as
\begin{equation}
    N_{IR}=\sum_{\al\in IR} |g_\al(t_{out})|^2= \sum_{\al\in IR} \Big|\omega_\al^{-1}\bar{\ups}_\al^AK_A\big|_{t_{in}}^{t_{out}}\Big|^2= \sum_{\al\in IR}|h_\al(t_{out},t_{in})|^2.
\end{equation}
Therefore, the dynamics in the infrared region are unitary if and only if the canonical transforms \eqref{Bogolyub_trn} define the unitary transforms in one Fock space in the infrared region. In other words, the use of Hilbert bundle of Fock spaces defined by the Hamiltonian diagonalization does not improve the infrared behavior of dynamics of a massless field. From physical point of view, this fact is not a trouble as one can always suppose that the system at issue is confined into a large box.

As for bare particles, we have
\begin{equation}
    iI^b_{IR}=-\frac12\sum_{\al\in IR}\Big|\int_{t_{in}}^{t_{out}}dt\bar{\ups}_\al^AK_A(t) \Big|^2,
\end{equation}
and
\begin{equation}
    N^b_{IR}=\sum_{\al\in IR} |g^b_\al(t_{out})|^2= \sum_{\al\in IR}\Big|\int_{t_{in}}^{t_{out}}dt\bar{\ups}_\al^AK_A(t) \Big|^2.
\end{equation}
For the space dimension $d\geqslant2$, the number of particles $N_{IR}$ is finite for smooth sources $K_A(t)$ tending sufficiently fast to zero at spatial infinity.

\subsection{Quantum electrodynamics with a classical current}\label{QED_w_Clas_Sour}

Let us apply the above general formulas to QED with a classical current in the Minkowski spacetime in the inertial reference frame \cite{Shirok,BlNord37,Glaub51,SchwinS,JauRohr,BaiKatFad,GinzbThPhAstr,GavrGit90}. The Minkowski metric is
\begin{equation}
    \eta_{\mu\nu}=diag(-1,1,1,1).
\end{equation}
The Hamiltonian of the electromagnetic field in the Coulomb gauge reads as (see, e.g., \cite{WeinbergB.12,GinzbThPhAstr,Heitl})
\begin{equation}\label{QED_Hamilt}
    \hat{H}=\int d\spx\big[\frac{1}{2}\hat{\pi}^2_i +\frac12 \hat{A}_i\rot^2_{ij}\hat{A}_j +\hat{A}_i j^i_\perp\big]+V_{\text{Coul}},\qquad V_{\text{Coul}}=-\frac{1}{2}j^0\Delta^{-1}j^0,
\end{equation}
where $j^\mu(x)$ is the conserved classical current,
\begin{equation}
    \partial_\mu j^\mu(x)=0,
\end{equation}
and
\begin{equation}
    j^i_\perp=j^i-\partial_i\De^{-1}\partial_j j^j=:\de_{\perp j}^ij^j.
\end{equation}
In the Coulomb gauge,
\begin{equation}
    \partial_i\hat{\pi}_i=0,\qquad \partial_i\hat{A}_i=0.
\end{equation}
The canonical commutation relations are
\begin{equation}
    [\hat{A}_i(\spx),\hat{\pi}_j(\spy)]=i\de^\perp_{ij}(\spx-\spy).
\end{equation}
Using the notation from \eqref{Hamilt_quadr}, we have
\begin{equation}
\begin{gathered}
    \hat{Z}^A=
    \left[
      \begin{array}{c}
        \hat{A}_i(\spx) \\
        \hat{\pi}_i(\spx) \\
      \end{array}
    \right],\qquad
    K_A=
    \left[
      \begin{array}{c}
        j^i(x) \\
        0 \\
      \end{array}
    \right],\qquad
    H_{AB}=
    \left[
      \begin{array}{cc}
        \rot^2_{ij} & 0 \\
        0 & \de^\perp_{ij} \\
      \end{array}
    \right],\\
    J_{AB}=
    \left[
      \begin{array}{cc}
        0 & -1 \\
        1 & 0 \\
      \end{array}
    \right]\de^\perp_{ij},\qquad
    J^{AB}=
    \left[
      \begin{array}{cc}
        0 & 1 \\
        -1 & 0 \\
      \end{array}
    \right]\de^\perp_{ij}.
\end{gathered}
\end{equation}
Introducing the notation for the components of the mode functions,
\begin{equation}
    \ups^A_\al=
    \left[
      \begin{array}{c}
        u^i_\al(\spx) \\
        w^i_\al(\spx) \\
      \end{array}
    \right],
\end{equation}
the complete set of solutions \eqref{eigen_prblm}, \eqref{orth_complt} can be taken in the form of plane waves
\begin{equation}
    \mathbf{u}_\al=\frac{\spe_{(s)}(\spk)}{\sqrt{2|\spk|V}}e^{i\spk\spx},\qquad \mathbf{w}_\al=-i \sqrt{\frac{|\spk|}{2V}} \spe_{(s)}(\spk)e^{i\spk\spx},\qquad\omega_\al=|\spk|,\qquad\sum_\al\equiv\sum_s\int\frac{Vd\spk}{(2\pi)^3},
\end{equation}
where $V$ is the normalization volume,  $\al=(s,\spk)$, $s=\overline{1,2}$, and
\begin{equation}
    \spe_{(s)}(\spk)\spk=0,\qquad\sum_s e^{(s)}_i(\spk)\bar{e}^{(s)}_j(\spk)=\de_{ij}-k_ik_j/\spk^2=\de_{ij}^\perp.
\end{equation}
Then
\begin{equation}
    H^{AB}K_B(t)=\int\frac{d\spk}{(2\pi)^3\spk^2}
    \left[
      \begin{array}{c}
        e^{-i\spk\spx}j^i_\perp(t,\spk) \\
        0 \\
      \end{array}
    \right],
\end{equation}
where
\begin{equation}
    j^i(t,\spk):=\int d\spx e^{i\spk\spx} j^i(t,\spx),\qquad \bar{j}^i(t,\spk)=j^i(t,-\spk).
\end{equation}
In particular, for charged point particles
\begin{equation}\label{curr_part}
    j^i(t,\spk)=\sum_n e_n\beta^i_n(t) e^{i\spk\spx_n(t)},\qquad j^0(t,\spk)=\sum_n e_n e^{i\spk\spx_n(t)},
\end{equation}
where $e_n$ is the charge of the $n$-th particle and $\beta^i_n$ is its velocity. The Schr\"{o}dinger field operators \eqref{deZ} are written as
\begin{equation}\label{deZ_qed}
    \hat{A}_i(\spx)=\de\hat{A}_{ti}(\spx)-\int\frac{d\spk}{(2\pi)^3\spk^2}e^{-i\spk\spx}j^\perp_{i}(t,\spk).
\end{equation}
The last term is nothing but the Biot-Savart field \cite{LandLifshCTF} produced by the current $j^\perp_{i}$. At large distances from the source, in the wave zone, this contribution tends to zero and the operators $\hat{A}_i(\spx)$ and $\de\hat{A}_{ti}(\spx)$ coincide.

The creation-annihilation operators $\ha_\al(t)$, $\had_\al(t)$ at different times are related by the canonical transform \eqref{Bogolyub_trn} with
\begin{equation}
    F_{\al\be}=\de_{\al\be},\qquad G_{\al\be}=0,\qquad h_\al(t,t_{in})=\frac{i}{\omega_\al}\frac{\bar{e}_i^{(s)}(\spk)[\bar{j}^i(t,\spk)-\bar{j}^i(t_{in},\spk)]}{\sqrt{2\omega_\al V}}.
\end{equation}
Therefore,
\begin{equation}
    \sum_\al|h_\al(t,t_{in})|^2=\int\frac{d\spk}{(2\pi)^3}\frac{\de^\perp_{ij}}{2|\spk|^3}[\bar{j}^i(t,\spk)-\bar{j}^i(t_{in},\spk)] [j^j(t,\spk)-j^j(t_{in},\spk)].
\end{equation}
In particular, the relation between the annihilation operators of bare and dressed particles is
\begin{equation}\label{bare_dress_qed}
    \ha_\al(t)=\hat{b}_\al+\frac{i}{\omega_\al}\frac{\bar{e}_i^{(s)}(\spk)\bar{j}^i(t,\spk)}{\sqrt{2\omega_\al V}},
\end{equation}
where $\hat{b}_\al$ are the annihilation operators of bare photons. Taking into account that
\begin{equation}\label{em_field}
\begin{split}
    \de\hat{A}_{ti}(\spx)&=-i\sum_s\int\frac{d\spk}{(2\pi)^3}\sqrt{\frac{V}{2|\spk|}} \big[e^{(s)}_i(\spk)e^{i\spk\spx}\ha_{(s)}(\spk;t) -\bar{e}^{(s)}_i(\spk)e^{-i\spk\spx}\had_{(s)}(\spk;t)\big],\\
    \hat{A}_{i}(\spx)&=-i\sum_s\int\frac{d\spk}{(2\pi)^3}\sqrt{\frac{V}{2|\spk|}} \big[e^{(s)}_i(\spk)e^{i\spk\spx}\hat{b}_{(s)}(\spk) -\bar{e}^{(s)}_i(\spk)e^{-i\spk\spx}\hat{b}^\dag_{(s)}(\spk)\big],
\end{split}
\end{equation}
it follows from \eqref{deZ_qed} that the excitations of the quantum electromagnetic field described by bare and dressed photons almost coincide in the wave zone. The canonical transform \eqref{bare_dress_qed} is unitary provided
\begin{equation}
    \int\frac{d\spk}{(2\pi)^3} \frac{\bar{j}_i^\perp(t,\spk) j_i^\perp(t,\spk)}{2|\spk|^3}<\infty,
\end{equation}
i.e., when the average number of bare photons in the Biot-Savart field is finite (see \eqref{deZ_qed}). Notice that in the case of a stationary current, $\dot{j}^\perp_i(t)=0$, it is the states of the Fock basis constructed by the use of the operators $\hat{a}^\dag_\al(t)$ acting on their vacuum, which are stationary. To put it differently, in this case the stable particles are the dressed photons rather than the bare ones. In the stationary case at a finite temperature, the dressed photons, and not the bare ones, are distributed over the energies in accordance with the Bose-Einstein distribution. On the other hand, the bare photons enter into the decomposition of the quantum electromagnetic field \eqref{em_field} and, in this sense, it is these particles which interact with other fields in the theory that are not included into the Hamiltonian \eqref{QED_Hamilt}. The shift \eqref{bare_dress_qed} results in that the other fields of the theory interact with the classical Biot-Savart field plus the quantum perturbations described by the dressed photons.

Introducing the regularization as in \eqref{Hamilt_reg_exmpl} and using the general formulas \eqref{dressed_part}, we come to
\begin{equation}\label{dressed_part_qed}
\begin{gathered}
    f_\al(t)=-P^\La_{\al\al}\frac{\bar{e}_i^{(s)}\dot{\bar{j}}^i(t,\spk)}{\sqrt{2V}|\spk|^{3/2}},\qquad
    g_\al(t)=i\int_{t_{in}}^t d\tau e^{-i|\spk|(t-\tau)} P^\La_{\al\al}\frac{\bar{e}_i^{(s)}\dot{\bar{j}}^i(t,\spk)}{\sqrt{2 V}|\spk|^{3/2}},\\
    d(t)=\int\frac{d\spk}{(2\pi)^3}P^\La_{\al\al}\Big[ V|\spk| +\frac{|j_0(t,\spk)|^2-|j^\perp_i(t,\spk)|^2}{2\spk^2}\Big].
\end{gathered}
\end{equation}
The expressions for the operators $C$, $A$, $\Phi$, and $\Psi$ are the same as in \eqref{dressed_part}. Besides,
\begin{equation}
    \chi_\al(t)=i\int_{t_{in}}^t d\tau e^{-i|\spk|(t_{in}-\tau)} P^\La_{\al\al}\frac{\bar{e}_i^{(s)}\dot{\bar{j}}^i(t,\spk)}{\sqrt{2 V}|\spk|^{3/2}},
\end{equation}
and
\begin{equation}\label{I_qed}
\begin{split}
    I=\,&i\int_{t_{in}}^{t_{out}}dt\int_{t_{in}}^t d\tau \int\frac{d\spk}{(2\pi)^3}P^\La_{\al\al} \frac{\dot{j}^\perp_i(t,\spk) \dot{\bar{j}}^\perp_i(\tau,\spk)}{2|\spk|^3} e^{-i|\spk|(t-\tau)}=
    \frac{i}{4} \int_{t_{in}}^{t_{out}}dt d\tau\sgn(t-\tau)\times\\
    &\times\int\frac{d\spk}{(2\pi)^3} \frac{P^\La_{\al\al}}{2|\spk|^3} \big[\dot{j}^\perp_i(t,\spk) \dot{\bar{j}}^\perp_i(\tau,\spk)e^{-i|\spk|(t-\tau)} -\dot{\bar{j}}^\perp_i(t,\spk) \dot{j}^\perp_i(\tau,\spk)e^{i|\spk|(t-\tau)}\big]+\\
    &+\frac{i}{2}\int\frac{d\spk}{(2\pi)^3}\frac{P^\La_{\al\al}}{2|\spk|^3} \Big|\int_{t_{in}}^{t_{out}}dt e^{i|\spk|t}\dot{j}^\perp_i(t,\spk) \Big|^2.
\end{split}
\end{equation}
Substituting these expressions into \eqref{evol_dr}, we obtain the matrix element of the evolution operator $\hat{S}'^\La_{t,t_{in}}$. The average number of photons recorded by the detector and the probability of the inclusive process \eqref{inclus_process} are given by the formulas \eqref{numb_part_dr}, \eqref{inclus_process_prob}.

Let us provide the physical interpretation to the derived formulas. The first term in $d(t)$ is the energy of vacuum fluctuations. The second term in $d(t)$ is the energy of a Coulomb interaction. The third term in $d(t)$ is the energy of interaction due to the Biot-Savart field. This quantity is negative (see, e.g., \cite{LandLifshECM}) as it includes not only the energy of the magnetic field but also the energy of interaction of this field with the current. The average number of created dressed photons is (\cite{Shirok}, see also formula (41.3) of \cite{LandLifshQM.11})
\begin{equation}\label{aver_numb_dr_qed}
    N=\sum_\al|g_\al(t_{out})|^2=\int\frac{d\spk}{(2\pi)^3}\frac{P^\La_{\al\al}}{2|\spk|^3} \Big|\int_{t_{in}}^{t_{out}}dt e^{-i|\spk|t}\dot{j}^\perp_i(t,\spk) \Big|^2.
\end{equation}
As it was noted in the previous section, for $t_{in}\rightarrow-\infty$, $t_{out}\rightarrow\infty$, the integration by parts turns this formula into the standard formula for the average number of photons radiated by a classical current \cite{LandLifshCTF}. In particular, $|g_\al|^2$ possesses the standard infrared asymptotics \cite{WeinbIR,LandLifshCTF,WeinbergB.12,GinzbThPhAstr,JauRohr,BlNord37} provided the trajectories of charged particles in the $in$ and $out$ regions tend to a uniform rectilinear motion. For this asymptotics takes place, it is assumed that $|\spk|(1-\beta_n)(t_{out}-t_{in})\gg1$, i.e., the radiation has time to form at a given energy. The change of phase of the wave function of the system (the Coulomb phase) during the infinite interval of time, $(t_{out}-t_{in})$, also becomes divergent in the infrared domain.

For finite $t_{in}$, $t_{out}$ the quantity \eqref{aver_numb_dr_qed} determines the average number of dressed photons in the quantum state of the field at the instant of time $t_{out}$ in the following experimental setup. For $t\leqslant t_{in}$ the stationary system, $\dot{j}^\perp_i(t)=0$, is in the ground (vacuum) state. Then for $t\in(t_{in},t_{out})$ the classical current, $j^\perp_i(x)$, is changing. At the instant of time $t=t_{out}$, the detector records the number of dressed photons and is turned off or for $t\geqslant t_{out}$ the current does not depend on time. Of course, in order to measure the average number of photons, one needs to carry out a series of identical experiments. The real detector cannot precisely measure the quantity \eqref{aver_numb_dr_qed} or its density for any momentum as it was discussed in Sec. \ref{Inclus_Prob}. If $\tau_s$ is the typical switching off time of the detector, then the detector can measure the density of \eqref{aver_numb_dr_qed} for the photon energies $|\spk|\tau_s\ll1$. The very quantity \eqref{aver_numb_dr_qed} is independent of the detector characteristics and it is the question of the detector design for how to measure the density of \eqref{aver_numb_dr_qed} in a certain spectral range.

Notice that for finite $t_{in}$, $t_{out}$ the quantity \eqref{aver_numb_dr_qed} is not zero even for a charge moving uniformly and rectilinearly. It is not surprising as, in the Schr\"{o}dinger representation, the state of the quantum electromagnetic field depends on time even for a uniformly and rectilinearly moving charge (the bound electromagnetic field depends on time at every point of space). In terms of particles, this change of the state looks as the result of creation and annihilation of photons representing the perturbations of the Fock vacuum. The same situation takes place in describing the evolution in terms of the bare photons (see the discussion in \cite{GinzbThPhAstr}).

Let us find the infrared and ultraviolet asymptotics of the expressions entering into the evolution operator for finite $t_{in}$, $t_{out}$. If
\begin{equation}
    \dot{j}^\perp_i(t,\spk)\big|_{\spk=0},\qquad\frac{\partial \dot{j}^\perp_i(t,\spk)}{\partial k_i}\Big|_{\spk=0},\quad t\in(t_{in},t_{out}),
\end{equation}
are defined, what is valid, for example, for the current $j^\perp_i(x)$ that depends smoothly on time, possesses a compact support with respect to the spatial variables, and does not have nonintegrable singularities for any $\spx$, then the term on the second line in \eqref{I_qed} is finite in the infrared region. This follows from the fact that $\dot{j}^\perp_i(t,0)\in \mathbb{R}$ and complies with the general statement made in the previous section. The quantities entering into $d(t)$ are infrared finite, too. The only singularity appears in the imaginary part of $I$,
\begin{equation}\label{I_IR}
    iI_{IR}=-\frac12\int_{IR}\frac{d\spk}{(2\pi)^3}\frac{|j^\perp_i(t,0)-j^\perp_i(t_{in},0)|^2}{2|\spk|^3} =-\frac12\int_{IR}\frac{d\omega}{6\pi^2\omega}(j_i(t,0)-j_i(t_{in},0))^2,
\end{equation}
where it is supposed that $|\spk|(t_{out}-t_{in})\ll1$. The expression \eqref{I_IR} diverges logarithmically. The zero mode has the form
\begin{equation}\label{zero_mode}
    j^i(t,0)=\int d\spx j^i(t,\spx)=\sum_n e_n\beta_n^i(t)=\frac{d}{dt}\sum_n e_nx_n^i(t)=\frac{d}{dt}d^i(t),
\end{equation}
where $d^i(t)$ is the dipole moment of the system (not to be confused with $d(t)$). This zero mode determines the leading contribution to the multipole expansion of the electromagnetic potential of a neutral system of charges at large distances from the source (see Sec. 44 of \cite{LandLifshCTF} and, for example, formula (14) of \cite{KazMult}). As a rule, this quantity is negligibly small in the multipole expansion since it is of order $|\mathbf{d}|/T$ for a system of charges evolving in a bounded domain after averaging over the interval of time $T\rightarrow\infty$. The infrared divergence appearing in \eqref{I_IR} is responsible for reconstruction of the Biot-Savart field at large distances from the nonstationary source. As it was mentioned, from physical point of view, this infrared divergence is not a problem since one can always suppose that the system under study is confined into a sufficiently large box. Furthermore, the assumption that the initial state is the ground state of the Hamiltonian of the theory is valid only in the bounded region of space. The typical size of this region or of the box can be taken as the natural infrared cutoff. Nevertheless, if the size of the chamber where the experiment is carried out is sufficiently large and $|\spk|(t_{out}-t_{in})\ll1$, then there exists a region of photon energies where the infrared asymptotics of the density of radiated photons following from \eqref{I_IR} can be observed experimentally.

As far as the ultraviolet asymptotics is concerned, the Fourier transform of an infinitely smooth current $j^\mu(t,\spx)$ tending to zero at $|\spx|\rightarrow\infty$ faster than any power of $|\spx|^{-1}$ vanishes at $|\spk|\rightarrow\infty$ faster than any power of $|\spk|^{-1}$. Therefore, all the integrals appearing in the evolution operator converge at large momenta in no-regularization  limit\footnote{Notice that such a situation does not always take place. Namely, infinitely smooth background fields of a general form rapidly vanishing at infinity, or with a compact support, may lead to the average number of particles divergent at large energies (see Introduction).}, except, of course, the energy of zero point fluctuations. Nevertheless, it is interesting to find the ultraviolet asymptotics of the average number of dressed photons created from the vacuum by the system of charged point particles \eqref{curr_part}. For such a system
\begin{equation}
    \dot{j}^i(t,\spk)=\sum_n e_n(\dot{\beta}_n^i +i\beta_n^i(\boldsymbol{\beta}_n\spk))e^{i\spk\spx_n(t)}.
\end{equation}
Substituting this expression into \eqref{aver_numb_dr_qed} in the regularization limit and integrating by parts, we find in the leading order
\begin{equation}\label{N_UV_dr0}
    N_{UV}=\int_{UV}\frac{d\spk}{(2\pi)^3}\Big[ \sum_n e^2_n\frac{(\mathbf{n}\boldsymbol{\beta}_n)^2}{2|\spk|^3} \frac{\boldsymbol{\beta}_n^2-(\mathbf{n}\boldsymbol{\beta}_n)^2}{(1-(\mathbf{n}\boldsymbol{\beta}_n))^2}\Big|_{t=t_{in}} +\sum_n e^2_n\frac{(\mathbf{n}\boldsymbol{\beta}_n)^2}{2|\spk|^3}\frac{\boldsymbol{\beta}_n^2-(\mathbf{n}\boldsymbol{\beta}_n)^2}{(1-(\mathbf{n}\boldsymbol{\beta}_n))^2}\Big|_{t=t_{out}} \Big],
\end{equation}
where $\mathbf{n}:=\spk/|\spk|$. Integrating over the angular variables, we obtain
\begin{equation}\label{N_UV_dr}
    N_{UV}=\int_{UV}\frac{d\omega}{16\pi^3\omega} \big[ \sum_n e^2_nf(\beta_n)\big|_{t=t_{in}} +\sum_n e^2_nf(\beta_n)\big|_{t=t_{out}} \big],
\end{equation}
where
\begin{equation}
    f(\beta_n)=8\pi \Big[(2-\beta_n^2)\frac{\arcth\beta_n}{\beta_n} +\frac{\beta_n^2}{3}-2 \Big]=\frac{8\pi}{15}\be^4_n+\cdots=2\pi \Big[ -\ln\frac{(1-\beta_n)^2}{4}-\frac{20}{3}+\cdots\Big].
\end{equation}
The number of particles \eqref{N_UV_dr} diverges logarithmically. In this case, the natural ultraviolet cutoff parameter is the inverse of the wave packet size. It is also clear that \eqref{N_UV_dr} does not take into account the quantum recoil due to radiation of hard photons \cite{BaiKatFad}. The account for quantum recoil results in that $|j^i(t,\spk)|$ rapidly tends to zero for $|\spk|$ larger than the total energy of the radiating particle. The current \eqref{curr_part} does not satisfy this property. To observe the asymptotics \eqref{N_UV_dr}, it is necessary that the photon energy be much smaller than $1/\tau_s$ and the ultraviolet cutoff parameter and be much larger than any typical energy of the radiation formed.

Let us find the estimate for the number of dressed photons produced during the adiabatic change of the current $j^\perp_i(t,\spk)$. Suppose that
\begin{equation}
    j^\perp_i(t,\spk)=:J^i(t/\tau,\spk),\qquad(t_{out}-t_{in})/\tau\gg1,
\end{equation}
where $\tau$ is the adiabaticity parameter and $\dot{j}^\perp_i(t,\spk)$, $\ddot{j}^\perp_i(t,\spk)$ are assumed to vanish sufficiently fast at $|\spk|\rightarrow\infty$. Let $\la$ be the infrared cutoff (see above) and $\La_{IR}(t_{out}-t_{in})\ll1$, $\La_{IR}>\la$. Then
\begin{equation}
    N=\int_\la^{\La_{IR}}\frac{d\spk}{16\pi^3|\spk|^3}\Big|\int_{t_{in}}^{t_{out}}dt e^{-i|\spk|t}\dot{j}^\perp_i(t,\spk) \Big|^2 +\int_{\La_{IR}}^\infty\frac{d\spk}{16\pi^3|\spk|^3}\Big|\int_{t_{in}}^{t_{out}}\frac{dt}{\tau} e^{-i|\spk|t} J'_i(t/\tau,\spk) \Big|^2.
\end{equation}
If $\ddot{j}^\perp_i(t,\spk)$ is absolutely integrable for $t\in[t_{in},t_{out}]$ then, on integrating by parts and using the Riemann-Lebesgue lemma, it is easy to see that the second term is of order $1/\tau^2$. This estimate is valid for $\La_{IR}\tau\gg1$. Substituting \eqref{I_IR}, \eqref{zero_mode} into the first integral, we obtain in the leading order
\begin{equation}
    N\approx\frac{\ln(\La_{IR}/\la)}{6\pi^2} (\dot{d}^i(t_{out})-\dot{d}^i(t_{in}))^2 +\int_{\La_{IR}}^\infty\frac{d\spk}{16\pi^3 \spk^4}\Big| e^{-i|\spk|t} \dot{j}^\perp_i(t,\spk)\big|^{t_{out}}_{t_{in}} \Big|^2.
\end{equation}
If $\dot{d}^i(t)$ is of order $1/\tau$, then the first term is of order $1/\tau^2$. Thus, in the adiabatic limit,
\begin{equation}\label{adiabat_dr}
    N=O(\tau^{-2}).
\end{equation}
The asymptotics \eqref{adiabat_dr} is in agreement with the standard estimate following from the uniform adiabatic theorem \cite{LandLifQED,ElgHag}. The infrared cutoff $\la$ provides the energy gap between the vacuum and the first excited state of the system.

For comparison we present here the analogous formulas for the bare photons. The general formulas \eqref{bare_part}, \eqref{bare_part1} are written as
\begin{equation}
\begin{gathered}
    f^b_\al(t)=i\frac{\bar{e}_i^{(s)}\bar{j}^i(t,\spk)}{\sqrt{2V}|\spk|^{1/2}},\qquad
    g^b_\al(t)=\int_{t_{in}}^t d\tau e^{-i|\spk|(t-\tau)} \frac{\bar{e}_i^{(s)}\bar{j}^i(t,\spk)}{\sqrt{2 V}|\spk|^{1/2}},\\
    d^b(t)=\int\frac{d\spk}{(2\pi)^3}\Big( V|\spk| +\frac{|j_0(t,\spk)|^2}{2\spk^2}\Big).
\end{gathered}
\end{equation}
Also
\begin{equation}
    \chi_\al^b(t)=\int_{t_{in}}^t d\tau e^{-i|\spk|(t_{in}-\tau)} \frac{\bar{e}_i^{(s)}\bar{j}^i(t,\spk)}{\sqrt{2 V}|\spk|^{1/2}},\qquad iI^b= -\frac{i}{2}\int_{t_{in}}^{t_{out}}dx dy j^i(x)G_{ij}(x,y)j^j(y).
\end{equation}
Recall that, in the Coulomb gauge,
\begin{equation}
    G^{(+)}_{ij}(x,y)=-i\int\frac{d\spk}{(2\pi)^3}\frac{\de^\perp_{ij}}{2|\spk|}e^{ik_\mu (x^\mu-y^\mu)}\Big|_{k^0=|\spk|}.
\end{equation}
Substituting these expressions into \eqref{numb_part_dr}, \eqref{inclus_process_prob}, \eqref{evol_bare}, we deduce the average number of bare photons recorded by the detector, the probability of the inclusive process \eqref{inclus_process}, and the matrix element of the evolution operator. The average number of produced bare photons reads as
\begin{equation}\label{aver_numb_br_qed}
    N^b=\int\frac{d\spk}{16\pi^3|\spk|}\Big|\int_{t_{in}}^{t_{out}}dt e^{-i|\spk| t} j^\perp_i(t,\spk)\Big|^2.
\end{equation}
For finite $t_{in}$, $t_{out}$, this quantity determines the average number of bare photons in the following experiment. For $t<t_{in}$ the current is shielded, the bare photons are absent in the initial state. For $t\in(t_{in},t_{out})$ the shielding is switched off. At $t=t_{out}$ the detector counts the number of bare photons and is turned off or the current is shielded once again. Such a situation can be realized, for example, by using the conducting screens: the charges are injected in the region of space where the detector is located and the shielding is absent. Then the charges escape this region and move behind the screen.

The integrals defining $d^b(t)$, $I^b$, and $N^b$ are finite in the infrared domain for $|\spk|(t_{out}-t_{in})\ll1$ provided that $j^\perp_i(t,\spk=0)$ and $j_0(t,\spk=0)$ exist. By the same reasons as in the case of dressed photons, these integrals are finite, save the energy of zero point fluctuations, in the ultraviolet region for a smooth current $j^\mu(t,\spx)$ tending to zero at $|\spx|\rightarrow\infty$ faster than any power of $|\spx|^{-1}$. As for the current of charged point particles \eqref{curr_part}, the ultraviolet asymptotics of \eqref{aver_numb_br_qed} has the form
\begin{equation}\label{N_UV_br0}
    N^b_{UV}=\int_{UV}\frac{d\spk}{16\pi^3|\spk|^3}\Big[ \sum_n e^2_n \frac{\boldsymbol{\beta}_n^2-(\mathbf{n}\boldsymbol{\beta}_n)^2}{(1-(\mathbf{n}\boldsymbol{\beta}_n))^2}\Big|_{t=t_{in}} +\sum_n e^2_n\frac{\boldsymbol{\beta}_n^2-(\mathbf{n}\boldsymbol{\beta}_n)^2}{(1-(\mathbf{n}\boldsymbol{\beta}_n))^2}\Big|_{t=t_{out}} \Big].
\end{equation}
Integrating over the angular variables, we have
\begin{equation}\label{N_UV_br}
    N_{UV}=\int_{UV}\frac{d\omega}{16\pi^3\omega} \big[ \sum_n e^2_nf^b(\beta_n)\big|_{t=t_{in}} +\sum_n e^2_nf^b(\beta_n)\big|_{t=t_{out}} \big],
\end{equation}
where
\begin{equation}
    f^b(\beta_n)=8\pi \Big(\frac{\arcth\beta_n}{\beta_n} -1 \Big)=\frac{8\pi}{3}\be^2_n+\cdots=2\pi \Big[ -\ln\frac{(1-\beta_n)^2}{4}-4+\cdots\Big].
\end{equation}
The average number of particles \eqref{N_UV_br} diverges logarithmically. It is clear from \eqref{N_UV_dr0} and \eqref{N_UV_br0} that $N_{UV}<N_{UV}^b$. This property is in accord with the general statement that the representation of the algebra of observables by means of the Hamiltonian diagonalization procedure improves the ultraviolet behavior of the theory. In the present case, the estimate \eqref{bare_dress} is not fulfilled as the estimate \eqref{current_tderiv} does not hold for the current of point particles. Notice that $f(\beta_n)\rightarrow f^b(\beta_n)$ for $\beta_n\rightarrow1$.

To conclude this section, we find the average number of bare photons created from the vacuum during the adiabatic evolution of the current $j^\perp_i(t,\spk)$. To this end, we integrate by parts with respect to $t$ in \eqref{aver_numb_br_qed}. Then, in the leading order in $1/\tau$, we obtain
\begin{equation}\label{aver_numb_br_qed_ad}
    N^b\approx\int\frac{d\spk}{16\pi^3|\spk|^3}\Big|(e^{-i|\spk|t}-1)j^\perp_i(t,\spk)\big|^{t_{out}}_{t_{in}}\Big|^2.
\end{equation}
The integral discarded in this expression tends to zero as $\tau\rightarrow\infty$ provided that $\dot{j}^\perp_i(t,\spk)$ is absolutely integrable for $t\in[t_{in},t_{out}]$ and the other assumptions about $j^\perp_i(t,\spk)$ are satisfied (see above). The quantity \eqref{aver_numb_br_qed_ad} does not tend to zero for $\tau\rightarrow\infty$. This is, of course, an expected result.

\section{Conclusion}\label{Concl}

Let us summarize the results. We developed the quantum theory of fields with nonstationary quadratic Hamiltonians of a general form for both bosons and fermions. A special attention was paid to the existence of unitary evolution during a finite interval of time in the separable Hilbert space of quantum states. To this end, the representation of the algebra of observables was realized by means of the Hamiltonian diagonalization procedure, the energy cutoff regularization was explicitly introduced into the Hamiltonian, and the divergencies in the average number of created particles were regulated by the corresponding counterdiabatic terms in the Hamiltonian. The regularized Hamiltonian is self-adjoint, local in time, and reduces to the initial Hamiltonian after the removal of regularization.

In no-regularization limit, the theory may become nonunitary due to the divergent number of created particles in the ultraviolet and/or infrared spectral domains. Nevertheless, we investigated the observables that allow for no-regularization limit. Namely, we considered the probability that the detector records a particle in a certain set of states, i.e., the probability $w(D)$ of the inclusive process \eqref{inclus_process}. In addition, we considered the average number of particles $N_D$ recorded by the detector in the aforementioned set of states. It is these quantities that are measured in experiments. We showed that under rather mild assumptions these quantities allow for the removal of regularization. In this limit, $N_D$ is finite and $w(D)\in[0,1)$ as for the regularized theory. The explicit formulas for $N_D$ and $w(D)$ were found. The formula for $w(D)$ generalizes the formula obtained in \cite{BKL5}. Of course, the issues with unitarity of the theory after the removal of regularization do not vanish. They reappear in the form of dangerous asymptotics of the average number of created particles in the ultraviolet and/or infrared regions. However, one cannot prove experimentally such violation of unitary so long as one cannot measure the number of particles at infinite and/or zero energies.

As a simple example for application of the developed formalism, we considered the theory of a neutral boson field with the Hamiltonian possessing a stationary quadratic part and a nonstationary source. We particularized the general formulas for this simple case and found the infrared and ultraviolet asymptotics of the average number of particles created from the vacuum during a finite time evolution. For such simple theories, it is not difficult to compare the observables calculated in the different representation of their algebra. Thus we found the average number of particles recorded by the detector when the Fock space is defined by means of diagonalization of the Hamiltonian without the nonstationary source (the bare particles). As a rule, for massless particles without the infrared cutoff, this representation is not unitary equivalent to the representation defined by means of diagonalization of the full nonstationary quadratic Hamiltonian (the dressed particles). We showed that the average number of dressed particles created from the vacuum possesses a better ultraviolet behavior than the same quantity for the bare particles. The infrared issues with unitarity can be resolved, for example, by placing the system into a large box. In fact, such a ``box'' is always present in any experimental setup. Then the both representations become unitary equivalent and, to a large extent, the use of different definitions of particles becomes a question of terminology. All the observables in one representation can be rewritten in the other one, although one representation can be more suitable than another for solving a given problem\footnote{Of course, there are certain restrictions on the choice of the representations of the algebra of observables. There must exist at least one representation among unitary equivalent ones that is determined by the state of the background fields at the present moment at every instant of time, i.e., in this representation, the creation-annihilation operators are the functionals of the background field configurations $\Phi(t)$. For example, such functionals can simply be independent of $\Phi(t)$. If there are not such local in time representations of the algebra of observables, one cannot pose the Cauchy problem.}. So this model is not a quite good representative for displaying the peculiarities stemming from unitary inequivalent representations of the algebra of observables. The issues with unitarity of QFT resulting from a poor ultraviolet behavior of the average number of created particles for infinitely smooth background fields with compact support are severer and cannot be resolved by analogous simple physical arguments (see the examples in \cite{uqem,Szpak15,Scharf79,NencSchr,Ruijsen}). There is not a natural ultraviolet cutoff in these model. So it has to be introduced by hand or other representations of the algebra of observables have to be considered.

Having investigated this model, we considered its particular case -- QED with a classical current in the Minkowski spacetime in the inertial reference frame. This is the classical example that was investigated in many papers and books \cite{Shirok,BlNord37,Glaub51,SchwinS,JauRohr,BaiKatFad,GinzbThPhAstr,GavrGit90}. We found the average number of dressed and bare photons created from the vacuum during a finite time evolution and the probability of the inclusive process \eqref{inclus_process}. The production of photons during the adiabatic change of the source was also studied. The infrared asymptotics of the average number of dressed and bare photons were obtained. As for the ultraviolet asymptotics, they were derived for the current of charged point particles. All these asymptotics can be verified experimentally.

As regards the possible applications of the developed formalism, they are numerous. One may mention the nonstationary problems in condensed matter physics, in QED in continuous anisotropic media and in strong electromagnetic fields, in QFTs on gravitational backgrounds, etc. It can also be used for description of finite time quantum-field processes with wave packets.

\paragraph{Acknowledgments.}

I am grateful to the anonymous referee for useful suggestions. The reported study was supported by the Russian Ministry of Education and Science, the contract N 0721-2020-0033.

\appendix
\section{Symbol of the evolution operator}\label{Symb_Evol_Oper_App}

Let us find the normal symbol of the evolution operator of QFT of bosons or fermions of a general form with nonstationary quadratic Hamiltonian. As for quadratic theories of bosons, the explicit expression for this symbol was found in \cite{KazMil1}, where, in fact, the results of \cite{BerezMSQ1.4} were generalized to a nonstationary case (see also \cite{Friedrichs} and for relatively recent studies \cite{NeretinB,BrunDerez}). Other representations of the solution to the quantum-field Schr\"{o}dinger equation with quadratic Hamiltonian are given in \cite{MaslShved,Scharf86}.

Let $(\ha_\al,\had_\al)$ be a complete set of bosonic ($\epsilon=1$) or fermionic ($\epsilon=-1$) creation-annihilation operators. By the standard means we construct the Bargmann-Fock representation \cite{Bargm61,BerezMSQ1.4,Friedrichs}. Introduce the coherent states
\begin{equation}\label{coher_stt}
\begin{gathered}
    |a\ran:=e^{\had a}|0\ran,\qquad \lan\bar{a}|:=\lan0|e^{\bar{a}\ha},\qquad \lan\bar{a}|a\ran=e^{\bar{a}a},\\
    \ha_\al|a\ran=a_\al|a\ran,\qquad \lan\bar{a}|\had_\al=\lan\bar{a}|\bar{a}_\al,
\end{gathered}
\end{equation}
where $|0\ran$ is the Fock vacuum and $a_\al$, $\bar{a}_\al$ are some functions with the Grassmann parity $(1-\epsilon)/2$. Recall that we use the matrix notation of the form \eqref{matrix_nottn}. The completeness relation reads as
\begin{equation}
    \hat{1}=\int D\bar{a}Da e^{-\bar{a}a}|a\ran\lan\bar{a}|,\qquad \int D\bar{a}Da e^{-\bar{a}a}=1.
\end{equation}
The last equality specifies the normalization of the measure of the Gaussian functional integral. This functional integral obeys the relations \cite{BerezMSQ1.4}
\begin{equation}\label{func_int_G}
\begin{split}
    \int D\bar{a}Da\exp\Big\{-\frac{1}{2}
    \left[
      \begin{array}{cc}
        a & \bar{a} \\
      \end{array}
    \right]
    B
    \left[
      \begin{array}{c}
        a \\
        \bar{a} \\
      \end{array}
    \right]
    +
     \left[
      \begin{array}{cc}
        a & \bar{a} \\
      \end{array}
    \right] F
    \Big\}&=
    \exp\big\{\tfrac{1}{2}F^T B^{-1}F\big\}
    \Big(\det\left[
      \begin{array}{cc}
        A_{21} & A_{22} \\
        A_{11} & A_{12} \\
      \end{array}
    \right]\Big)^{-1/2},\\
    \int D\bar{a}Da\exp\Big\{\frac{1}{2}
    \left[
      \begin{array}{cc}
        a & \bar{a} \\
      \end{array}
    \right]
    B
    \left[
      \begin{array}{c}
        a \\
        \bar{a} \\
      \end{array}
    \right]
    +
     \left[
      \begin{array}{cc}
        a & \bar{a} \\
      \end{array}
    \right] F
    \Big\}&=
    \exp\big\{\tfrac{1}{2}F^T B^{-1}F\big\}
    (\det B)^{1/2},
\end{split}
\end{equation}
where the first equality is for bosons, whereas the second one is for fermions. The Grassmann parity of $F$ is equal to $(1-\epsilon)/2$. Besides,
\begin{equation}
    B:=\left[
      \begin{array}{cc}
        A_{11} & A_{12} \\
        A_{21} & A_{22} \\
      \end{array}
    \right].
\end{equation}
The determinant on the first line on the right-hand side of \eqref{func_int_G} is well-defined provided $A_{11}$, $A_{22}$ are the Hilbert-Schmidt (HS) operators and $A_{12}-1$ and $A_{21}-1$ are trace-class. As for fermions, the operators $A_{12}$, $A_{21}$ must be HS, and $A_{11}-1$ and $A_{22}-1$ must be trace-class (see for details \cite{BerezMSQ1.4}).

The states of the Fock space and the kernels of operators acting in it,
\begin{equation}
    \Phi(\bar{a}):= \lan\bar{a}|\Phi\ran,\qquad \tilde{A}(\bar{a},a):=\lan\bar{a}|\hat{A}|a\ran,
\end{equation}
are the functionals of $a_\al$, $\bar{a}_\al$. It is clear from \eqref{coher_stt} that
\begin{equation}
    \tilde{A}(\bar{a},a)=A(\bar{a},a)e^{\bar{a}a},
\end{equation}
where $A(\bar{a},a)$ is the normal (Wick) symbol of the operator $\hat{A}$. The functional
\begin{equation}
    \bar{\Phi}(a):=\lan\Phi|a\ran
\end{equation}
is obtained from $\Phi(\bar{a})$ by the complex conjugation that, in particular, replaces $\bar{a}_\al\rightarrow a_\al$ and arranges the functions $a_\al$ in the inverse order as on transposition. The same is valid for the functional corresponding to the kernel of the operator $\hat{A}^\dag$ and for the normal symbol of the operator $\hat{A}^\dag$, viz., one should take the complex conjugation and arrange the functions $a_\al$, $\bar{a}_\al$ in the inverse order. In the Bargmann-Fock representation, we have
\begin{equation}\label{rels_symb}
\begin{alignedat}{2}
    \hat{a}_\al|\Phi\ran &\leftrightarrow\frac{\de\Phi(\bar{a})}{\de\bar{a}_\al},&\qquad \hat{a}^\dag_\al|\Phi\ran &\leftrightarrow \bar{a}_\al\Phi(\bar{a}),\\
    \hat{a}_\al\hat{A} &\leftrightarrow \Big(a_\al+\frac{\de}{\de\bar{a}_\al}\Big)A(\bar{a},a),&\qquad \hat{a}^\dag_\al\hat{A}&\leftrightarrow \bar{a}_\al A(\bar{a},a),\\
    \hat{A}\hat{a}_\al &\leftrightarrow A(\bar{a},a)a_\al,&\qquad \hat{A}\hat{a}^\dag_\al &\leftrightarrow A(\bar{a},a)\Big(\bar{a}_\al+\frac{\overleftarrow{\de}}{\de a_\al}\Big),\\
    \hat{a}_\al\hat{A} &\leftrightarrow \frac{\de \tilde{A}(\bar{a},a)}{\de\bar{a}_\al},&\qquad \hat{a}^\dag_\al\hat{A} &\leftrightarrow \bar{a}_\al \tilde{A}(\bar{a},a),\\
    \hat{A}\hat{a}_\al &\leftrightarrow \tilde{A}(\bar{a},a)a_\al,&\qquad \hat{A}\hat{a}^\dag_\al &\leftrightarrow \tilde{A}(\bar{a},a)\frac{\overleftarrow{\de}}{\de a_\al},
\end{alignedat}
\end{equation}
for both bosons and fermions.

Let us given the two sets of the creation-annihilation operators $(\ha_\al,\had_\al)$ and $(\hb_\al,\hbd_\al)$ related by the linear canonical transform
\begin{equation}\label{canon_trans}
    \left[
       \begin{array}{c}
         \hat{b} \\
         \hat{b}^\dag \\
       \end{array}
     \right]=\left[
               \begin{array}{cc}
                 \Phi & \Psi \\
                 \bar{\Psi} & \bar{\Phi} \\
               \end{array}
             \right]\left[
                      \begin{array}{c}
                        \hat{a} \\
                        \hat{a}^\dag \\
                      \end{array}
                    \right]+\left[
                              \begin{array}{c}
                                f \\
                                \bar{f} \\
                              \end{array}
                            \right],
\end{equation}
where
\begin{equation}
    \left[
       \begin{array}{cc}
         \Phi & \Psi \\
         \bar{\Psi} & \bar{\Phi} \\
       \end{array}
     \right]\left[
              \begin{array}{cc}
                0 & 1 \\
                -\epsilon & 0 \\
              \end{array}
            \right]\left[
                     \begin{array}{cc}
                       \Phi^T & \Psi^\dag \\
                       \Psi^T & \Phi^\dag \\
                     \end{array}
                   \right]=\left[
              \begin{array}{cc}
                0 & 1 \\
                -\epsilon & 0 \\
              \end{array}
            \right],
\end{equation}
or
\begin{equation}\label{canon_defn}
    \left[
       \begin{array}{cc}
         \Phi & \Psi \\
         \bar{\Psi} & \bar{\Phi} \\
       \end{array}
     \right]
     \left[
       \begin{array}{cc}
         \Phi^\dag & -\epsilon\Psi^T \\
         -\epsilon\Psi^\dag & \Phi^T \\
       \end{array}
     \right]=
     \left[
       \begin{array}{cc}
         \Phi^\dag & -\epsilon\Psi^T \\
         -\epsilon\Psi^\dag & \Phi^T \\
       \end{array}
     \right]
     \left[
       \begin{array}{cc}
         \Phi & \Psi \\
         \bar{\Psi} & \bar{\Phi} \\
       \end{array}
     \right]=
     \left[
       \begin{array}{cc}
         1 & 0 \\
         0 & 1 \\
       \end{array}
     \right].
\end{equation}
In terms of components, we have
\begin{equation}\label{canon_defn1}
\begin{alignedat}{2}
    \Phi\Phi^\dag&=1+\epsilon\Psi\Psi^\dag,&\qquad\Phi^\dag\Phi&=1+\epsilon\Psi^T\bar{\Psi},\\
    \Phi\Psi^T&=\epsilon\Psi\Phi^T,&\qquad\Phi^\dag\Psi&=\epsilon\Psi^T\bar{\Phi},\\
    (\Phi\Phi^\dag)^{-1}&=1-(\Phi^\dag)^{-1}\Psi^T(\Phi^T)^{-1}\Psi^\dag,&\qquad (\Phi^\dag\Phi)^{-1}&=1-\Phi^{-1}\Psi\bar{\Phi}^{-1}\bar{\Psi},\\
    \Psi\bar{\Phi}^{-1}&=\epsilon(\Psi\bar{\Phi}^{-1})^T,&\qquad\Phi^{-1}\Psi&=\epsilon(\Phi^{-1}\Psi)^T,
\end{alignedat}
\end{equation}
where it is assumed on the last two lines that there exists the bounded operator $\Phi^{-1}$. This is always valid for bosons. As for fermions, we will suppose that $\Phi$ possesses a bounded inverse. The case of degenerate $\Phi$ with even-dimensional kernel can be obtained by a limiting process from the nondegenerate case \cite{BerezMSQ1.4}. Notice that it follows from \eqref{canon_defn1} that the operators $\Phi$ and $\Psi$ are bounded in the fermionic case.

\begin{thm}\label{canon_unit_thm}
  The linear canonical transform \eqref{canon_trans}, \eqref{canon_defn} corresponds to the unitary transform
  \begin{equation}\label{unit_trans_2}
    \hat{b}_\al=\hat{U}\hat{a}_\al\hat{U}^\dag,\qquad \hat{b}^\dag_\al=\hat{U}\hat{a}^\dag_\al\hat{U}^\dag,
  \end{equation}
  if and only if
  \begin{enumerate}
    \item $\Psi$ is HS
    \begin{equation}
        \Sp(\Psi^\dag\Psi)<\infty;
    \end{equation}
    \item $f_\al$ belongs to the Hilbert space
\begin{equation}
    \bar{f} f<\infty.
\end{equation}
  \end{enumerate}
  In this case, the matrix element $\tilde{U}(\bar{a},a)$ of the operator $\hat{U}$ takes the form
  \begin{equation}\label{unit_trans}
  \begin{split}
    \tilde{U}=&c\exp\Big\{\frac12\left[
                                  \begin{array}{cc}
                                    a & \bar{a} \\
                                  \end{array}
                                \right]
    \left[
                                             \begin{array}{cc}
                                               \epsilon\bar{\Psi}\Phi^{-1} & \epsilon(\Phi^{-1})^T \\
                                               \Phi^{-1} & -\Phi^{-1}\Psi \\
                                             \end{array}
                                           \right]\left[
                                                    \begin{array}{c}
                                                      a \\
                                                      \bar{a} \\
                                                    \end{array}
                                                  \right]+a(\bar{f}-\bar{\Psi}\Phi^{-1}f)-\bar{a}\Phi^{-1}f
     \Big\},\\
     c=&\frac{e^{i\vf}}{(\det\Phi\Phi^\dag)^{\epsilon/4}}\exp\Big\{\frac14\left[
                                                                     \begin{array}{cc}
                                                                       f & \bar{f} \\
                                                                     \end{array}
                                                                   \right]
                                                                   \left[
                                                                     \begin{array}{cc}
                                                                       (\Phi^{-1})^T\Psi^\dag & -\epsilon \\
                                                                       -1 &  \epsilon(\Phi^{-1})^\dag\Psi^T\\
                                                                     \end{array}
                                                                   \right]
                                                                   \left[
                                                                     \begin{array}{c}
                                                                       f \\
                                                                       \bar{f} \\
                                                                     \end{array}
                                                                   \right]
      \Big\},  \end{split}
  \end{equation}
  where $\vf$ is an arbitrary phase.
\end{thm}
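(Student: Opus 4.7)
My proof plan would separate the necessity of the Hilbert-Schmidt and square-integrability conditions from the explicit construction of the kernel, and pin down the kernel by translating the intertwining relations \eqref{unit_trans_2} into first-order differential equations via the dictionary \eqref{rels_symb}.

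For necessity, I would apply $\hat U^\dag$ to the Fock vacuum and use $\hat b_\alpha\hat U^\dag|0\rangle=0$, which by \eqref{canon_trans} gives $(\Phi\hat a+\Psi\hat a^\dag+f)_\alpha\hat U^\dag|0\rangle=0$. Expanding $\hat U^\dag|0\rangle$ in the $(\hat a,\had)$ Fock basis and computing its squared norm produces (up to a prefactor) an expression involving $\operatorname{Sp}(\Psi^\dag\Psi)$ and $\bar f f$; finiteness of these two scalars is then equivalent to $\hat U^\dag|0\rangle$ being a bona fide Fock vector. This is the classical Shale criterion for bosons and its fermionic analogue due to Berezin, plus the standard statement that the Weyl (or fermionic displacement) operator $\hat D(f)$ is unitary on Fock space iff $f$ is square-integrable. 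Conditions 1 and 2 therefore are forced.

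For sufficiency and the explicit formula, I would make the Gaussian ansatz
\begin{equation}
\tilde U(\bar a,a)=c\exp\bigl\{\tfrac12\bar a M\bar a+\bar a N a+\tfrac12 a Q a+\bar a s+t a\bigr\},
\end{equation}
and use \eqref{rels_symb} to convert the two operator identities $\hat U\hat a_\alpha=\hat b_\alpha\hat U$ and $\hat U\had_\alpha=\hat b^\dag_\alpha\hat U$ into first-order differential equations for $\tilde U$ in $\bar a$ and $a$. Matching the coefficients of $\bar a_\alpha$, $a_\alpha$, and the $\bar a$/$a$-independent terms on each side produces a linear system whose unique solution is
\begin{equation}
M=-\Phi^{-1}\Psi,\quad N=\Phi^{-1},\quad Q=\epsilon\bar\Psi\Phi^{-1},\quad s=-\Phi^{-1}f,\quad t=\bar f-\bar\Psi\Phi^{-1}f,
\end{equation}
precisely reproducing the exponent of the first line of \eqref{unit_trans}. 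The compatibility of the over-determined system (two equations, one unknown $\tilde U$) is guaranteed by the symmetries $\Phi^{-1}\Psi=\epsilon(\Phi^{-1}\Psi)^T$ and $\bar\Psi\Phi^{-1}=\epsilon(\bar\Psi\Phi^{-1})^T$ in \eqref{canon_defn1}, which are built into the definition of a canonical transformation.

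The prefactor $c$ would be fixed, up to an unavoidable phase $e^{i\varphi}$, from $\hat U\hat U^\dag=1$: computing $\int D\bar a'Da'\,e^{-\bar a'a'}\tilde U(\bar a,a')\,\overline{\tilde U(a,\bar a')}=e^{\bar a a}$ by the Gaussian formula \eqref{func_int_G} reduces $|c|^2$ to a Fredholm determinant, and the relation $\Phi\Phi^\dag=1+\epsilon\Psi\Psi^\dag$ from \eqref{canon_defn1} identifies it with $\det(\Phi\Phi^\dag)^{\epsilon/2}$; this determinant is well-defined precisely because $\Psi\Psi^\dag$ is trace class under the Hilbert-Schmidt hypothesis on $\Psi$. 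The $f$-dependent Gaussian integration then generates the exponential in the second line of \eqref{unit_trans}, where the matrix appearing there is what one obtains after completing the square in $a,\bar a$ using the block above.

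The main obstacle will be rigor rather than algebra: $\Phi$ can be unbounded (in the bosonic case) and $\Phi^{-1}$ defined only on a dense domain, while $\Psi$ is merely HS, so the Gaussian manipulations and \eqref{func_int_G} need justification. I would handle this by truncating with finite-rank projectors $P_\La$ as in the body of the paper, verifying all identities at the regularized level where $\Phi,\Psi,f$ have finite-dimensional supports, and then passing to $\La\to\infty$ using the continuity of the Hilbert-Carleman regularized determinant under HS-norm convergence of $\Psi\Psi^\dag$ and $L^2$-convergence of $f$. The arbitrariness of $\varphi$ in $c$ is the familiar lift ambiguity from the symplectic (or orthogonal) group to its metaplectic (or Pin) cover.
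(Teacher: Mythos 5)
The paper gives no proof of this theorem at all --- it simply defers to Berezin's \emph{Method of Second Quantization} \cite{BerezMSQ1.4} --- and your plan is precisely the standard argument found there: necessity via normalizability of the transformed vacuum (the Shale criterion and its fermionic analogue), the kernel via a Gaussian ansatz matched against the intertwining relations using \eqref{rels_symb}, and the modulus of $c$ fixed by unitarity through the Gaussian integral \eqref{func_int_G}. The only slip is notational: with the convention $\hat b_\al=\hat U\hat a_\al\hat U^\dag$ of \eqref{unit_trans_2}, the vector annihilated by all $\hat b_\al$ is $\hat U|0\ran$ rather than $\hat U^\dag|0\ran$; this does not affect the substance of the necessity argument.
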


The proof of this theorem is given in \cite{BerezMSQ1.4}. As long as the operator $\Psi$ is HS in the theorem, $\Phi\Phi^\dag-1$ is trace-class and the Fredholm determinant in \eqref{unit_trans} is well-defined. This determinant is not zero for fermions inasmuch as we assume that $\Phi$ is nondegenerate.

Let the Hamiltonian of the system be
\begin{equation}\label{Hamil_gener}
    \hat{H}=\frac12\big[2\hat{a}^\dag C(t)\hat{a}+\hat{a} A^\dag(t) \hat{a} +\hat{a}^\dag A(t) \hat{a}^\dag\big]+\hat{a}^\dag f(t)+\bar{f}(t)\hat{a}+d(t),
\end{equation}
where $C(t)=C^\dag(t)$ is a self-adjoint operator, $A^T(t)=\epsilon A(t)$ is an (anti)symmetric operator, $f_\al(t)$ are some functions of the Grassmann parity $(1-\epsilon)/2$, and $d(t)$ is a Grassmann even function. Introduce the standard notation for the operator norms
\begin{equation}
    \|A\|_1:=\Sp\sqrt{A^\dag A},\qquad \|A\|_2:=\big[\Sp(A^\dag A)]^{1/2}.
\end{equation}

\begin{thm}\label{evol_symb_thm}
Let the Hamiltonian of the system take the form \eqref{Hamil_gener} and the following conditions be satisfied:
\begin{enumerate}
    \item There exists the unitary operator
    \begin{equation}\label{V_oper}
        R_{\tau,0}=\Texp\Big[-i\int_0^\tau ds C(s)\Big],\qquad\tau\in[0,t];
    \end{equation}
    \item The operator $A(\tau)$ is uniformly bounded for $\tau\in[0,t]$:
    \begin{equation}
        \exists a>0:\|A(\tau)\|<a,\forall\tau\in[0,t];
    \end{equation}
    \item The operator
    \begin{equation}
        F(\tau):=\int_0^{\tau}ds \bar{R}_{0,s}\bar{A}(s) R_{s,0}
    \end{equation}
    is HS and $\|F(\tau)\|_2$ is locally integrable for $\tau\in[0,t]$;
    \item The operator
    \begin{equation}
        G(\tau):=R_{0,\tau}A(\tau)\bar{R}_{\tau,0}F(\tau)
    \end{equation}
    is trace-class and $\|G(\tau)\|_1$ is locally integrable for $\tau\in[0,t]$;
    \item $\bar{f}(\tau)f(\tau)<b<\infty$ and $d(\tau)$ is absolutely locally integrable for $\tau\in[0,t]$;
    \item In the fermionic case, the operator $\Phi(\tau)$ defined in \eqref{unit_canon} has a bounded inverse for $\tau\in[0,t]$.
\end{enumerate}
Then the matrix element of the evolution operator $\hat{U}_{t,0}$ is written as
\begin{equation}\label{evol_symb}
    \tilde{U}_{t,0}(\bar{a},a)=c(t)\exp\Big\{\frac12\left[
                                                      \begin{array}{cc}
                                                        \bar{a} & a \\
                                                      \end{array}
                                                    \right]
                                                    \left[
                                                      \begin{array}{cc}
                                                        \Psi\bar{\Phi}^{-1} & (\Phi^\dag)^{-1} \\
                                                        \epsilon\bar{\Phi}^{-1} & -\epsilon\bar{\Phi}^{-1}\bar{\Psi} \\
                                                      \end{array}
                                                    \right]
                                                    \left[
                                                      \begin{array}{c}
                                                        \bar{a} \\
                                                        a \\
                                                      \end{array}
                                                    \right]
                                                    +\bar{a}(\Phi^\dag)^{-1}\chi-\epsilon a(\bar{\chi}+\bar{\Phi}^{-1}\bar{\Psi}\chi)
    \Big\},
\end{equation}
where
\begin{equation}\label{gandc}
  \begin{split}
    \left[
      \begin{array}{c}
        \chi \\
        \bar{\chi} \\
      \end{array}
    \right]&=-i\int_0^t d\tau D_{0,\tau}
                                         \left[
                                           \begin{array}{c}
                                             f(\tau) \\
                                             -\bar{f}(\tau) \\
                                           \end{array}
                                         \right],\\
    c(t)&=\big[\det \bar{R}_{0,t}\bar{\Phi}(t)\big]^{-\epsilon/2}\exp\Big\{-i\int_0^td\tau \Big[\frac{\epsilon}2\chi\bar{\Phi}^{-1}\bar{A}(\Phi^\dag)^{-1}\chi+\bar{f}(\Phi^\dag)^{-1}\chi+d(\tau)\Big] \Big\},
  \end{split}
\end{equation}
and
\begin{equation}\label{unit_canon}
    D_{t,0}=\left[
      \begin{array}{cc}
        \Phi(t) & \Psi(t) \\
        \bar{\Psi}(t) & \bar{\Phi}(t) \\
      \end{array}
    \right]
    =\Texp\Big\{-i\int_0^t d\tau \left[
                                           \begin{array}{cc}
                                             C(\tau) & A(\tau) \\
                                             -\bar{A}(\tau) & -\bar{C}(\tau) \\
                                           \end{array}
                                         \right]\Big\}.
\end{equation}
\end{thm}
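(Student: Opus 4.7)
The plan is to convert the Schr\"odinger equation $i\partial_t\hat{U}_{t,0}=\hat{H}(t)\hat{U}_{t,0}$ with $\hat{U}_{0,0}=1$ into a partial differential equation for the kernel $\tilde{U}_{t,0}(\bar{a},a)=\lan\bar{a}|\hat{U}_{t,0}|a\ran$, substitute a Gaussian ansatz for $\tilde{U}_{t,0}$, and identify the resulting first-order system for the coefficients with the linear equations \eqref{unit_canon} defining the symplectic evolution $D_{t,0}$. Applying the bra-side correspondences in \eqref{rels_symb} to the Hamiltonian \eqref{Hamil_gener}, the kernel satisfies
\begin{equation*}
    i\partial_t\tilde{U}_{t,0}=\Bigl[\bar{a}C(t)\partial_{\bar{a}} +\tfrac12\bar{a}A(t)\bar{a} +\tfrac{\epsilon}{2}\partial_{\bar{a}}A^\dag(t)\partial_{\bar{a}} +\bar{a}f(t) +\bar{f}(t)\partial_{\bar{a}} +d(t)\Bigr]\tilde{U}_{t,0},
\end{equation*}
with initial condition $\tilde{U}_{0,0}(\bar{a},a)=e^{\bar{a}a}$, where derivatives in $\bar{a}$ are right derivatives in the fermionic case.

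Next I would substitute the Gaussian ansatz
\begin{equation*}
    \tilde{U}_{t,0}=c(t)\exp\Bigl\{\tfrac12\bar{a}\mathcal{X}(t)\bar{a} +\bar{a}\mathcal{Y}(t)a +\tfrac{\epsilon}{2}a\mathcal{Z}(t)a +\bar{a}\alpha(t) +a\beta(t)\Bigr\}
\end{equation*}
and collect coefficients of each monomial in $(\bar{a},a)$. The quadratic sector gives a matrix Riccati-type system for $(\mathcal{X},\mathcal{Y},\mathcal{Z})$ with initial values $\mathcal{X}(0)=0$, $\mathcal{Y}(0)=1$, $\mathcal{Z}(0)=0$. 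The substitution $\mathcal{X}=\Psi\bar{\Phi}^{-1}$, $\mathcal{Y}=(\Phi^\dag)^{-1}$, $\mathcal{Z}=-\bar{\Phi}^{-1}\bar{\Psi}$ linearises this system precisely into the matrix ODE \eqref{unit_canon}, with consistency among the three entries secured by the algebraic identities \eqref{canon_defn1}. The linear sector then determines $(\alpha,\beta)$ by quadrature driven by $(f,\bar{f})$, yielding the $\chi$ of \eqref{gandc} and the matching $\alpha=(\Phi^\dag)^{-1}\chi$, $\beta=-\epsilon(\bar{\chi}+\bar{\Phi}^{-1}\bar{\Psi}\chi)$. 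Finally, the scalar equation for $c(t)$ integrates to the stated expression after using $\partial_t\ln\det\bar{\Phi}=\Sp(\bar{\Phi}^{-1}\partial_t\bar{\Phi})$ together with $i\partial_t\bar{\Phi}=-\bar{C}\bar{\Phi}-\bar{A}\Psi$ read off from \eqref{unit_canon}.

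The next task is to verify that under assumptions 1--6 all objects in \eqref{evol_symb}--\eqref{gandc} are finite and the resulting kernel is that of a unitary operator. Passing to the rotating frame $\Phi=R_{t,0}\tilde{\Phi}$, $\Psi=R_{t,0}\tilde{\Psi}$ and iterating \eqref{unit_canon} produces a Neumann series whose leading term gives $\tilde{\Psi}(t)=-i\bar{F}(t)$, with higher iterates built from $A$ and $F$ alternately; conditions 2 and 3 ensure that this series converges in Hilbert--Schmidt norm, so $\Psi(t)$ is HS and the first hypothesis of Theorem \ref{canon_unit_thm} holds. Condition 5 secures the second hypothesis, the square-integrability of $\chi(t)$. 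Condition 6 guarantees invertibility of $\Phi$ in the fermionic case, while for bosons it follows automatically from \eqref{canon_defn1}. Theorem \ref{canon_unit_thm} then supplies a unitary operator realising the canonical transform $D_{t,0}$ with shift $\chi$, whose kernel matches the Gaussian ansatz up to an overall phase; this phase is pinned down by the Schr\"odinger equation and produces the explicit formula for $c(t)$.

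The main obstacle will be justifying the Fredholm determinant $\det\bar{R}_{0,t}\bar{\Phi}(t)$ appearing in \eqref{gandc}. A priori $\tilde{\Phi}(t)-1$ is only Hilbert--Schmidt, which is insufficient for a finite Fredholm determinant. The resolution is precisely condition 4: the operator $G(\tau)=R_{0,\tau}A(\tau)\bar{R}_{\tau,0}F(\tau)$ captures, up to a conjugation by $R_{\tau,0}$ and a sign, the integrand of $\partial_\tau\ln\det\bar{R}_{0,\tau}\bar{\Phi}(\tau)=\Sp[\tilde{\Phi}^{-1}R_{0,\tau}A(\tau)\bar{R}_{\tau,0}\tilde{\Psi}]$, and its trace-class character with locally integrable $\|G(\tau)\|_1$ makes this logarithm a well-defined absolutely convergent integral. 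Once this is in place, exponentiation yields a finite and continuous $c(t)$, and the identification of the Gaussian ansatz with the canonical-unitary operator supplied by Theorem \ref{canon_unit_thm} completes the proof.
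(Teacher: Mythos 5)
Your proposal follows essentially the same route as the paper: the formal part (kernel PDE in the Bargmann--Fock representation, Gaussian ansatz, Riccati system linearized by \eqref{unit_canon}) is exactly the derivation the paper delegates to \cite{BerezMSQ1.4,KazMil1}, and your existence argument is the paper's interaction-picture Dyson series, with the leading term of the off-diagonal block equal to $\mp i\bar{F}(t)$ (HS by condition 3), higher orders controlled by the uniform bound on $A$ (condition 2), and the unitary implementation then supplied by Theorem \ref{canon_unit_thm}.

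The one place where your argument as written would not go through is the Fredholm determinant. You propose to control $\det\bar{R}_{0,t}\bar{\Phi}(t)$ through its logarithmic derivative and to identify $G(\tau)$ with the integrand $\Sp\big[\tilde{\Phi}^{-1}R_{0,\tau}A(\tau)\bar{R}_{\tau,0}\tilde{\Psi}\big]$. But $G(\tau)=R_{0,\tau}A(\tau)\bar{R}_{\tau,0}F(\tau)$ reproduces only the leading order of that integrand: the full expression contains the merely bounded $A(\tau)$ multiplied by the full $\tilde{\Psi}(\tau)$, which is merely HS, and a bounded operator times an HS operator is in general only HS, not trace class. So condition 4 by itself does not make the full logarithmic derivative finite, and in any case writing $\partial_\tau\ln\det$ presupposes the determinant one is trying to construct. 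The paper avoids this by running the trace-norm recursion on the diagonal block of the Dyson series itself: $L^{(2)}(t)=\int_0^t d\tau\,G(\tau)$ is trace class by condition 4, $\|L^{(n)}(t)\|_1\leqslant a^2\int_0^t dt_1\int_0^{t_1}dt_2\,\|L^{(n-2)}(t_2)\|_1$ by condition 2 (bounded times trace class stays trace class), and the odd orders vanish; summing gives that $L(t)-1=R_{0,t}\Phi(t)-1$ is trace class, so the determinant in \eqref{gandc} exists outright. Your Neumann-series setup already contains everything needed for this repair --- apply the recursion to the diagonal block in $\|\cdot\|_1$ seeded by $\int G$, rather than differentiating the log-determinant.
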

\begin{proof}
The proof of this statement is the same as given in \cite{BerezMSQ1.4} with the exception that now the Hamiltonian depends on time. The formal proof of this theorem for bosons is presented in \cite{KazMil1}. As for fermions, the formal proof is conducted along the same lines as for bosons apart from some signs arising due to anticommutativity of $a_\al$, $\bar{a}_\al$, $f_\al$, and $\bar{f}_\al$.

The proof of the existence of \eqref{evol_symb} is reduced to the proof of the existence of \eqref{unit_canon} and that the operator $\Psi$ is HS, the operator $\bar{R}_{0,t}\bar{\Phi}(t)-1$ is trace-class, and the expressions in the exponents \eqref{evol_symb}, \eqref{gandc} are bounded. Let
\begin{equation}
    h(t):=\left[
       \begin{array}{cc}
         C(t) & A(t) \\
         -\bar{A}(t) & -\bar{C}(t) \\
       \end{array}
     \right]=\left[
       \begin{array}{cc}
         C(t) & 0 \\
         0 & -\bar{C}(t) \\
       \end{array}
     \right]+
     \left[
       \begin{array}{cc}
         0 & A(t) \\
         -\bar{A}(t) & 0 \\
       \end{array}
     \right]=:h_0(t)+v(t),
\end{equation}
where $h_0(t)$ is the first matrix and $v(t)$ is the second one. Introduce the operator
\begin{equation}\label{det_pert_theor}
    S_{t,0}:=U^0_{0,t}D_{t,0}=
    \left[
      \begin{array}{cc}
        L(t) & M(t) \\
        \bar{M}(t) & \bar{L}(t) \\
      \end{array}
    \right],
    \qquad
    U^0_{t_2,t_1}=
    \left[
      \begin{array}{cc}
        R_{t_2,t_1} & 0 \\
        0 & \bar{R}_{t_2,t_1} \\
      \end{array}
    \right].
\end{equation}
There is the standard representation for the operator $S_{t,0}$ in the form of the series of nonstationary perturbation theory
\begin{equation}
    S_{t,0}=1+\sum_{n=1}^\infty S^{(n)}_{t,0},\qquad S^{(n)}_{t,0}=-i\int_0^t d\tau v_I(\tau)S^{(n-1)}_{\tau,0},
\end{equation}
and
\begin{equation}
    v_I(t)=U^0_{0,t}v(t)U^0_{t,0}.
\end{equation}
Substituting the matrix representations for the operators, we come to the recurrence relations
\begin{equation}\label{recurr_rels}
\begin{split}
    \bar{M}^{(n)}(t)&=\int_0^tdt_1\int_0^{t_1}dt_2\bar{R}_{0,t_1}\bar{A}(t_1)R_{t_1,0}R_{0,t_2}A(t_2)\bar{R}_{t_2,0}\bar{M}^{(n-2)}(t_2),\\
    L^{(n)}(t)&=\int_0^tdt_1\int_0^{t_1}dt_2R_{0,t_1}A(t_1)\bar{R}_{t_1,0}\bar{R}_{0,t_2}\bar{A}(t_2)R_{t_2,0}L^{(n-2)}(t_2),
\end{split}
\end{equation}
and
\begin{equation}\label{low_orders}
\begin{gathered}
    L^{(0)}(t)=1,\qquad \bar{M}^{(1)}(t)=iF(t),\qquad L^{(2)}(t)=\int_0^t d\tau G(\tau),\\
    \bar{M}^{(2n)}=L^{(2n+1)}=0,\; n=\overline{0,\infty}.
\end{gathered}
\end{equation}
It follows from the properties of the operator norms (see, e.g., \cite{RSv1,Shubin}) and the recurrence relations \eqref{recurr_rels} that
\begin{equation}
\begin{split}
    \|\bar{M}^{(n)}(t)\|_2& \leqslant a^2 \int_0^tdt_1\int_0^{t_1}dt_2\|\bar{M}^{(n-2)}(t_2)\|_2,\\
    \|L^{(n)}(t)\|_1& \leqslant a^2 \int_0^tdt_1\int_0^{t_1}dt_2 \|L^{(n-2)}(t_2)\|_1.
\end{split}
\end{equation}
Using these recurrence relations and the initial data \eqref{low_orders}, it is easy to see that on fulfillment of conditions of the theorem
\begin{equation}\label{estimates}
    \Big\|\sum_{n=0}^\infty \bar{M}^{(n)}(t)\Big\|_2<\infty,\qquad \Big\|\sum_{n=1}^\infty L^{(n)}(t)\Big\|_1<\infty.
\end{equation}
The first inequality implies that the operator
\begin{equation}
    \bar{M}(t)=\bar{R}_{0,t}\bar{\Psi}(t)
\end{equation}
is HS. Consequently, $\bar{\Psi}(t)$ is also HS. The second inequality in \eqref{estimates} implies that the operator
\begin{equation}
    L(t)-1=R_{0,t}\Phi(t)-1
\end{equation}
is trace-class. This proves the existence of the determinant entering into \eqref{gandc}, the existence of the linear canonical transform \eqref{unit_canon}, and that the latter corresponds to the unitary transform. The boundedness of the expressions in the exponents in \eqref{evol_symb}, \eqref{gandc} is evident under the assumptions of the theorem.

\end{proof}

Some assumptions of the theorem can be relaxed but we will not investigate this point here. In particular, for fermions, in the case when $\Phi(t)$ possesses an even-dimensional kernel, the matrix elements of the evolution operator can be obtained from \eqref{evol_symb} by a passage to the limit in the formula for the nondegenerate case \cite{BerezMSQ1.4}. A thorough investigation of the dynamics of electrons in the overcritical fields resulting in degeneracy of the operator $\Phi(t)$ can be found in \cite{Szpak15,GrMuRaf}.

We shall also need the relation between the creation-annihilation operators in the Heisenberg representation
\begin{equation}
    \hat{a}_\al(t):=\hat{U}_{0,t}\hat{a}_\al(0)\hat{U}_{t,0},\qquad \hat{a}^\dag_\al(t):=\hat{U}_{0,t}\hat{a}^\dag_\al(0)\hat{U}_{t,0}.
\end{equation}
These operators obey the Heisenberg equations with the Hamiltonian \eqref{Hamil_gener},
\begin{equation}
    i\left[
       \begin{array}{c}
         \dot{\hat{a}}(t) \\
         \dot{\hat{a}}^\dag(t) \\
       \end{array}
     \right]=
     h(t)
     \left[
       \begin{array}{c}
         \hat{a}(t) \\
         \hat{a}^\dag(t) \\
       \end{array}
     \right]+
     \left[
       \begin{array}{c}
         f(t) \\
         -\bar{f}(t) \\
       \end{array}
     \right].
\end{equation}
The solution of these equations is
\begin{equation}\label{unit_trans_3}
    \left[
      \begin{array}{c}
        \hat{a}(t) \\
        \hat{a}^\dag(t) \\
      \end{array}
    \right]=
    D_{t,0}
    \left[
      \begin{array}{c}
        \hat{a}(0) \\
        \hat{a}^\dag(0) \\
      \end{array}
    \right]
    +\left[
      \begin{array}{c}
        g(t) \\
        \bar{g}(t) \\
      \end{array}
    \right]=D_{t,0}
    \left[
      \begin{array}{c}
        \hat{a}(0) \\
        \hat{a}^\dag(0) \\
      \end{array}
    \right]
    +D_{t,0}
    \left[
      \begin{array}{c}
        \chi(t) \\
        \bar{\chi}(t) \\
      \end{array}
    \right],
\end{equation}
where $D_{t,0}$ and $\chi(t)$ are defined in the formulation of the theorem \ref{evol_symb_thm}.

Notice that
\begin{equation}\label{vacuum-vacuum}
    \tilde{U}_{t,0}(0,0)|_{f_\al=\bar{f}_\al=0}=\big[\det \bar{R}_{0,t}\bar{\Phi}(t)\big]^{-\epsilon/2} e^{-i\int_0^t d\tau d(\tau)}.
\end{equation}
This expression determines the vacuum-to-vacuum amplitude \eqref{vacuum-vacuum_expl} in the absence of sources and, in fact, gives the one-loop effective action of the theory \cite{DeWGAQFT.11,Schwing.det,Schwing.10,BirDav.11,GrMuRaf,FullingAQFT,GFSh.3,BuchOdinShap.11,GriMaMos.11,DeWGAQFT.11,CalzHu,ParkTom,WeinB2}. The function $c(t)$ in the presence of sources is the unnormalized generating functional of free Green's functions, i.e., the Green's functions of quadratic theory on the given classical background. The expression \eqref{vacuum-vacuum} can be rewritten in other form under the additional assumption that $C(\tau)$ in \eqref{V_oper} is trace-class and $\|C(\tau)\|_1$ is locally integrable for $\tau\in[0,t]$. Then using the nonstationary perturbation theory as in the proof of the theorem \ref{evol_symb_thm}, it is not difficult to show that $R_{t,0}-1$ is trace-class and continuously depends on $C(\tau)$, $\tau\in[0,t]$, with respect to the norm $\|\cdot\|_1$. If $C(\tau)$ is a finite-rank operator for $\tau\in[0,t]$, then the Liouville theorem holds
\begin{equation}\label{Liouv_thm}
    \det R_{t,0}=\exp\Big\{-i\int_0^t d\tau\Sp C(\tau)\Big\}.
\end{equation}
The Fredholm determinant $\det(1+X)$ is a continuous function of $X$ with respect to the norm $\|\cdot\|_1$ (see, e.g., \cite{GohbGoldKrup}). The algebra of finite rank operators is a dense subset in the trace-class operators with respect to $\|\cdot\|_1$. Consequently, taking the limit in \eqref{Liouv_thm}, we see that \eqref{Liouv_thm} is valid when $C(\tau)$ is trace-class and $\|C(\tau)\|_1$ is locally integrable for $\tau\in[0,t]$. In this case, under the assumptions of the theorem \ref{evol_symb_thm}, we have
\begin{equation}\label{vacuum-vacuum_gen}
    \tilde{U}_{t,0}(0,0)|_{f_\al=\bar{f}_\al=0}=\big[\det \bar{\Phi}(t)\big]^{-\epsilon/2} e^{i\int_0^td\tau\big[\frac{\epsilon}{2}\Sp C(\tau) - d(\tau)\big]}.
\end{equation}
We shall simplify this formula further in considering the concrete models in Sec. \ref{Gener_Form}.


\begin{thebibliography}{999}

\bibitem{Friedrichs}
K.~O. Friedrichs,
Mathematical Aspects of the Quantum Theory of Fields
(Interscience Publishers, New York, 1953).

\bibitem{BerezMSQ1.4}
F.~A. Berezin,
Method of Second Quantization
(Academic Press, New York, 1966).

\bibitem{BirDav.11}
N.~D. Birrel, P.~C.~W. Davies,
Quantum Fields in Curved Space
(Cambridge University Press, Cambridge, 1982).

\bibitem{GrMuRaf}
W. Greiner, B. M\"{u}ller, J. Rafelski,
Quantum Electrodynamics of Strong Fields
(Springer, Heidelberg, 1985).

\bibitem{FullingAQFT}
S.~A. Fulling,
Aspects of Quantum Field Theory in Curved Space-Time
(Cambridge University Press, New York, 1989).

\bibitem{GFSh.3}
E.~S. Fradkin, D.~M. Gitman, Sh.~M. Shvartsman,
Quantum Electrodynamics with Unstable Vacuum
(Springer, Berlin, 1991).

\bibitem{BuchOdinShap.11}
I.~L. Buchbinder, S.~D. Odintsov, I.~L. Shapiro,
Effective Action in Quantum Gravity
(IOP, Bristol, 1992).

\bibitem{GriMaMos.11}
A.~A. Grib, S.~G. Mamayev, V.~M. Mostepanenko,
Vacuum Quantum Effects in Strong Fields
(Friedmann Lab. Publ., St. Petersburg, 1994).

\bibitem{MaslShved}
V.~P. Maslov, O.~Yu. Shvedov,
Complex Germ Method in Many-Particle Problems and Quantum Field Theory
(Editorial URSS, Moscow, 2000) [in Russian].

\bibitem{DeWGAQFT.11}
B.~S. DeWitt,
The Global Approach to Quantum Field Theory, Vol. 1,2
(Clarendon Press, Oxford, 2003).

\bibitem{CalzHu}
E.~A. Calzetta, B.~L. Hu,
Nonequilibrium Quntum Field Theory
(Cambridge University Press, New York, 2008).

\bibitem{ParkTom}
L. Parker, D.~J. Toms,
Quantum Field Theory in Curved Spacetime
(Cambridge University Press, Cambridge, 2009).

\bibitem{Bogol47}
N. Bogolubov,
On the theory of superfluidity,
J. Phys. (USSR) \textbf{11}, 23 (1947).

\bibitem{Ruijsen}
S.~N.~M. Ruijsenaars,
Charged particles in external fields. I. Classical theory,
J. Math. Phys. \textbf{18}, 720 (1977).

\bibitem{NencSchr}
G. Nenciu, G. Scharf,
On regular external fields in quantum electrodynamics,
Helv. Phys. Acta \textbf{51}, 412 (1978).

\bibitem{Scharf79}
H. Fierz, G. Scharf,
Particle interpretation for external field problems in QED,
Helv. Phys. Acta \textbf{52}, 437 (1979).

\bibitem{Junker}
W. Junker,
Hadamard states, adiabatic vacua and the construction of physical states for scalar quantum fields on curved spacetime,
Rev. Math. Phys. \textbf{08}, 1091 (1996).

\bibitem{Szpak15}
N. Szpak,
Spontaneous particle creation in time-dependent overcritical fields in QED,
PhD thesis, Frankfurt, J.~W. Goethe University Frankfurt am Main, 2005.

\bibitem{DeckDMSch}
D.-A. Deckert, D. D\"{u}rr, F. Merkl, M. Schottenloher,
Time evolution of the external field problem in QED,
J. Math. Phys. \textbf{51}, 122301 (2010).

\bibitem{uqem}
P.~O. Kazinski,
Ultraviolet asymptotics of particle creation with respect to a congruence of observers,
arXiv:1808.04061.



\bibitem{Imamur}
T. Imamura,
Quantized meson field in a classical gravitational field,
Phys. Rev. \textbf{118}, 1430 (1960).

\bibitem{Shirok0}
M.~I. Shirokov,
Quantum theory of scalar field with nonstatic source,
Yad. Fiz. \textbf{6}, 1277 (1967) [in Russian].

\bibitem{Shirok}
M.~I. Shirokov,
Quantum electrodynamics with the external current and potential,
Yad. Fiz. \textbf{7}, 672 (1968) [in Russian].

\bibitem{Park1}
L. Parker,
Quantized fields and particle creation in expanding universes. I,
Phys. Rev. \textbf{183}, 1057 (1969).

\bibitem{GrMa}
A.~A. Grib, S.~G. Mamaev,
On field theory in the Friedman space,
Yad. Fiz. \textbf{10}, 1276 (1969) [in Russian].

\bibitem{KayWald}
B.~S. Kay, R.~M. Wald,
Theorems on the uniqueness and thermal properties of stationary, nonsingular, quasifree states on spacetimes with a bifurcate Killing horizon,
Phys. Rep. \textbf{207}, 49 (1991).

\bibitem{Parkrev}
L. Parker,
Particle creation and particle number in an expanding universe,
J. Phys. A \textbf{45}, 374023 (2012).

\bibitem{GriPav}
A.~A. Grib, Yu.~V. Pavlov,
Particle creation in the early Universe: achievements and problems,
Gravit. Cosmol. \textbf{22}, 107 (2016).

\bibitem{AHHLH}
F.~J. Agocs, L.~T. Hergt, W.~J. Handley, A.~N. Lasenby, M.~P. Hobson,
Investigating the gauge invariance of quantum initial conditions for inflation,
arXiv:2002.07042.

\bibitem{GRKTMM19}
D. Gu\'{e}ry-Odelin, A. Ruschhaupt, A. Kiely, E. Torrontegui, S. Mart\'{i}nez-Garaot, J.~G. Muga,
Shortcuts to adiabaticity: Concepts, methods, and applications,
Rev. Mod. Phys. \textbf{91}, 045001 (2019).


\bibitem{MigdB}
A.~B. Migdal,
Fermions and Bosons in a Strong Field
(Nauka, Moscow, 1978) [in Russian].


\bibitem{Schwing.10}
J. Schwinger,
On gauge invariance and vacuum polarization,
Phys. Rev. \textbf{82}, 664 (1951).

\bibitem{Schwing.det}
J. Schwinger,
The theory of quantized fields. V,
Phys. Rev. \textbf{93}, 615 (1954).

\bibitem{WeinB2}
S. Weinberg,
The Quantum Theory of Fields, Vol. 2: Modern Applications
(Cambridge University Press, Cambridge, 1996).


\bibitem{HeisEul}
W. Heisenberg, H. Euler,
Folgerungen aus der Diracschen theorie des positrons,
Z. Phys. \textbf{98}, 718 (1936).

\bibitem{LandLifshQM.11}
L.~D. Landau, E.~M. Lifshitz,
Quantum Mechanics. Non-relativistic Theory
(Pergamon, Oxford, 1991).

\bibitem{Kato}
T. Kato,
On the adiabatic theorem of quantum mechanics,
J. Phys. Soc. Japan. \textbf{5}, 435 (1950).

\bibitem{JoyeThes}
A. Joye,
Geometrical and mathematical aspects of the adiabatic theorem of quantum mechanics,
PhD thesis, Lausanne, EPFL, 1992.

\bibitem{Nenciu}
G. Nenciu,
Linear adiabatic theory. Exponential estimates,
Commun. Math. Phys. \textbf{152}, 479 (1993).

\bibitem{AveElg}
J.~E. Avron, A. Elgart,
Adiabatic theorem without a gap condition,
Commun. Math. Phys. \textbf{203}, 445 (1999).

\bibitem{ElgHag}
A. Elgart, G.~A. Hagedorn,
A note on the switching adiabatic theorem,
J. Math. Phys. \textbf{53}, 102202 (2012).

\bibitem{Heitl}
W. Heitler,
The Quantum Theory of Radiation
(Clarendon Press, Oxford, 1954).

\bibitem{GinzbThPhAstr}
V.~L. Ginzburg,
Theoretical Physics and Astrophysics
(Pergamon, London, 1979).


\bibitem{HeTe}
M. Henneaux, C. Teitelboim,
Quantization of Gauge Systems
(Princeton University Press, Princeton, New Jersey, 1992).

\bibitem{BlNord37}
F. Bloch, A. Nordsieck,
Note on the radiation field of the electron,
Phys. Rev. \textbf{52}, 54 (1937).

\bibitem{Glaub51}
R.~J. Glauber,
Some notes on multiple-boson process,
Phys. Rev. \textbf{84}, 395 (1951).

\bibitem{SchwinS}
J. Schwinger,
Particles, Sources, and Fields
(Addison-Wesley Publishing Company, Reading, 1970).

\bibitem{JauRohr}
J.~M. Jauch, F. Rohrlich,
The Theory of Photons and Electrons
(Springer, Berlin, 1976).

\bibitem{BaiKatFad}
V.~N. Baier, V.~M. Katkov, V.~S. Fadin,
Radiation from Relativistic Electrons
(Atomizdat, Moscow, 1973) [in Russian].



\bibitem{GavrGit90}
S.~P. Gavrilov, D.~M. Gitman,
Interpretation of an external field and external current in quantum electrodynamics,
Yad. Fiz. \textbf{51}, 1644 (1990)
[Sov. J. Nucl. Phys. \textbf{51}, 1040 (1990)].

\bibitem{Nikish70}
A.~I. Nikishov,
Pair production by a constant external field,
Zh. Eksp. Teor. Fiz. \textbf{57}, 1210 (1969)
[Sov. Phys. JETP \textbf{30}, 660 (1970)].

\bibitem{Ritus.2}
V.~I.  Ritus,
Quantum effects of the interaction of elementary particles with an intense electromagnetic field,
J. Sov. Laser Res. \textbf{6}, 497 (1985).

\bibitem{RuffVerXue}
R. Ruffini, G. Vereshchagin, S.-S. Xue,
Electron-positron pairs in physics and astrophysics: From heavy nuclei to black holes,
Phys. Rep. \textbf{487}, 1 (2010).

\bibitem{FGKS11}
A. M. Fedotov, E.~G. Gelfer, K.~Yu. Korolev, S.~A. Smolyansky,
Kinetic equation approach to pair production by a time-dependent electric field,
Phys. Rev. D \textbf{83}, 025011 (2011).

\bibitem{GelTan}
F. Gelis, N. Tanji,
Schwinger mechanism revisited,
Prog. Part. Nucl. Phys. \textbf{87}, 1 (2016).


\bibitem{KazMil1}
P.~O. Kazinski, V.~D. Miller,
Large mass expansion of the one-loop effective action induced by a scalar field on the two-dimensional Minkowski background with non-trivial $(1+1)$ splitting,
arXiv:1601.02486.

\bibitem{Shubin}
M.~A. Shubin,
Pseudodifferential Operators and Spectral Theory
(Springer, Berlin, 2001).


\bibitem{Shale}
D. Shale,
Linear symmetries of free Boson fields,
Trans. Am. Math. Soc. \textbf{103}, 149 (1962).

\bibitem{Avramid19}
I.~G. Avramidi,
Bogolyubov invariant via relative spectral invariants on manifolds,
arXiv:1909.09680.

\bibitem{vHove52}
L. van Hove,
Les difficult\'{e}s de divergences pour un mod\`{e}le particulier de champ quantifi\'{e},
Physica \textbf{18}, 145 (1952).

\bibitem{LandLifQED}
V.~B. Berestetskii, E.~M. Lifshitz, L.~P. Pitaevskii,
Quantum Electrodynamics
(Butterworth-Heinemann, Oxford, 1982).

\bibitem{KalKaz3}
I. Kalinichenko, P. Kazinski,
High-temperature expansion of the one-loop effective action induced by scalar and Dirac particles,
Eur. Phys. J. C \textbf{77}, 880 (2017).



\bibitem{MartSchw}
P. Martin, J. Schwinger,
Theory of many-particle systems,
Phys. Rev. \textbf{115}, 1342 (1959).

\bibitem{Keld64}
L.~V. Keldysh,
Diagram technique for nonequilibrium processes,
Zh. Eksp. Teor. Fiz. \textbf{47}, 1515 (1964)
[Sov. Phys. JETP \textbf{20}, 1018 (1965)].

\bibitem{GohbGoldKrup}
I. Gohberg, S. Goldberg, N. Krupnik,
Traces and Determinants of Linear Operators
(Springer, Berlin, 2001).

\bibitem{BKL5}
O.~V. Bogdanov, P.~O. Kazinski, G.~Yu. Lazarenko,
Probability of radiation of twisted photons in the isotropic dispersive medium,
Phys. Rev. A \textbf{100}, 043836 (2019).

\bibitem{WeinbergB.12}
S. Weinberg,
The Quantum Theory of Fields, Vol. 1: Foundations
(Cambridge University Press, Cambridge, 1996).

\bibitem{LandLifshCTF}
L.~D. Landau, E.~M. Lifshitz,
The Classical Theory of Fields
(Butterworth-Heinemann, San Francisco, 1994).

\bibitem{LandLifshECM}
L.~D. Landau, E.~M. Lifshitz,
Electrodynamics of Continuous Media
(Pergamon, Oxford, 1984).

\bibitem{WeinbIR}
S. Weinberg,
Infrared photons and gravitons,
Phys. Rev. \textbf{140}, B 516 (1965).

\bibitem{KazMult}
P.~O. Kazinski,
Radiation reaction of multipole moments,
Zh. Eksp. Teor. Fiz. \textbf{132}, 370 (2007)
[JETP \textbf{105}, 327 (2007)].


\bibitem{NeretinB}
Yu.~A. Neretin,
Categories of Symmetries and Infinite-Dimensional Groups
(Clarendon Press, Oxford, 1996).

\bibitem{BrunDerez}
L. Bruneau, J. Derezinski,
Bogoliubov Hamiltonians and one-parameter groups of Bogoliubov transformations,
J. Math. Phys. \textbf{48}, 022101 (2007).

\bibitem{Scharf86}
G. Scharf, W.~F. Wreszinski,
The causal phase in quantum electrodynamics,
Nuovo Cimento A \textbf{93}, 1 (1986).

\bibitem{Bargm61}
V. Bargmann,
On a Hilbert space of analytic functions and an associated integral transform. Part I,
Commun. Pure Appl. Math. \textbf{14}, 187 (1961).

\bibitem{RSv1}
M. Reed, B. Simon,
Methods of Modern Mathematical Physics. Vol. 1: Functional Analysis
(Academic Press, New York, 1972).




\end{thebibliography}
\end{document}